\tikzset{>=latex}
\newcommand{\WKB}{\K_\omega = (\langle \T,\A \rangle, \omega)}
\newcommand{\wkb}{\K_\omega}
\newcommand{\KB}{\K = \langle \T,\A \rangle}
\newcommand{\A}{\mathcal{A}}
\newcommand{\I}{\mathcal{I}}
\newcommand{\J}{\mathcal{J}}
\newcommand{\T}{\mathcal{T}}
\newcommand{\K}{\mathcal{K}}
\newcommand{\Amc}{\A}
\newcommand{\Tmc}{\T}
\newcommand{\Kmc}{\K}
\newcommand{\Imc}{\I}
\newcommand{\Tinf}{\ensuremath{\T_\infty}}
\newcommand{\Ainf}{\ensuremath{\A_\infty}}
\newcommand{\Kinf}{\ensuremath{\K_\infty}}
\newcommand{\NC}{\ensuremath{\mathsf{N_{C}}\xspace}}
\newcommand{\NI}{\ensuremath{\mathsf{N_{I}}\xspace}}
\newcommand{\NR}{\ensuremath{\mathsf{N_{R}}\xspace}}
\newcommand{\cost}[2]{\mi{cost}_{#1}(#2)}
\newcommand{\optc}[1]{\mi{optc}(#1)}
\newcommand{\vio}[2]{vio_{#1}(#2)}
\newcommand{\sat}[2]{\models_{#1}^{#2}}
\newcommand{\dec}[3]{#1QA$^{#2}_{#3}$\xspace}
\newcommand{\BCS}{BCS\xspace}
\newcommand{\ALCO}{\ensuremath{\mathcal{ALCO}}}
\newcommand{\ALCOQu}{\ensuremath{\mathcal{ALCOQ}u}}
\newcommand{\ELbot}{\ensuremath{\mathcal{EL}_\bot}}
\newcommand{\ZOQ}{\ensuremath{\mathcal{ZOQ}}}
\newcommand{\ZOIQ}{\ensuremath{\mathcal{ZOIQ}}}
\def\np{\textsc{NP}\xspace}
\def\conp{co\textsc{NP}\xspace}
\def\deltaptwo{\ensuremath{\Delta^{p}_{2}}\xspace}
\def\deltaptwolog{\ensuremath{\Theta^{p}_{2}}\xspace}
\def\exptime{\textsc{ExpTime}\xspace}
\def\twoexptime{\textsc{2ExpTime}\xspace}
\newcommand{\bsem}[1]{$k$-cost-bounded #1 semantics}
\newcommand{\optsem}[1]{opt-cost #1 semantics}
\newcommand{\mn}[1]{\ensuremath{\mathsf{#1}}}
\newcommand{\mi}[1]{\ensuremath{\mathit{#1}}}
\newcommand{\eg}{e.g.,~}
\newcommand{\ie}{i.e.,~}
\newcommand{\wrt}{w.r.t.~}
\newcommand{\cf}{cf.~}
\newcommand{\wlg}{w.l.o.g.~}
\newcommand{\resp}{resp.~}
\newtheorem{remark}{Remark}
\newtheorem{example}{Example}
\newtheorem{theorem}{Theorem}
\newtheorem{definition}{Definition}
\newtheorem{lemma}{Lemma}
\newtheorem{proposition}{Proposition}
\newcommand{\ind}[1]{#1}
\newcommand{\roleassertion}[3]{\rolestyle{#1}(\ind{#2}, \ind{#3})}
\newcommand{\Knew}{\K^\dagger}
\newcommand{\Anew}{\A^\dagger}
\newcommand{\Tnew}{\T^\dagger}
\newcommand{\signature}{\mathsf{sig}}	
\newcommand{\signatureof}[1][\tbox]{\signature(#1)}
\newcommand{\axiom}[2]{\rolestyle{#1} \sqsubseteq \rolestyle{#2}}
\newcommand{\axand}{\ensuremath{\axiom{A_1 \sqcap A_2}{A}}}
\newcommand{\axexistsleft}{\ensuremath{\axiom{\existsrole{R.B}}{A}}}
\newcommand{\axexistsright}{\ensuremath{\axiom{A}{\existsrole{R.B}}}}
\newcommand{\axnotleft}{\ensuremath{\axiom{\lnot B}{A}}}
\newcommand{\axnotright}{\ensuremath{\axiom{A}{\lnot B}}}
\newcommand{\unfolding}[1][\kb]{\circ}
\newcommand{\unfoldingof}[2][\kb]{\unfolding[#1]}
\newcommand{\deltastar}{\Delta^*}
\newcommand{\domain}[1]{\Delta^{#1}} 
\newcommand{\rolestyle}[1]{\mathrm{#1}}
\newcommand{\existsrole}[1]{\exists \rolestyle{#1}}
\newcommand{\successor}[1][\kb]{\mathsf{succ}^{#1}}
\newcommand{\successorof}[3][\kb]{\successor[#1]_\rolestyle{#3}(#2)}
\newcommand{\individuals}{\mathsf{Ind}} 
\newcommand{\inds}{\individuals}
\newcommand{\indsof}[1]{\inds(#1)}
\newcommand{\exexof}[1]{\exex}
\newcommand{\exex}{\Delta^{\circ}}
\newcommand{\interlace}{f^*}
\newcommand{\interlaceof}[1]{\interlace(#1)}
\newcommand{\interlacing}{\I'}
\newcommand{\interlacingof}[1]{{#1}'}
\newcommand{\interlacingofstar}[1]{{#1}^*}
\newcommand{\interleavingof}[1]{\interlacing}
\newcommand{\exextomod}{f}
\newcommand{\exextomodof}[1]{\exextomod(#1)}
\newcommand{\intertomod}{\sigma}
\title{Cost-Based Semantics for Querying Inconsistent Weighted Knowledge Bases}
\author{%
Meghyn Bienvenu$^{1,2}$\and
Camille Bourgaux$^3$\and
Robin Jean$^1$ \\
\affiliations
$^1$Universit\'{e} de Bordeaux, CNRS, Bordeaux INP, LaBRI, UMR 5800, Talence, France\\
$^2$Japanese-French Laboratory for Informatics, CNRS, NII, IRL 2537, Tokyo, Japan\\ 
$^3$DI ENS, ENS, CNRS, PSL University \& Inria, Paris, France\\
\emails
\{meghyn.bienvenu,robin.jean\}@u-bordeaux.fr,
camille.bourgaux@ens.fr
}
\begin{document}

\maketitle

\begin{abstract}
In this paper, we explore a quantitative approach to querying inconsistent description logic knowledge bases. We consider weighted knowledge bases in which both axioms and assertions have (possibly infinite) weights, which are used to assign a cost to each interpretation based upon the axioms and assertions it violates. Two notions of certain and possible answer are defined by either considering interpretations whose cost does not exceed a given bound or restricting attention to optimal-cost interpretations. Our main contribution is a comprehensive analysis of the combined and data complexity of bounded cost satisfiability and certain and possible answer recognition, for description logics between $\mathcal{EL}_\bot$ and $\mathcal{ALCO}$. 
\end{abstract}


\section{Introduction}
Ontology-mediated query answering (OMQA) is a 
framework for improving data access through the use of an ontology,
which has been extensively studied by the KR and database communities 
\cite{DBLP:journals/jods/PoggiLCGLR08,DBLP:conf/rweb/BienvenuO15,DBLP:conf/ijcai/XiaoCKLPRZ18}. 
Much of the work on OMQA considers ontologies formulated in 
description logics (DLs) \cite{Baader_Horrocks_Lutz_Sattler_2017}. In the DL setting, OMQA consists in finding 
the answers that are logically entailed from the knowledge base (KB), 
consisting of the ABox (data) and TBox (ontology). Due to the use of classical 
first-order semantics, whereby everything is entailed from a contradiction, 
classical OMQA semantics fails to provide informative answers
when the KB is inconsistent. 

The issue of handling inconsistencies, or more generally unwanted consequences,
in DL KBs has been explored from many angles. One solution is to modify the KB 
in order to render it consistent, and there has been significant research on how to 
aid users in the debugging process, e.g.\ by generating 
justifications that pinpoint the sources of the inconsistency
\cite{DBLP:conf/www/ParsiaSK05,DBLP:series/ssw/Penaloza20}.  
This line of work mostly focuses on helping knowledge engineers to debug the 
TBox before deployment in applications, but 
some recent work specifically target ABoxes \cite{DBLP:conf/kr/BaaderK22}. 
However, in an OMQA setting, where
the ABox can be very large and subject to frequent updates,
it is unrealistic to assume that we can always restore consistency.
This has motivated a substantial line of research on 
inconsistency-tolerant semantics to obtain
meaningful answers from inconsistent KBs, surveyed in \cite{biebou,DBLP:journals/ki/Bienvenu20}. 
Many of these semantics are based upon repairs, defined 
as inclusion-maximal subsets of the ABox that are consistent w.r.t.\ the TBox. 
Two of the most commonly considered repair semantics are the AR semantics \cite{Lele}, 
which asks for those answers that hold in every repair, and the brave semantics \cite{DBLP:conf/ijcai/BienvenuR13},
which considers those answers that hold in at least one repair. 
Note that the work on repair-based semantics typically assumes 
that the TBox is reliable, 
which is why repairs are subsets of the ABox, with the TBox left untouched. 
A notable exception is the work of \citeauthor{eit} \shortcite{eit}, which considers generalized notions of repair for existential rule ontologies composed of hard and soft rules,
in which contradictions may be resolved by removing or minimally violating soft rules.

In this paper, we explore a novel quantitative approach to querying 
inconsistent description logic KBs, which combines the idea of soft ontology axioms
from \cite{eit} with a recent cost-based approach to repairing 
databases w.r.t. soft constraints \cite{DBLP:conf/icdt/CarmeliGKLT21}. 
The idea is to associate with every TBox 
axiom and ABox assertion a (possibly infinite) weight. 
`Hard' axioms and assertions, which 
must be satisfied, are assigned a weight of $\infty$, and 
the remaining `soft' axioms and assertions
are assigned weights based upon their reliability, with 
higher weights indicating greater trust. 
The cost of an interpretation is defined by 
taking into account the number of violations of an axiom (assertions can be violated at most once)
and the weights of the violated axioms and assertions. 
When determining the query answers, we shall use
the cost to select a set of interpretations, 
either by considering all interpretations whose cost is below a given threshold, or considering only 
those interpretations having an optimal (i.e. minimum) cost.  
We shall then consider both the certain answers, 
which hold in all of the selected interpretations, 
and the possible answers, which hold w.r.t.\ at least one selected interpretation. 
When restricted to consistent KBs, the optimal-cost certain and possible semantics coincide 
with the classical certain and possible answer semantics, cf.\ \cite{DBLP:conf/aaai/AndolfiCCL24}. 
By varying the cost bounds, we can identify answers that are robust (i.e.\ hold not 
only in all optimal-cost interpretations but also `close-to-optimal' ones) or rank 
candidate answers based upon their incompatibility with the KB.

We perform a comprehensive analysis of the 
complexity of the main 
decision problems in our setting, namely, bounded-cost satisfiability of weighted KBs and 
recognition of certain and possible answers w.r.t.\ the 
set of $k$-cost-bounded or optimal-cost interpretations.  
Our study covers lightweight and expressive description logics,
ranging from $\mathcal{EL}_\bot$ to $\mathcal{ALCO}$, and 
queries given either as instance (IQs) or conjunctive queries (CQs). 
We consider both the combined and data complexity measures, 
as well as the impact of unary and binary encodings of the cost bound 
 and weights. 
Our results are summarized in Table \ref{tab:combinedResults}. 
For combined complexity, most problems are \exptime-complete, 
except for those involving certain answers to CQs, which are \twoexptime-complete. 
For data complexity, we identify problems which are (co)\np-complete,  ${\Theta^{p}_{2}}$-complete, 
and ${\deltaptwo}$-complete\footnote{
$\deltaptwo$ is the class of decision problems solvable in polynomial time with access to an NP oracle, and $\Theta^{p}_{2}$ (aka $\Delta^p_2[log\ n]$) the subclass allowing only logarithmically many NP oracle calls.}, depending on the encoding and maximal value of the weights.  

The paper is organized as follows. Following the preliminaries in Section 2, 
we introduce in Section 3
our formal framework and the associated decision problems. 
Sections 4 and 5 present respectively our combined and data complexity results.
We discuss related work in Section 6 and conclude in Section 7 with some directions for future work. 
Omitted proofs are provided in the appendix.


\section{Preliminaries}

We briefly recall the syntax and semantics of DL. 

\subsubsection*{Syntax} A DL \emph{knowledge base} (KB) $\KB$ consists of an ABox $\A$ and a TBox $\T$, both of which are constructed from three mutually disjoint countable sets $\NC$ of \emph{concept names} (unary predicates), $\NR$ of \emph{role names} (binary predicates), and $\NI$ of \emph{individual names} (constants). The \emph{ABox} 
is a finite set 
of \emph{concept assertions} of the form $A(a)$ with $A \in \NC, a \in \NI$ and \emph{role assertions} of the form $R(a, b)$ with $R \in \NR, a, b \in \NI$. The \emph{TBox} 
is a finite set of axioms 
whose form depends on the DL in question. 
In $\ALCOQu$ TBox axioms are \emph{concept inclusions} $C \sqsubseteq D$ where $C$ and $D$ are complex concepts formed using the following syntax:
\begin{align*}
C := &A \mid \{ a \} \mid \top \mid \bot \mid C \sqcap C \mid C \sqcup C \mid \neg C 
\\ &\mid \exists R.C \mid \forall R.C \mid \leq n R.C \mid \geq n R.C
\end{align*}
where $A \in \NC$, $a \in \NI$, $R \in \NR \cup \{U\}$, with $U$ 
the special \emph{universal role}.\footnote{Usually the universal role cannot occur in qualified number restrictions ($\leq n R.C$ or $\geq n R.C$) but nominals allow us to simulate such number restrictions as explained in \cite{orsi}.} 

The DL $\mathcal{ALCO}$ is the restriction of $\ALCOQu$ disallowing the use of qualified number restrictions ($\leq n R.C$ or $\geq n R.C$) and of the universal role $U$. 
The DL $\mathcal{EL}_\bot$ further disallows 
the use of universal restrictions ($\forall R.C$), negations ($\neg C$), unions ($C \sqcup C$) and nominals ($\{ a \}$). 

We denote by $\inds(\A)$ (\resp $\inds(\K)$) the set of individuals that occur in $\Amc$ (\resp in $\Kmc$), and by $\signatureof$ (\resp $\signatureof[\K]$) the set of concept and role names that occur in  $\Tmc$ (\resp in $\K$).

\subsubsection*{Semantics} An \emph{interpretation} has the form $\I=(\Delta^{\I},{.}^{\I})$, where the \emph{domain} $\Delta^{\I}$ is a non-empty set and ${.}^{\I}$ maps each $a \in \NI$ to $a^{\I} \in \Delta^{\I}$, each $A \in \NC$ to $A^{\I} \subseteq \Delta^{\I}$, each $R \in \NR$ to $R^{\I} \subseteq \Delta^{\I} \times \Delta^{\I}$ and interprets the universal role $U$ by $U^{\I} = \Delta^{\I} \times \Delta^{\I}$.  The function  ${.}^{\I}$ is extended to general concepts, \eg $(\exists R.D)^{\I}=\{c \mid \exists d \in D^\I:(c, d) \in R^{\I}\}$; $\{ a \}^\I = \{a^\I \}$; $\top^\I = \Delta^\I$; $\bot^\I = \emptyset$; $(\leq n R.C)^\I = \{ c \mid \#\{ d \in C^\I \mid (c,d) \in R^\I \} \leq n \}$ and $(\geq n R.C)^\I = \{ c \mid \#\{ d \in C^\I \mid (c,d) \in R^\I \} \geq n \}$. 
An interpretation $\I$ satisfies an assertion $A(a)$ (\resp $R(a, b)$) if $a \in A^{\I}$ (\resp $(a, b) \in R^{\I}$);  we thus make a weak version of the \emph{standard names assumption} (SNA).\footnote{
The usual SNA requires that $a^\I=a$ for every $a\in\NI$, hence that $\NI \subseteq \Delta^\I$, so all interpretations have an infinite domain. To be able to bound the size of interpretations, we adopt this `weak' version of the SNA, used \eg by \citeauthor{lutz2023querying} \shortcite{lutz2023querying}.   
} 
$\I$ satisfies an inclusion $C \sqsubseteq D$ if $C^{\I} \subseteq D^{\I}$ and $\{a\}^\I=\{a\}$ for every nominal occurring in $C$ or $D$. 
We write $\I \models \tau$ (resp.\ $\I \models \alpha$) to indicate that $\I$ satisfies an axiom $\tau$ (resp.\ assertion $\alpha$).
An interpretation $\I$ is a \emph{model} of $\KB$, denoted $\I\models\K$, if $\I$ satisfies all inclusions in $\T$ ($\I\models\T$) and all assertions in $\A$ ($\I\models\A$). A KB $\K$ is \emph{consistent} if it has a model.

\subsubsection*{Queries} 
We consider \emph{conjunctive queries} (CQs) which take the form $\exists \vec{y} \psi$, where $\psi$ is a conjunction of atoms of the forms $A(t)$ or $R(t, t')$, where $t, t'$ are variables or individuals, and $\vec{y}$ is a tuple of variables from $\psi$. A CQ is called \emph{Boolean} (BCQ) if all of its variables are existentially quantified; a CQ consisting of a single atom is an \emph{instance query} (IQ). When we use the generic term query, we mean a CQ. For a BCQ \emph{q} and an interpretation $\I$, we denote by $\I \models q$ the fact that $\I$ satisfies \emph{q}. A BCQ \emph{q} is \emph{entailed} from $\K$, written $\K \models q$, if $\I \models q$ for every model $\I$ of $\K$. 
A BCQ $q$ is \emph{satisfiable w.r.t.\ $\K$} if there exists a model $\I$ of $\K$ such that $\I \models q$. 
For a non-Boolean CQ $q[\vec{x}]$ with free variables $\vec{x}=(x_1, \ldots, x_k)$, a tuple of individuals $\vec{a}=(a_1, \ldots, a_k)$ is a \emph{certain answer} for $q[\vec{x}]$ \wrt $\K$ just in the case that $\K \models q[\vec{a}]$, where $q[\vec{a}]$ is the BCQ obtained by replacing each $x_i$ by $a_i$. Tuple $\vec{a}$ is said to be a \emph{possible answer} for $q[\vec{x}]$ \wrt $\K$  if the BCQ $q[\vec{a}]$ is satisfiable w.r.t.\ $\K$. Observe that certain and possible answer recognition corresponds to BCQ entailment and satisfiability respectively. 

To simplify the presentation, we shall focus on BCQs. However, 
all definitions and results are straightforwardly extended to non-Boolean queries, 
and we shall thus sometimes speak of `query answers' when providing intuitions.


\section{Weighted Knowledge Bases} \label{weighted knowledge bases}

We consider a quantitative way of integrating the notion of soft constraints by giving weights to axioms and assertions. Intuitively, these weights represent penalties associated to each violation of the axioms or assertions. They will allow us to assign a cost to interpretations based upon the axioms and assertions they violate, and use this cost to select which interpretations to consider when answering queries. 

\begin{definition}
    A \emph{weighted knowledge base} (WKB) $\WKB$ consists of 
a knowledge base  $\langle \T ,\A\rangle$ 
and a cost function $\omega: \T \cup \A \mapsto \mathbb{N}_{> 0} \cup \{ + \infty \}$. 
     We denote by $\T_\infty$ (\resp $\A_\infty$) the set of TBox axioms (\resp ABox assertions) that have an infinite cost and let $\K_\infty=\langle\T_\infty,\A_\infty\rangle$. We sometimes use   $\omega_\chi$ as a shorthand for $\omega(\chi)$.     
\end{definition}

\begin{example}\label{ex:running}
Consider the following WKB about visa requirements to enter some country $c$: $\WKB$ where $\Tmc=\{\tau_1,\tau_2,\tau_3\}$, $\Amc=\{\alpha_1,\alpha_2\}$ and  
 \begin{align*} 
\tau_1=&\mn{Visa}\sqcap\mn{NoVisa}\sqsubseteq \bot && \omega(\tau_1)=\infty\\
\tau_2=&\exists \mn{hasNat}.\{c\}\sqcap \exists \mn{hasNat}.\{b\}\sqsubseteq \bot && \omega(\tau_2)=\infty\\
\tau_3=&\forall \mn{hasNat}.\neg\{c\}\sqsubseteq \mn{Visa}&& \omega(\tau_3)=1\\
\alpha_1=&\mn{hasNat}(p,b)&& \omega(\alpha_1)=1\\  
\alpha_2=&\mn{NoVisa}(p)&& \omega(\alpha_2)=2
 \end{align*}
 Two `absolute' constraints $\tau_1$ and $\tau_2$ express that one cannot both need a visa ($\mn{Visa}$) and not need one ($\mn{NoVisa}$) and that it is not possible to have both nationalities ($\mn{hasNat}$) $c$ and~$b$. A `soft' constraint $\tau_3$ expresses that someone that does not have nationality $c$ normally needs a visa. The ABox states that a person $p$ has nationality $b$ and does not need a visa, and the second assertion is more reliable than the first one.
 \end{example}

To measure how far an interpretation is from being a model of the KB, we rely on the following sets of \emph{violations}. 

\begin{definition}\label{defvio1}
The \emph{set of violations of a concept inclusion $B \sqsubseteq  C$} in an interpretation $\I$ is the set $$\vio{B \sqsubseteq C}{\I} = (B \sqcap \neg C)^{\I}.$$
The \emph{violations of an ABox $\A$} in an interpretation $\I$ are  
$$\vio{\A}{\I} = \{ \alpha \in \A \mid \I \not \models \alpha \}.$$
\end{definition}

These sets of violations can be used to associate a \emph{cost} to interpretations, by taking into account the weights assigned by the WKB to the violated inclusions and assertions.

\begin{definition}
Let $\WKB$ be a WKB. 
The \emph{cost of an interpretation} $\I$ \wrt $\wkb$ is defined by:         
$$\cost{\wkb}{\I}  = \sum_{\tau \in \T} \omega_\tau|\vio{\tau}{\I}| + \sum_{\alpha \in \vio{\A}{\I}}\omega_\alpha.$$
We say that $\wkb$ is \emph{$k$-satisfiable} if there exists an interpretation $\I$ with $\cost{\wkb}{\I}\leq k$ and define  
the \emph{optimal cost} of $\wkb$ as $\optc{\wkb}  = \min_{\I} (\cost{\wkb}{\I})$. 
\end{definition}

\begin{remark}
Note that $\cost{\wkb}{\I}$ will be $\infty$ if any infinite-weight assertion or inclusion is violated in $\I$ and/or if any inclusion has an infinite set of violations in $\I$. 
\end{remark}

\begin{example}[Ex.\ref{ex:running} cont'd]
Consider the following interpretations over domain $\Delta^\I=\{p,b,c\}$ that correspond to different possibilities for $p$'s nationalities and need for a visa.
\begin{itemize}
\item Case $p$ has nationality $b$ and needs a visa: $\mn{hasNat}^{\I_b^v}=\{(p,b)\}$, $\mn{Visa}^{\I_b^v}=\{p\}$ and $\mn{NoVisa}^{\I_b^v}=\emptyset$. 
${\I_b^v}$ violates only $\alpha_2$ so $\cost{\wkb}{{\I_b^v}} = 2$.
\item Case $p$ has nationality $b$ and does not need a visa: $\mn{hasNat}^{\I_b^n}=\{(p,b)\}$, $\mn{Visa}^{\I_b^n}=\emptyset$ and $\mn{NoVisa}^{\I_b^n}=\{p\}$. 
${\I_b^n}$ violates only $\tau_3$ so \mbox{$\cost{\wkb}{{\I_b^n}} = 1$.}
\item Case $p$ has nationality $c$ and needs a visa: $\mn{hasNat}^{\I_c^v}=\{(p,c)\}$, $\mn{Visa}^{\I_c^v}=\{p\}$ and $\mn{NoVisa}^{\I_c^v}=\emptyset$. 
${\I_c^v}$ violates only $\alpha_1$ and $\alpha_2$ so $\cost{\wkb}{{\I_c^v}} = 3$.
\item Case $p$ has nationality $c$ and does not need a visa: $\mn{hasNat}^{\I_c^n}=\{(p,c)\}$, $\mn{Visa}^{\I_c^n}=\emptyset$ and $\mn{NoVisa}^{\I_c^n}=\{p\}$. 
${\I_c^n}$ violates only $\alpha_1$ so $\cost{\wkb}{{\I_c^n}} = 1$.
\item Case $p$ has nationality $b$ and $c$ and does not need a visa: $\mn{hasNat}^{\I_{bc}^n}=\{(p,b), (p,c)\}$, $\mn{Visa}^{\I_{bc}^n}=\emptyset$ and $\mn{NoVisa}^{\I_{bc}^n}=\{p\}$. 
${\I_{bc}^n}$ violates $\tau_2$ so $\cost{\wkb}{{\I_{bc}^n}} = \infty$.
\end{itemize}
Since $\Kmc_\omega$ is inconsistent and the smallest weight is $1$, it follows that $\I_b^n$ and $\I_c^n$ are of optimal cost and $\optc{\wkb}  = 1$.
\end{example}

It is now possible to define variants of the classical certain and possible answers, by considering either only interpretations whose cost does not exceed a given bound, or only optimal-cost interpretations. For simplicity, we state the definitions in terms of BCQ entailment. 

\begin{definition}
Let $q$ be a BCQ and $\WKB$ be a WKB.
We say that $q$ is entailed by $\wkb$ under 
\begin{itemize}
\item \emph{\bsem{certain}}, written $\wkb \sat{c}{k} q$, if $\I \models q$ for every interpretation $\I$ with $\cost{\wkb}{\I} \leq k$;

\item \emph{\bsem{possible}}, written $\wkb \sat{p}{k} q$, if $\I \models q$ for some interpretation $\I$ with $\cost{\wkb}{\I} \leq k$;

\item \emph{\optsem{certain}}, written $\wkb \sat{c}{opt} q$, if $\I \models q$ for every interpretation $\I$ with $\cost{\wkb}{\I} = \optc{\wkb}$;	
	
\item \emph{\optsem{possible}}, written $\wkb \sat{p}{opt} q$, if $\I \models q$ for some interpretation $\I$ with $\cost{\wkb}{\I} = \optc{\wkb}$.
\end{itemize}	
\end{definition}

\begin{example}[Ex.\ref{ex:running} cont'd]
Since $\optc{\wkb}  = 1$, weights of axioms different from $\tau_3$ and $\alpha_1$ are greater than $1$ and $\I^n_b$ and $\I^n_c$ are interpretations of cost $1$ that violate $\tau_3$ and $\alpha_1$ respectively, it follows that interpretations of optimal cost violate exactly one axiom in $\{\tau_3,\alpha_1\}$. In particular, they all satisfy $\alpha_2$, \ie $\wkb \sat{c}{opt} \mn{NoVisa}(p)$. 
Since all interpretations of minimal cost satisfy $\tau_1$, it follows that $\wkb \not\sat{p}{opt} \mn{Visa}(p)$. 
On the other hand, we obtain that $\wkb \sat{p}{opt} \mn{hasNat}(p,b)$ (because of ${\I_{b}^n}$) and $\wkb \sat{p}{opt} \mn{hasNat}(p,c)$ (because of ${\I_{c}^n}$). 
 
 If we now consider interpretations of cost bounded by $2$, we obtain that $\wkb \not\sat{c}{2} \mn{NoVisa}(p)$ and $\wkb \sat{p}{2} \mn{Visa}(p)$ (because of ${\I_{b}^v}$), hence we cannot conclude anymore whether $p$ needs a visa or not using the certain semantics. 
 However, we can still exclude some statements even under possible semantics. For example, $\wkb \not\sat{p}{2} \mn{hasNat}(p,c)\wedge \mn{Visa}(p)$, since this holds only in interpretations of cost at least 3. 
\end{example}

\begin{table*}
\centering
\begin{tabular}{lccccccc}
\toprule
& \BCS  & \dec{I}{b}{p}, \dec{C}{b}{p} & \dec{I}{b}{c} & \dec{C}{b}{c}  & \dec{I}{opt}{p}, \dec{C}{opt}{p} & \dec{I}{opt}{c} &  \dec{C}{opt}{c} \\
\midrule
Combined & \exptime  & \exptime & \exptime & \twoexptime & \exptime & \exptime & \twoexptime    \\ 
\midrule
Data & \np & \np &  \conp & \conp & ${\deltaptwo}^*$ / ${\Theta^{p}_{2}}^\dag$ & ${\deltaptwo}$-hard$^*$ / ${\Theta^{p}_{2}}^\dag$ & ${\deltaptwo}$-hard$^*$ / ${\Theta^{p}_{2}}^\dag$ \\
\bottomrule
\end{tabular}
\caption{Overview of complexity results for description logics between $\mathcal{EL}_\bot$ and $\mathcal{ALCO}$. All bounds are tight except the two `-hard' cases. Lower bounds hold even if the weights (and the input integer in the case of combined complexity) are encoded in unary, except those marked with $^*$. Upper bounds hold even if the weights (and the input integer in the case of combined complexity) are encoded in binary. 
$^\dag$: ${\Theta^{p}_{2}}$-complete 
 if the finite weights on the assertions are either bounded (independently from $|\A|$), or encoded in unary. }
\label{tab:combinedResults}
\end{table*}

When the underlying KB is consistent, the certain and possible optimal-cost semantics 
coincide with classical query entailment and query satisfiability (or classical certain and possible answers in the case of non-Boolean queries): 

\begin{restatable}{proposition}{consistentCase}
Let $\wkb$ be such that $\optc{\wkb}=0$. Then:
\begin{itemize}
\item $\wkb \sat{c}{opt} q$ iff $\K \models q$
\item $\wkb \sat{p}{opt} q$ iff $q$ is satisfiable w.r.t.\ $\K$
\end{itemize}
\end{restatable}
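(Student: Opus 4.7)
The plan is to reduce the statement to a single observation: under the condition $\optc{\wkb}=0$, the set of optimal-cost interpretations coincides exactly with the set of models of $\K$. Once this is established, both biconditionals are immediate from the definitions of $\sat{c}{opt}$ and $\sat{p}{opt}$, together with the definitions of entailment and satisfiability of a BCQ w.r.t.\ $\K$.

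To establish the key equivalence, I would first argue that for any interpretation $\I$, $\cost{\wkb}{\I}=0$ iff $\I\models\K$. The nontrivial direction uses the fact that every weight lies in $\mathbb{N}_{>0}\cup\{+\infty\}$, so in particular every finite weight is at least $1$. Writing $\cost{\wkb}{\I}=\sum_{\tau\in\T}\omega_\tau|\vio{\tau}{\I}|+\sum_{\alpha\in\vio{\A}{\I}}\omega_\alpha$, each summand is a nonnegative integer (or $+\infty$), so the sum is $0$ iff every term is $0$. Since $\omega_\tau\geq 1$ and $\omega_\alpha\geq 1$, this forces $|\vio{\tau}{\I}|=0$ for all $\tau\in\T$ and $\vio{\A}{\I}=\emptyset$, which by Definition~\ref{defvio1} is exactly the assertion that $\I\models\T$ and $\I\models\A$, i.e.\ $\I\models\K$. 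The converse is trivial: if $\I\models\K$, all violation sets are empty, so $\cost{\wkb}{\I}=0$.

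Now assume $\optc{\wkb}=0$. Then there exists an interpretation of cost $0$, so $\K$ is consistent; moreover, the interpretations of cost $\optc{\wkb}=0$ are precisely the models of $\K$ by the previous paragraph. The two biconditionals then unfold directly: $\wkb\sat{c}{opt}q$ says $\I\models q$ for every $\I$ of cost $0$, i.e.\ for every model of $\K$, which by definition is $\K\models q$; and $\wkb\sat{p}{opt}q$ says $\I\models q$ for some $\I$ of cost $0$, i.e.\ for some model of $\K$, which is exactly satisfiability of $q$ w.r.t.\ $\K$.

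There is no real obstacle here; the only subtlety worth flagging is the reliance on the fact that weights are strictly positive (in $\mathbb{N}_{>0}$), since allowing zero weights would break the direction $\cost{\wkb}{\I}=0 \Rightarrow \I\models\K$. The extension to non-Boolean queries $q[\vec{x}]$ is immediate by the usual reduction to Boolean entailment/satisfiability of $q[\vec{a}]$ for tuples of individuals $\vec{a}$.
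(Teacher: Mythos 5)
Your proposal is correct and follows essentially the same route as the paper: both establish that $\cost{\wkb}{\I}=0$ iff $\I\models\K$ (using the strict positivity of the weights) and then read off both biconditionals from the definitions of the opt-cost semantics. No issues to flag.
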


It is also interesting to consider how the $k$-cost-bounded semantics vary with different values of $k$:

\begin{restatable}{proposition}{kVariesProp} \label{k-VariesProp}
Consider a WKB $\wkb$, BCQ $q$, and $k \geq 0$.  
\begin{itemize}
\item If $\wkb \sat{c}{k} q$, then $\wkb \sat{c}{k'} q$ for every $0 \leq k' \leq k$
\item If $\wkb \not \sat{p}{k} q$, then $\wkb \not \sat{p}{k'} q$ for every $0 \leq k' \leq k$
\end{itemize}
Moreover, $\wkb \not \sat{p}{k} q$ and $\wkb \sat{c}{k} q$ if $k < \optc{\wkb}$. 
\end{restatable}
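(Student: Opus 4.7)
The plan is to observe that the $k$-cost-bounded semantics are defined by quantifying over the set $S_k = \{\I \mid \cost{\wkb}{\I} \leq k\}$ of interpretations whose cost is at most $k$, and that $S_{k'} \subseteq S_k$ whenever $k' \leq k$. All three items follow from this simple monotonicity together with the convention that universal (resp.\ existential) quantification over the empty set is vacuously true (resp.\ false).

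For the first item, I would fix $0 \leq k' \leq k$ and assume $\wkb \sat{c}{k} q$, so that $\I \models q$ for every $\I \in S_k$. Since any $\I \in S_{k'}$ satisfies $\cost{\wkb}{\I} \leq k' \leq k$ and hence belongs to $S_k$, it follows that $\I \models q$, yielding $\wkb \sat{c}{k'} q$. The second item is the contrapositive/dual argument: if no $\I \in S_k$ satisfies $q$, then in particular no $\I \in S_{k'} \subseteq S_k$ can satisfy $q$, so $\wkb \not\sat{p}{k'} q$.

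For the third item, I would simply note that if $k < \optc{\wkb}$, then by definition of the optimal cost as the minimum of $\cost{\wkb}{\I}$ over all interpretations, no interpretation $\I$ can have $\cost{\wkb}{\I} \leq k$; that is, $S_k = \emptyset$. Then the condition ``$\I \models q$ for every $\I \in S_k$'' holds vacuously (so $\wkb \sat{c}{k} q$), while the condition ``$\I \models q$ for some $\I \in S_k$'' fails vacuously (so $\wkb \not\sat{p}{k} q$).

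There is no real obstacle here; the entire proposition is a direct consequence of the set inclusion $S_{k'} \subseteq S_k$ for $k' \leq k$ and of the definition of $\optc{\wkb}$. The only point worth being careful about is making the vacuous-quantification step explicit in the third item, since this is what makes certain entailment and non-entailment of possible answers coincide below the optimal cost.
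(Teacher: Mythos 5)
Your proposal is correct and follows essentially the same argument as the paper: both items follow from the inclusion of the cost-$\leq k'$ interpretations among the cost-$\leq k$ ones, and the final claim from the fact that no interpretation has cost at most $k$ when $k < \optc{\wkb}$, making the certain entailment vacuously true and the possible entailment vacuously false. Your explicit framing via the sets $S_k$ and the note on vacuous quantification is a slightly cleaner write-up of the same reasoning.
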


The preceding result shows, unsurprisingly, that $k$-cost-bounded semantics are only informative for $k \geq \optc{\wkb}$.  
Increasing $k$ beyond $\optc{\wkb}$ leads to fewer and fewer queries being entailed under the $k$-cost-bounded certain semantics, which may be useful in identifying query answers that are robust in the sense that they continue to hold even if we consider a larger set of `close-to-optimal' interpretations. By contrast, as $k$ grows, so does the set of entailed queries under  $k$-cost-bounded possible semantics. Being quite permissive, the opt-cost and $k$-cost-bounded possible semantics will entail many queries, and thus are not suitable replacements for standard (certain answer) querying semantics. 
Instead, non-entailment under these semantics can serve to eliminate or rank candidate tuples of individuals (or the corresponding instantiated Boolean queries) based upon how incompatible they are w.r.t. the expressed information.

\subsubsection*{Relationship With Preferred Repair Semantics} We show that \optsem{certain} generalizes the $\leq_\omega$-AR semantics defined by \citeauthor{biebougoa}~\shortcite{biebougoa} for KBs with weighted ABoxes, where $\omega : \A \rightarrow \mathbb{N}_{> 0}$ models the reliability of the assertions while the TBox axioms are considered absolute. 
In this context, $\leq_\omega$-repairs are subsets of the ABox consistent with the TBox and maximal for the preorder defined over ABox subsets by $ \A_1 \leq_\omega \A_2$ if $\sum_{\alpha \in \A_1}\omega_\alpha \leq \sum_{\alpha \in \A_2}\omega_\alpha$. 
A BCQ $q$ is entailed under $\leq_\omega$-AR (resp. $\leq_\omega$-brave) semantics if $\langle \T,\A' \rangle \models q$ for every (\resp some) $\leq_\omega$-repair $\A'$ of $\A$. 

\begin{restatable}{proposition}{RelPreferredAR}
   Let $\WKB$ be a WKB such that $\T$ is satisfiable, $\omega(\tau) = \infty$ for every $\tau \in \T$ and $\omega(\alpha)\neq \infty$ for every $\alpha\in\A$, and let $q$ be a BCQ. 
    $$ \wkb \sat{c}{opt} q \iff  \langle \T, \A \rangle \models_{\leq_\omega\text{-AR}} q$$
\end{restatable}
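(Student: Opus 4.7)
The first step is to observe that under the stated hypotheses every finite-cost interpretation must be a model of $\T$, because violating any TBox axiom (all of infinite weight) forces $\cost{\wkb}{\I} = \infty$. For such an $\I$ the cost simplifies to $\sum_{\alpha \in \vio{\A}{\I}} \omega_\alpha = W - W(\A_\I)$, where $W := \sum_{\alpha \in \A} \omega_\alpha$, $W(\mathcal{B}) := \sum_{\alpha \in \mathcal{B}} \omega_\alpha$ for $\mathcal{B} \subseteq \A$, and $\A_\I := \{\alpha \in \A \mid \I \models \alpha\} = \A \setminus \vio{\A}{\I}$. Since $\T$ is satisfiable, any $\T$-model has cost at most $W < \infty$, so $\optc{\wkb}$ is finite and attained.

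The heart of the argument is a correspondence between optimal-cost interpretations and $\leq_\omega$-repairs, which I plan to establish via two claims. \emph{(i)} If $\cost{\wkb}{\I} = \optc{\wkb}$, then $\A_\I$ is a $\leq_\omega$-repair and $\I$ is a model of $\langle \T, \A_\I\rangle$: indeed $\A_\I$ is consistent (with $\I$ itself as witness), and for any consistent $\A'' \subseteq \A$ with witness $\J \models \langle \T, \A''\rangle$ we have $\cost{\wkb}{\J} \leq W - W(\A'')$, so optimality of $\I$ yields $W(\A'') \leq W(\A_\I)$. \emph{(ii)} Conversely, if $\A'$ is a $\leq_\omega$-repair and $\I \models \langle \T, \A'\rangle$, then $\A_\I = \A'$ and $\cost{\wkb}{\I} = \optc{\wkb}$. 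The inclusion $\A' \subseteq \A_\I$ is immediate from $\I \models \A'$; consistency of $\A_\I$ (witnessed by $\I$) combined with $\leq_\omega$-maximality of $\A'$ gives $W(\A_\I) \leq W(\A')$; since all weights are strictly positive, equality of weights together with set inclusion forces $\A_\I = \A'$, and then $\cost{\wkb}{\I} = W - W(\A')$ is minimal by the inequality chain from (i).

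With this correspondence in hand, the proposition follows by chaining equivalences: $\wkb \sat{c}{opt} q$ holds iff every optimal-cost interpretation satisfies $q$, iff (by (i) and (ii)) every model of $\langle \T, \A'\rangle$ satisfies $q$ for every $\leq_\omega$-repair $\A'$, iff $\langle \T, \A'\rangle \models q$ for every $\leq_\omega$-repair $\A'$, which is precisely $\langle \T, \A\rangle \models_{\leq_\omega\text{-AR}} q$. The main subtlety, which I expect to be the crux of the write-up, lies in claim (ii), namely the argument that a model of $\langle \T, \A'\rangle$ cannot incidentally satisfy any assertion outside $\A'$: such an assertion would enlarge $\A_\I$ to a strictly heavier consistent subset, contradicting the $\leq_\omega$-maximality of $\A'$. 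This is exactly where the hypothesis $\omega_\alpha \in \mathbb{N}_{>0}$ (strict positivity) is essential.
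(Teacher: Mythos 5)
Your proposal is correct and follows essentially the same route as the paper's proof: both establish that the optimal-cost interpretations are exactly the models of $\langle \T, \A'\rangle$ for the $\leq_\omega$-repairs $\A'$, and then chain the equivalences. Your claim (ii) is in fact slightly more careful than the paper's write-up, which asserts $\cost{\wkb}{\I}= \sum_{\alpha \in \A\setminus\A'} \omega_\alpha$ for a model $\I$ of a repair without explicitly ruling out that $\I$ satisfies additional assertions outside $\A'$ — the point you correctly pin on the strict positivity of the weights.
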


\begin{proof}[Proof sketch]
Since $\T$ is satisfiable, there is a model $\I$ of $\T$, and since $\omega(\alpha)\neq \infty$ for every $\alpha\in\A$, $\cost{\wkb}{\I}\neq \infty$. It follows that $\optc{\wkb} \neq \infty$ and that every $\I$ of optimal cost is such that $\I\models \T$ and $\cost{\wkb}{\I}= \sum_{\alpha \in vio_\A(\I)} \omega_\alpha$. 

Hence, for every $\I$ of optimal cost, $\A' = \A \setminus vio_\A(\I)$ is a $\leq_\omega$-repair. Indeed, since $\I\models\langle\T,\A'\rangle$, $\langle\T,\A'\rangle$ is consistent. Moreover, $\sum_{\alpha \in \A'} \omega_\alpha = \sum_{\alpha \in \A} \omega_\alpha - \cost{\wkb}{\I}$ and $\cost{\wkb}{\I}$ is minimal, so $\sum_{\alpha \in \A'} \omega_\alpha$  is maximal. 
It also follows that every $\leq_\omega$-repair $\A'$ is such that $\sum_{\alpha \in \A'} \omega_\alpha = \sum_{\alpha \in \A} \omega_\alpha - \optc{\wkb}$. 
\end{proof}

Note however that \optsem{possible}, does \emph{not} generalize $\leq_\omega$-brave, but only over-approximates it: 
$$  \langle \T, \A \rangle \models_{\leq_\omega\text{-brave}} q \implies \wkb \sat{p}{opt} q. $$ 
Indeed, we have shown that to each $\leq_\omega$-repair corresponds at least one interpretation with optimal cost but given an interpretation $\I$ with optimal cost  \wrt $\WKB$, if $B \in \NC\setminus\signatureof[\Kmc]$ and $b \in \NI\setminus\inds(\Kmc)$, then one can add $b^\I$ to $B^\I$ without changing the cost of $\I$ \wrt $\wkb$, so $\wkb \sat{p}{opt} B(b)$ while $\langle \T, \A \rangle \not\models_{\leq_\omega\text{-brave}} B(b)$.

\subsubsection*{Decision Problems} 

In our complexity analysis, we will consider the following decision problems:
\begin{itemize}
\item \emph{Bounded cost satisfiability}  (\BCS) takes as input a WKB $\WKB$ and an integer $k$ and decides whether there exists an interpretation $\I$ with $\cost{\wkb}{\I} \leq k$.
\item \emph{Bounded-cost certain (\resp possible) BCQ entailment} (\dec{C}{b}{c} (\resp \dec{C}{b}{p})) takes as input a WKB $\WKB$, a BCQ $q$ and an integer $k$ and decides whether $\wkb \sat{c}{k} q$ (\resp  $\wkb \sat{p}{k} q$).
\item \emph{Optimal-cost certain (\resp possible) BCQ entailment} (\dec{C}{opt}{c} (\resp \dec{C}{opt}{p})) takes as input a WKB $\WKB$ and a BCQ $q$ and decides whether $\wkb \sat{c}{opt} q$ (\resp  $\wkb \sat{p}{opt} q$). 
\end{itemize}
We will also consider the restrictions of the Boolean query entailment problems to the case of instance queries, denoted by \dec{I}{b}{c}, \dec{I}{b}{p}, \dec{I}{opt}{c} and \dec{I}{opt}{p} respectively.

\subsubsection*{Complexity Measures} 
It is customary to consider \emph{combined complexity} and \emph{data complexity} when studying decision problems related to query answering over DL KBs. Data complexity considers only the size of the ABox while combined complexity takes into account the size of the whole input. In the case of WKBs, we consider the assertion weights as part of the ABox and inclusion weights as part of the TBox. We will use the following notation: given a WKB $\WKB$, $|\A|$ (\resp $|\T|$, $|\Kmc|$) is the length of the string representing $\Amc$ (\resp $\Tmc$, $\Kmc$), where elements of $\NC$, $\NR$ and $\NI$ are considered of length one, and $|\A_\omega|$ (\resp $|\T_\omega|$, $|\Kmc_\omega|$) is the length of the string representing the set $\{(\alpha,\omega(\alpha))\mid \alpha\in\Amc\}$ (\resp $\{(\tau,\omega(\tau))\mid \tau\in\Tmc\}$, $\{(\chi,\omega(\chi))\mid \chi\in\Tmc\cup\Amc\}$), where elements of $\NC$, $\NR$ and $\NI$ are considered of length one and weights are encoded either in unary or in binary. Note that if the TBox contains qualified number restrictions, the numbers can also be encoded in unary or binary. We will also make this encoding distinction for the integer $k$ taken as input by some of the decision problems we consider. If $|k|$ denotes the size of the encoding of $k$, $k=|k|$ when encoded in unary and $k\leq 2^{|k|}$ when encoded in binary. Finally, for a BCQ $q$, $|q|$ is the length of the string representing $q$ where elements of $\NC$, $\NR$, $\NI$ and variables are considered of size one. Note that when we use $|\cdot|$ over a set which is not a (weighted) ABox or TBox, we simply means the set cardinality.


\section{Combined Complexity}\label{sec:combined}
In this section we study the combined complexity of bounded cost satisfiability and certain and possible answer recognition, for DLs between $\mathcal{EL}_\bot$ and $\mathcal{ALCO}$. The first line of Table~\ref{tab:combinedResults} gives an overview of the results.

\subsection{Upper Bounds}

To characterize the cost of interpretations, we define the notion of \emph{$k$-configuration}. Intuitively, a $k$-configuration specifies how to allocate a cost of $k$ between possible violations. 

\begin{definition}[$k$-configuration]
Let $\WKB$ be a WKB and $k$ be an integer.
A \emph{$k$-configuration} for $\wkb$ is a function $\gamma : \T \cup \A \mapsto \mathbb{N}$ such that:
            \begin{itemize}
                \item $\gamma(\tau) \in \mathbb{N}$ for every $\tau\in \T$,
                \item $\gamma(\alpha) \in \{0,1\}$ for every $\alpha \in \A$,
                \item $\sum_{\chi\in \T \cup \A} \gamma(\chi)\omega_\chi \leq k$.           
    		\end{itemize}
An interpretation $\I$ satisfies the $k$-configuration $\gamma$ if $ |\vio{\tau}{\I}| \leq \gamma(\tau)$ for every $\tau\in \T$ and $\I\models\alpha$ for every $\alpha \in \A$ such that $\gamma(\alpha)= 0$. 
\end{definition}

The definition of $k$-configurations implies in particular that $\gamma(\chi) = 0$ for every $\chi \in \T_\infty\cup \A_\infty$.
	
\begin{lemma} \label{lem:kconfig}
Let $\wkb$ be a WKB and $\I$ be an interpretation. 
$$\cost{\wkb}{\I} = \min \{ k \mid \exists\gamma \text{ $k$-configuration s.t. } \I \text{ satisfies } \gamma \}$$
\end{lemma}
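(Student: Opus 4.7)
The plan is to prove the equality by establishing both inequalities via a canonical $k$-configuration construction, and then separately handle the case of infinite cost.

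First I would establish $\min \leq \cost{\wkb}{\I}$ by exhibiting a witness. Define $\gamma^\star$ by setting $\gamma^\star(\tau) = |\vio{\tau}{\I}|$ for each $\tau \in \T$ and $\gamma^\star(\alpha) = 1$ if $\alpha \in \vio{\A}{\I}$, and $\gamma^\star(\alpha) = 0$ otherwise. Assuming $\cost{\wkb}{\I}$ is finite, each $|\vio{\tau}{\I}|$ is a natural number and no infinite-weight axiom or assertion is violated, so $\gamma^\star$ is well-typed. A direct computation gives
\[
\sum_{\chi \in \T \cup \A} \gamma^\star(\chi)\, \omega_\chi \;=\; \sum_{\tau \in \T} \omega_\tau |\vio{\tau}{\I}| + \sum_{\alpha \in \vio{\A}{\I}} \omega_\alpha \;=\; \cost{\wkb}{\I},
\]
so $\gamma^\star$ is a $\cost{\wkb}{\I}$-configuration, and $\I$ satisfies it by construction.

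Next I would prove the converse $\cost{\wkb}{\I} \leq k$ for any $k$-configuration $\gamma$ that $\I$ satisfies. For each $\tau \in \T$, the satisfaction condition gives $|\vio{\tau}{\I}| \leq \gamma(\tau)$, hence $\omega_\tau |\vio{\tau}{\I}| \leq \omega_\tau \gamma(\tau)$. For each $\alpha \in \vio{\A}{\I}$, since $\I \not\models \alpha$, the requirement that $\gamma(\alpha)=0$ implies $\I \models \alpha$ forces $\gamma(\alpha) = 1$, so $\omega_\alpha = \omega_\alpha \gamma(\alpha)$. Summing these inequalities and bounding the omitted $\alpha \notin \vio{\A}{\I}$ terms by $\omega_\alpha \gamma(\alpha) \geq 0$ yields
\[
\cost{\wkb}{\I} \;\leq\; \sum_{\tau \in \T} \omega_\tau \gamma(\tau) + \sum_{\alpha \in \A} \omega_\alpha \gamma(\alpha) \;\leq\; k.
\]
Taking the minimum over all such $k$ completes the inequality.

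Finally I would dispatch the infinite-cost case. If $\cost{\wkb}{\I} = \infty$, then (by the remark preceding the lemma) at least one of the following holds: some $\alpha \in \A_\infty$ is violated, some $\tau$ with $\omega_\tau = \infty$ is violated, or some $\tau$ has $|\vio{\tau}{\I}| = \infty$. In the first two cases, any $\gamma$ satisfied by $\I$ would require $\gamma(\alpha) = 1$ or $\gamma(\tau) \geq 1$ for an item of weight $\infty$, making $\sum \gamma(\chi)\omega_\chi = \infty > k$ for every finite $k$. In the third case, no natural number $\gamma(\tau)$ can bound $|\vio{\tau}{\I}|$. Thus no finite $k$-configuration is satisfied by $\I$, so the set on the right-hand side is empty, and by the standard convention its minimum is $\infty = \cost{\wkb}{\I}$.

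There is no real obstacle here; the argument is a careful unpacking of the definitions. The only subtlety worth care is the treatment of the infinite case, which I handle by case analysis on the reason the cost is infinite, ensuring the minimum is taken over an empty set.
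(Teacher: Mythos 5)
Your proof is correct and follows essentially the same route as the paper's: the same canonical configuration $\gamma^\star(\tau)=|\vio{\tau}{\I}|$, $\gamma^\star(\alpha)=1$ iff $\I\not\models\alpha$ witnesses one inequality, and the same term-by-term bound gives $\cost{\wkb}{\I}\leq k$ for any satisfied $k$-configuration. Your explicit treatment of the infinite-cost case is a small but welcome completion of an argument the paper leaves implicit.
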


\begin{proof}
If $\I$ satisfies a $k$-configuration $\gamma$, $\cost{\wkb}{\I} \leq k$. Indeed, for every $\tau\in \T$, $ |\vio{\tau}{\I}| \leq \gamma(\tau)$, and for every $\alpha\in \vio{\A}{\I}$, $\gamma(\alpha)=1$ because $\gamma(\alpha)=0$ implies $\I\models\alpha$. Thus 
$\cost{\wkb}{\I} \leq \sum_{\tau \in \T} \gamma(\tau)\omega_\tau + \sum_{\alpha \in \A}\gamma(\alpha)\omega_\alpha \leq k$.   
Moreover, if $\cost{\wkb}{\I} = k$, we can define a $k$-configuration $\gamma$ such that $\I$ satisfies $\gamma$ by setting $\gamma(\tau) = |\vio{\tau}{\I}|$ for every $\tau \in \T$, and $\gamma(\alpha) = 0$ if $\I\models\alpha$, $\gamma(\alpha) = 1$ otherwise for every $\alpha \in \A$.
\end{proof}

We now define a new KB in a more expressive DL in such a way that the models of the new KB will be interpretations that satisfy a given $k$-configuration.

Given a concept inclusion $\tau=B \sqsubseteq C$ we define the \emph{violation concept} $V_\tau := B \sqcap \neg C$ such that for every interpretation $\I$, it holds that $\vio{\tau}{\I} = V_\tau^\I$.

\begin{definition} \label{defKB}
    Let $\WKB$ be an $\mathcal{ALCO}$ WKB, $k$ an integer and $\gamma$ a $k$-configuration for $\wkb$.   
    We define the $\ALCOQu$ KB $\K_{\gamma}= \langle \T_{\gamma}, \A_{\gamma} \rangle$ associated to $\wkb$ and $\gamma$ as:
\begin{align*}
\T_{\gamma} = & \T_\infty \cup \{\top \sqsubseteq  \: \leq\gamma(\tau) \: U.V_\tau\mid \tau\in \T\setminus\T_\infty\} \\
\A_{\gamma} = & \{ \alpha \in \A \mid \:\: \gamma(\alpha) = 0 \}.
\end{align*}
\end{definition}

\begin{proposition}\label{prop:satisfactionConfigKB}
    Let $\wkb$ be a WKB and $\gamma$ be a $k$-configuration for $\wkb$. For every interpretation $\I$, 
    $\I \models \K_{\gamma}$ iff $\I$ satisfies $\gamma$. 
\end{proposition}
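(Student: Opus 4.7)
The plan is to prove both directions of the biconditional by directly unfolding definitions, splitting the analysis according to the three syntactic categories of statements in $\K_\gamma$: assertions in $\A_\gamma$, hard inclusions from $\T_\infty$, and the number-restriction axioms added for each soft $\tau \in \T \setminus \T_\infty$. The key technical observation underlying the whole argument is that, since the universal role is interpreted as $\Delta^\I \times \Delta^\I$, an axiom of the form $\top \sqsubseteq {\leq}n\,U.C$ is satisfied by $\I$ if and only if $|C^\I| \leq n$: because every domain element is $U$-related to every other domain element, the number-restriction on the right-hand side collapses to a global cardinality constraint on $C^\I$. Combined with the identity $V_\tau^\I = \vio{\tau}{\I}$ (which is immediate from the definition of the violation concept), this gives the equivalence $\I \models \top \sqsubseteq {\leq}\gamma(\tau)\,U.V_\tau$ iff $|\vio{\tau}{\I}| \leq \gamma(\tau)$.

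For the $(\Leftarrow)$ direction, assume $\I$ satisfies $\gamma$. For any $\alpha \in \A_\gamma$, by construction $\gamma(\alpha) = 0$, and the definition of satisfying $\gamma$ then yields $\I \models \alpha$, so $\I \models \A_\gamma$. For any $\tau \in \T_\infty$, I will use the fact (noted explicitly in the excerpt just after the definition of $k$-configurations) that $\gamma(\tau) = 0$ whenever $\omega_\tau = \infty$; hence $|\vio{\tau}{\I}| \leq 0$, i.e.\ $\vio{\tau}{\I} = \emptyset$, which means $\I \models \tau$. Finally, for each $\tau \in \T \setminus \T_\infty$, we have $|V_\tau^\I| = |\vio{\tau}{\I}| \leq \gamma(\tau)$, and by the key observation above $\I \models \top \sqsubseteq {\leq}\gamma(\tau)\,U.V_\tau$.

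For the $(\Rightarrow)$ direction, assume $\I \models \K_\gamma$. If $\gamma(\alpha) = 0$ for some $\alpha \in \A$, then $\alpha \in \A_\gamma$ by construction, so $\I \models \alpha$; this takes care of the ABox condition of satisfying $\gamma$. For the TBox condition, fix $\tau \in \T$: if $\tau \in \T_\infty$, then $\I \models \tau$ because $\tau \in \T_\gamma$, so $|\vio{\tau}{\I}| = 0 = \gamma(\tau)$; otherwise $\I \models \top \sqsubseteq {\leq}\gamma(\tau)\,U.V_\tau$, and the key observation gives $|\vio{\tau}{\I}| = |V_\tau^\I| \leq \gamma(\tau)$, as required.

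There is no real obstacle here; the proof is purely a verification exercise once the semantics of the universal role in qualified number restrictions is spelled out. The only point to double-check is that the cited fact $\gamma(\tau) = 0$ for $\tau \in \T_\infty$ (and similarly for $\A_\infty$) is justified, which follows immediately from the constraint $\sum_{\chi} \gamma(\chi)\omega_\chi \leq k$ in the definition of a $k$-configuration together with $k \in \mathbb{N}$.
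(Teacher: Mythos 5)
Your proposal is correct and follows essentially the same route as the paper's proof: both directions are verified by unfolding the definitions of $\K_\gamma$ and of satisfying $\gamma$, using the identity $V_\tau^\I = \vio{\tau}{\I}$, the collapse of $\top \sqsubseteq {\leq}\gamma(\tau)\,U.V_\tau$ to the cardinality bound $|V_\tau^\I|\leq\gamma(\tau)$ via the universal role, and the fact that $\gamma$ vanishes on $\T_\infty\cup\A_\infty$. The only difference is that you spell out the universal-role observation more explicitly than the paper does, which is harmless.
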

\begin{proof}
Suppose $\I \models \K_\gamma$. For every $\tau\in\T\setminus\T_\infty$, since $\I \models \top \sqsubseteq \: \leq\gamma(\tau) U.V_\tau$, then $|vio_\tau(\I)| = |V^{\I}_\tau| \leq \gamma(\tau)$. For every $\tau\in\T_\infty$, since $\I\models \T_\infty$,  $|vio_\tau(\I)|=0=\gamma(\tau)$. Finally, as $\I \models \A_\gamma $, $\I$ satisfies all $\alpha \in \A$ such that $\gamma(\alpha)=0$.

Conversely, suppose that $\I$ satisfies $\gamma$. For every $ \tau \in \T\setminus\T_\infty , \: |V^{\I}_\tau| = |\vio{\tau}{\I}| \leq \gamma(\tau)$ thus $\I \models \top \sqsubseteq \: \leq \gamma(\tau) U.V_\tau$. For every $\tau \in \T_\infty $,  $|\vio{\tau}{\I}| \leq \gamma(\tau) = 0$ thus $\I \models \tau$. Therefore $\I \models \T_\gamma$. As $\I$ satisfies all $\alpha \in \A$ such that $\gamma(\alpha)= 0$, we also have $\I \models \A_\gamma$, so $\I\models\K_\gamma$.
\end{proof}

This construction allows us to decide bounded cost satisfiability via $\ALCOQu$ satisfiability.

\begin{theorem}\label{thm:upperboundBCS}
    \BCS for $\mathcal{ALCO}$ is in \exptime in combined complexity (even if the bound $k$ and the weights are encoded in binary). 
\end{theorem}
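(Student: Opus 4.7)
The plan is to combine Lemma~\ref{lem:kconfig} and Proposition~\ref{prop:satisfactionConfigKB} to reduce \BCS to testing satisfiability of the associated $\ALCOQu$ KB $\K_\gamma$ for some $k$-configuration $\gamma$. Specifically, chaining the two results yields: $\wkb$ is $k$-satisfiable iff there exists a $k$-configuration $\gamma$ such that $\K_\gamma$ is satisfiable. So the algorithm is: enumerate all $k$-configurations $\gamma$ for $\wkb$, build $\K_\gamma$, and test its satisfiability; accept iff at least one such KB is satisfiable.

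Next I would bound the enumeration. Since $\gamma(\chi) = 0$ whenever $\chi \in \T_\infty \cup \A_\infty$ (forced by $\sum_\chi \gamma(\chi)\omega_\chi \leq k$) and $\gamma(\alpha) \in \{0,1\}$, each $k$-configuration is determined by choosing, for each finite-weight $\tau \in \T$, an integer $\gamma(\tau) \in \{0,\ldots,k\}$ (since $\gamma(\tau) \leq k/\omega_\tau \leq k$) and, for each finite-weight $\alpha \in \A$, a bit. With $k$ encoded in binary, each $\gamma(\tau)$ takes $O(|k|)$ bits, so the total number of $k$-configurations is bounded by $(k{+}1)^{|\T|} \cdot 2^{|\A|} = 2^{O(|\T|\cdot |k| + |\A|)}$, which is exponential in the size of the input. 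The constraint $\sum_\chi \gamma(\chi)\omega_\chi \leq k$ is easily checkable in polynomial time per candidate, and invalid candidates are discarded.

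For each remaining $\gamma$, the KB $\K_\gamma$ from Definition~\ref{defKB} has size polynomial in $|\wkb|$ (plus $O(|k|)$ bits for each number restriction $\leq \gamma(\tau)$, which is already binary-encoded). Since $\ALCOQu$ satisfiability is decidable in \exptime in combined complexity even when qualified number restrictions use binary-encoded integers (a standard result, e.g.\ via the tableau/automata techniques of Tobies), each satisfiability test runs in time $2^{p(|\wkb|+|k|)}$ for some polynomial $p$. The overall running time is at most $2^{O(|\T|\cdot|k|+|\A|)} \cdot 2^{p(|\wkb|+|k|)} = 2^{\mathrm{poly}(|\wkb|+|k|)}$, which remains in \exptime.

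The main obstacle I anticipate is ensuring that the blow-up from binary-encoded weights is absorbed correctly: both the number of configurations and the sizes of the number restrictions in $\K_\gamma$ depend on $k$, so one must be careful to appeal to an $\ALCOQu$ satisfiability procedure whose complexity is polynomial (not exponential) in the bit-length of the numbers occurring in qualified number restrictions. Once that known upper bound for $\ALCOQu$ is cited, the two exponential factors multiply to remain singly exponential, giving the desired \exptime bound.
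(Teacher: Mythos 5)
Your proposal is correct and follows essentially the same route as the paper: reduce \BCS{} to satisfiability of $\K_\gamma$ over all $k$-configurations via Lemma~\ref{lem:kconfig} and Proposition~\ref{prop:satisfactionConfigKB}, bound the enumeration by $(k{+}1)^{|\T|}2^{|\A|}$, and invoke an \exptime{} satisfiability procedure for a logic with binary-coded number restrictions (the paper cites $\ZOQ$). The only detail the paper adds that you omit is a remark that the cited satisfiability results, proved without the standard names assumption, still apply because the weak SNA can be enforced by adding axioms $\{a\}\sqcap\{b\}\sqsubseteq\bot$ for distinct individuals.
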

\begin{proof}
Let $\WKB$ be a WKB and $k$ be an integer. By Lemma \ref{lem:kconfig} and Proposition \ref{prop:satisfactionConfigKB}, $\wkb$ is $k$-satisfiable iff there exists a $k$-configuration $\gamma$ such that $\K_\gamma$ is satisfiable. The number of $k$-configurations $\gamma$ is bounded by $(k+1)^{|\T|}2^{|\A|}$ (hence by $2^{|k+1||\T|+|\A|}$ if $k$ is encoded in binary and $|k+1|$ is the length of the encoding of $k+1$) as there are at most $k+1$ possibilities for the value of $\gamma(\tau)$ for $\tau \in \T$ and 2 possibilities for the value of $\gamma(\alpha)$ for $\alpha \in \A$. Moreover, for a given $\gamma$, $\K_\gamma$ is of polynomial size and can be constructed in polynomial time \wrt $|\wkb|$ and $|k|$ by encoding the number restrictions in binary (since the numbers occurring in such restrictions are bounded by $k$). Therefore, as satisfiability in $\mathcal{ZOQ}$ (which extends $\ALCOQu$) is in \exptime even with binary encoding in number restrictions \cite{ceo}, checking for every $k$-configuration $\gamma$ whether $\K_\gamma$ is satisfiable is a decision procedure for \BCS that runs in exponential time \wrt combined complexity. 
Note that the complexity results for $\ALCOQu$ apply even if they are shown without the standard name assumption 
because $\Kmc_\gamma$ is satisfiable under our weak SNA iff $\Kmc_\gamma\cup\{ \{a\}\sqcap\{b\}\!\sqsubseteq\! \bot \mid \!a, b\! \in\! \inds(\K),a\! \neq\! b\}$ is satisfiable without any assumption on the  interpretation of individuals. 
\end{proof}

To prove the upper bounds on query entailment, we need to first show some results on the computation of the optimal cost of an $\mathcal{ALCO}$ WKB. 
Since the number of violations of a concept inclusion in an interpretation $\I$ is bounded by the cardinality of its domain $\Delta^\I$, 
the following proposition is useful to bound the optimal cost of a WKB.

\begin{restatable}{proposition}{PropBoundedModelALCO}\label{prop:boundedModelALCO}
	Let $\K$ be an $\mathcal{ALCO}$ KB. If $\K$ is satisfiable, then it has a model $\I$  such that $|\Delta^\I| \leq |\inds(\K)| + 2^{|\T|}$. 
\end{restatable}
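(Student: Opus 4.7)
The plan is a standard type-based filtration of any model $\J$ of $\K$. Let $\mi{sub}(\T)$ be the set of subconcepts of $\T$, so $|\mi{sub}(\T)| \le |\T|$. For each $d \in \Delta^\J$, define the type $\mi{tp}(d) = \{C \in \mi{sub}(\T) \mid d \in C^\J\}$; thus there are at most $2^{|\T|}$ types realized in $\J$. I then pick a selector $f \colon \Delta^\J \to \Delta^\J$ satisfying $\mi{tp}(f(d)) = \mi{tp}(d)$ for every $d$, with $f(a^\J) = a^\J$ for every $a \in \inds(\K)$ and, on the remaining types, one fixed representative per realized type. Define $\I$ by $\Delta^\I := f(\Delta^\J)$, $a^\I := a^\J$, $A^\I := A^\J \cap \Delta^\I$ for $A \in \NC$, and $(d,e) \in R^\I$ iff there exist $d', e' \in \Delta^\J$ with $f(d') = d$, $f(e') = e$ and $(d', e') \in R^\J$. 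The size bound $|\Delta^\I| \le |\inds(\K)| + 2^{|\T|}$ is immediate: $\Delta^\I$ consists of at most $|\inds(\K)|$ images of named individuals plus at most one representative per type.

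The key step is the truth lemma: for every $C \in \mi{sub}(\T)$ and every $d \in \Delta^\I$, $d \in C^\I$ iff $d \in C^\J$, which I prove by induction on $C$. Concept names and Booleans are direct from the definitions; the nominal case $C = \{a\}$ uses $a^\I = a^\J$ and the semantics $\{a\}^\J = \{a^\J\}$. For $\exists R.D$: from $d \in (\exists R.D)^\J$, a $\J$-witness $e$ yields $(d, f(e)) \in R^\I$ by construction, and because $\mi{tp}(f(e)) = \mi{tp}(e)$ and $D \in \mi{sub}(\T)$ by closure under subconcepts, we obtain $f(e) \in D^\J$, hence $f(e) \in D^\I$ by IH. Conversely, an $R^\I$-neighbor $r$ of $d$ traces back via the construction to some $(d', e') \in R^\J$ with $f(d') = d$ and $f(e') = r$; type equalities $\mi{tp}(d') = \mi{tp}(d)$, $\mi{tp}(e') = \mi{tp}(r)$ combined with closure of $\mi{sub}(\T)$ transfer membership back to $d$ in $\J$. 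The $\forall R.D$ case is symmetric, using that $\mi{tp}(d') = \mi{tp}(d)$ gives $d' \in (\forall R.D)^\J$, whence $e' \in D^\J$.

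From the truth lemma, every inclusion $C \sqsubseteq D \in \T$ holds in $\I$ by elementwise comparison on $\Delta^\I$, and the nominal side condition is met since $a^\I = a^\J$. Every ABox assertion $A(a)$ is preserved because $a^\I = a^\J \in A^\J \cap \Delta^\I = A^\I$, and every role assertion $R(a,b)$ is preserved because $(a^\J, b^\J) \in R^\J$ and $f(a^\J) = a^\J$, $f(b^\J) = b^\J$ give $(a^\I, b^\I) \in R^\I$. The main obstacle is calibrating $R^\I$ so that the truth lemma goes through in both directions: a naive restriction of $R^\J$ to $\Delta^\I$ loses witnesses for existential restrictions, while a too-liberal definition would break universal restrictions; the chosen definition via $f$ (the \emph{smallest filtration} from modal logic), together with type preservation of $f$ and closure of $\mi{sub}(\T)$ under subconcepts, is precisely what is needed to make both directions work.
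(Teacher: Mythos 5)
Your proposal is correct and follows essentially the same route as the paper's proof: a type-based filtration with respect to $sub(\T)$ that keeps the named individuals of $\K$ untouched, uses the smallest-filtration definition of roles, and establishes a truth lemma by induction on subconcepts. The only difference is presentational --- you realize the filtration as a sub-interpretation via an idempotent selector function picking one representative per type, whereas the paper quotients $\Delta^\I \setminus \inds(\K)$ by type equality and adds the individuals back; the two constructions are isomorphic and yield the same bound $|\inds(\K)| + 2^{|\T|}$.
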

\begin{proof}[Proof sketch]
We adapt the proof of $\mathcal{ALC}$ bounded model property by \citeauthor{Baader_Horrocks_Lutz_Sattler_2017}~\shortcite{Baader_Horrocks_Lutz_Sattler_2017}. It is based on the notion of filtration that `merges' elements that belong to the same concepts and is easily extended to handle nominals. 
\end{proof}

The following lemma is a consequence of Proposition~\ref{prop:boundedModelALCO} and the definition of the cost of an interpretation. 

\begin{lemma}\label{lem:boundOptc}
The optimal cost for an $\mathcal{ALCO}$ WKB $\wkb$ (such that $\K_\infty$ is satisfiable) is exponentially bounded in $|\wkb|$ (even if  the weights are encoded in binary).
\end{lemma}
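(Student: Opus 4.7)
The plan is to exhibit a concrete interpretation $\I$ of exponentially bounded cost and conclude that $\optc{\wkb}$, being the minimum, is at most this value.

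First, I invoke the hypothesis that $\K_\infty$ is satisfiable and apply Proposition~\ref{prop:boundedModelALCO} to $\K_\infty = \langle \T_\infty, \A_\infty\rangle$, which is itself an $\mathcal{ALCO}$ KB. This yields a model $\I$ of $\K_\infty$ with
\[
|\Delta^\I| \leq |\inds(\K_\infty)| + 2^{|\T_\infty|} \leq |\inds(\K)| + 2^{|\T|},
\]
a quantity that is exponentially bounded in $|\wkb|$.

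Next, I bound $\cost{\wkb}{\I}$. Because $\I \models \K_\infty$, every axiom in $\T_\infty$ has an empty set of violations and every assertion in $\A_\infty$ is satisfied in $\I$, so $\cost{\wkb}{\I}$ reduces to
\[
\cost{\wkb}{\I} \;=\; \sum_{\tau \in \T \setminus \T_\infty} \omega_\tau\, |\vio{\tau}{\I}| \;+\; \sum_{\alpha \in \vio{\A}{\I} \setminus \A_\infty} \omega_\alpha,
\]
which is in particular finite. For each $\tau \in \T \setminus \T_\infty$, the set $\vio{\tau}{\I} = V_\tau^\I$ is a subset of $\Delta^\I$, hence $|\vio{\tau}{\I}| \leq |\inds(\K)| + 2^{|\T|}$; and the finite weight $\omega_\tau$ is at most $2^{|\T_\omega|}$ even under binary encoding. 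The first sum has at most $|\T|$ terms; the second at most $|\A|$ terms, each bounded by $2^{|\A_\omega|}$. Combining these bounds,
\[
\cost{\wkb}{\I} \;\leq\; |\T|\cdot 2^{|\T_\omega|}\cdot (|\inds(\K)|+2^{|\T|}) \;+\; |\A|\cdot 2^{|\A_\omega|},
\]
a product of polynomially many factors each at most exponential in $|\wkb|$, and hence itself exponential in $|\wkb|$. Since $\optc{\wkb} \leq \cost{\wkb}{\I}$, the claim follows.

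No step is a real obstacle: the only thing to be careful about is that Proposition~\ref{prop:boundedModelALCO} is applied to $\K_\infty$ (which is the object we know to be satisfiable) rather than to $\K$ itself, and that $|\T_\infty| \leq |\T|$ and $|\inds(\K_\infty)| \leq |\inds(\K)|$ so that the domain bound is still expressible in terms of $|\wkb|$.
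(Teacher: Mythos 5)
Your proof is correct and follows essentially the same route as the paper's: apply Proposition~\ref{prop:boundedModelALCO} to the satisfiable KB $\K_\infty$ to obtain a model of domain size at most $|\inds(\K)| + 2^{|\T|}$, observe that its cost only involves the finite-weight axioms and assertions, and bound that cost by the product of the domain size, the number of soft axioms/assertions, and the maximal finite weight (which is at most exponential even under binary encoding). Your explicit remark that the proposition must be applied to $\K_\infty$ rather than $\K$, with $|\T_\infty|\leq|\T|$ and $|\inds(\K_\infty)|\leq|\inds(\K)|$, is a detail the paper glosses over, but the argument is the same.
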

\begin{proof}
Let $\WKB$ be an $\ALCO$ WKB such that $\Kinf$ is satisfiable. 
By Proposition \ref{prop:boundedModelALCO}, there exists a model $\I$ of $\Kinf$ such that $|\Delta^\I| \leq |\inds(\K)| + 2^{|\T|} := l$.
\begin{align*}
		\cost{\wkb}{\I} &= \sum_{\tau \in \T} \omega_\tau|vio_\tau(\I)| + \sum_{\alpha \in vio_\A(\I)}\omega_\alpha
		\\ &\leq l(\sum_{\tau \in \T \setminus \Tinf} \omega_\tau) + \sum_{\A \setminus \Ainf}\omega_\alpha        
		\\ &\leq l|\T|\max_{\tau\in\T\setminus\T_\infty}(\omega_\tau) + |\A|\max_{\alpha\in\A\setminus\A_\infty}(\omega_\alpha) 
\end{align*}
It follows that $\optc{\wkb}\leq \cost{\wkb}{\I} \leq L$ where $L:=(|\T||\inds(\K)|+|\T| 2^{|\T|})\max_{\tau\in\T\setminus\T_\infty}(\omega_\tau) + |\A|\max_{\alpha\in\A\setminus\A_\infty}(\omega_\alpha)$. 
Moreover, since 
the maximal (finite) weights are at most exponential in $|\wkb|$ (even if weights are encoded in binary), $L$ is exponential in $|\wkb|$. 
\end{proof}

Since the optimal cost is exponentially bounded and \BCS is in \exptime, we obtain the following result.

\begin{lemma}\label{lem:compOpt}
Computing the optimal cost of an $\mathcal{ALCO}$ WKB $\wkb$ can be done in exponential time in the size of the WKB $|\wkb|$ (even if  the weights are encoded in binary).
\end{lemma}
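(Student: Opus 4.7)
The plan is to combine the exponential upper bound on the optimal cost (Lemma~\ref{lem:boundOptc}) with the \exptime decision procedure for \BCS (Theorem~\ref{thm:upperboundBCS}) via a binary search.

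First, I would handle the case where $\optc{\wkb} = \infty$: this happens precisely when $\Kinf$ is unsatisfiable. We can test this by a single call to $\mathcal{ALCO}$ satisfiability, which is in \exptime. If $\Kinf$ is unsatisfiable, return $\infty$.

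Otherwise, by Lemma~\ref{lem:boundOptc} we know that $\optc{\wkb} \leq L$, where $L$ is an explicit bound exponential in $|\wkb|$ (even when weights are encoded in binary). In particular, $\log_2(L+1)$ is polynomial in $|\wkb|$. The algorithm then performs a binary search over the interval $[0, L]$ to find the least integer $k$ such that $\wkb$ is $k$-satisfiable: at each step it calls the \BCS decision procedure of Theorem~\ref{thm:upperboundBCS} on $(\wkb, k)$, where the candidate value $k$ is written in binary (of size polynomial in $|\wkb|$). Since \BCS is in \exptime even under binary encoding of the bound and the weights, each call runs in exponential time in $|\wkb|$.

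The total number of calls is $O(\log L)$, which is polynomial in $|\wkb|$; multiplied by the exponential cost of each call, the whole procedure runs in exponential time in $|\wkb|$. The main (minor) point to verify is that binary search correctly returns $\optc{\wkb}$, which follows from the monotonicity of $k$-satisfiability: if $\wkb$ is $k$-satisfiable then it is $k'$-satisfiable for every $k' \geq k$, so the set of admissible bounds forms an interval $[\optc{\wkb}, +\infty)$. Hence the smallest $k$ returned by the search is exactly $\optc{\wkb}$.
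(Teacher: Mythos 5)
Your proof is correct and follows essentially the same route as the paper: handle $\optc{\wkb}=\infty$ via satisfiability of $\Kinf$, invoke Lemma~\ref{lem:boundOptc} for the exponential bound $L$, and locate the optimal cost through calls to the \BCS procedure of Theorem~\ref{thm:upperboundBCS}. The only difference is that you use binary search (polynomially many calls) where the paper simply scans all $i \in [0,L]$ (exponentially many calls); both yield the exponential-time bound, and your monotonicity observation correctly justifies the binary search.
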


\begin{proof}
Let $\WKB$ be an $\ALCO$ WKB. 
If $\Kmc_\infty$ is not satisfiable, $\optc{\wkb}=\infty$. Otherwise, by Lemma~\ref{lem:boundOptc}, $\optc{\wkb}\leq L$ for some $L:=2^{p(|\wkb|)}$ where $p$ is a polynomial function. 
To compute $\optc{\wkb}$, one can check for every $0\leq i\leq L$ whether there exists 
$\I$ with $\cost{\wkb}{\I} \leq i$. 
By Theorem~\ref{thm:upperboundBCS}, each call to \BCS takes 
exponential time \wrt $|\wkb|$ and the size of the binary encoding of $i$, which is bounded by $p(|\wkb|)$. 
The whole computation 
thus takes exponential time \wrt $|\wkb|$. 
\end{proof}

We show that BCQ entailment (hence also BIQ entailment) under our variants of the possible semantics can be decided through an exponential number of calls to \BCS. 

\begin{theorem}\label{th:combined-upper-cqa-p}
	\dec{C}{b}{p} and \dec{C}{opt}{p} for $\mathcal{ALCO}$ are in \exptime in combined complexity (even if the bound $k$ and the weights are encoded in binary).
\end{theorem}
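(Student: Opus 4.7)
The plan is to reduce both problems to enumerating $k$-configurations and running $\ALCOQu$ satisfiability, adapting the machinery used for \BCS. First I would establish the key characterization: $\wkb \sat{p}{k} q$ holds iff there exists a $k$-configuration $\gamma$ for $\wkb$ such that the BCQ $q$ is satisfiable w.r.t.\ the $\ALCOQu$ KB $\K_\gamma$. This falls out directly from Lemma \ref{lem:kconfig} and Proposition \ref{prop:satisfactionConfigKB}, since an interpretation $\I$ with $\cost{\wkb}{\I}\leq k$ is precisely a model of some $\K_\gamma$, and the additional requirement $\I \models q$ appears on both sides of the equivalence.

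The next step is to decide BCQ satisfiability w.r.t.\ $\K_\gamma$. For this I would use the standard reduction to KB satisfiability: given $q = \exists \vec{y}\,\psi(\vec{y})$, introduce fresh individual names $\vec{a}$ not appearing in $\K_\gamma$, substitute $a_i$ for $y_i$ throughout $\psi$ to obtain a ground set $\A_q$, and check satisfiability of $\langle \T_\gamma, \A_\gamma \cup \A_q\rangle$. Under our weak SNA, the fresh names can be interpreted as arbitrary witnesses for the existentials, so this KB is satisfiable iff $q$ is satisfiable w.r.t.\ $\K_\gamma$. The enlarged KB remains in $\ZOQ$ with polynomial-size number restrictions encoded in binary, and thus its satisfiability can be decided in \exptime by \cite{ceo}, exactly as in Theorem \ref{thm:upperboundBCS} (with the same $\{a\}\sqcap\{b\}\sqsubseteq\bot$ trick to reconcile the weak SNA with the complexity bound).

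The overall algorithm for \dec{C}{b}{p} enumerates each $k$-configuration $\gamma$ and invokes the satisfiability oracle on $\langle\T_\gamma,\A_\gamma\cup\A_q\rangle$, accepting iff some call succeeds. As noted in Theorem \ref{thm:upperboundBCS}, there are at most $(k+1)^{|\T|}2^{|\A|}$ configurations, exponential in $|\wkb|$ and $|k|$ even with binary encodings, so the total running time is exponential. For \dec{C}{opt}{p}, the key observation is that any interpretation with cost at most $\optc{\wkb}$ has cost exactly $\optc{\wkb}$, so $\wkb \sat{p}{opt} q$ iff $\wkb \sat{p}{\optc{\wkb}} q$. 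I would therefore first compute $\optc{\wkb}$ in \exptime via Lemma \ref{lem:compOpt}; by Lemma \ref{lem:boundOptc}, its binary encoding has polynomial length, so plugging $k:=\optc{\wkb}$ into the previous procedure remains in \exptime. The main subtlety I anticipate is making the fresh-individuals reduction watertight under the weak SNA, but this is handled uniformly by the same trick already invoked in the \BCS proof.
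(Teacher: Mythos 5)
Your overall architecture (enumerate $k$-configurations, reduce to $\ALCOQu$ satisfiability of $\K_\gamma$, handle the opt case by first computing $\optc{\wkb}$) is sound and is essentially a reordering of the paper's ingredients, and the characterization ``$\wkb \sat{p}{k} q$ iff $q$ is satisfiable w.r.t.\ some $\K_\gamma$'' is correct. The gap is in the step that decides BCQ satisfiability w.r.t.\ $\K_\gamma$ by a \emph{single} grounding $\A_q=\psi[\vec{a}/\vec{y}]$ with pairwise-distinct fresh individual names. Under the paper's weak SNA, an assertion $A(a_i)$ is satisfied only if the name $a_i$ itself is a domain element with $a_i\in A^\I$, so the fresh names cannot be ``interpreted as arbitrary witnesses'': if the only matches for $q$ send two variables to the same element, or send a variable to a named individual, your grounded KB can be unsatisfiable while $q$ is satisfiable w.r.t.\ $\K_\gamma$. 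Concretely, take a hard axiom $A\sqsubseteq\{c\}$ and $A(c)\in\A$ with weight $\infty$, and $q=\exists y\, A(y)$: the query is even entailed, yet $\langle\T_\gamma,\A_\gamma\cup\{A(a_1)\}\rangle$ forces $a_1\in\{c\}^\I=\{c\}$ with $a_1\neq c$, hence is unsatisfiable, and your algorithm answers ``no''. You also cannot repair this by duplicating the witness onto $a_i$, because $\T_\gamma$ contains the cardinality axioms $\top\sqsubseteq\ \leq\gamma(\tau)\,U.V_\tau$ and duplication may push $|V_\tau^\I|$ past $\gamma(\tau)$ (and nominals are not bisimulation-invariant); nor does the $\{a\}\sqcap\{b\}\sqsubseteq\bot$ trick help, since it addresses the opposite issue (imposing UNA on KB individuals when importing complexity results proved without the SNA), not making fresh constants behave like existentially quantified variables.

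The fix is what the paper does: enumerate \emph{all} valuations $v:\vec{y}\to\inds(\K)\cup\mn{N}_{\vec{y}}$ (including non-injective ones and ones hitting $\inds(\K)$), ground $q$ through $v$, and test each resulting KB; completeness is then shown by a renaming argument that bijectively relabels the anonymous match targets as the chosen fresh names, which preserves the cost and the satisfaction of the KB precisely because those elements lie outside $\inds(\K)$. Since there are at most $(|\inds(\K)|+|q|)^{|q|}$ valuations, the procedure stays in \exptime, so your complexity accounting survives once this step is repaired. The rest of your argument, including the treatment of \dec{C}{opt}{p} via Lemmas \ref{lem:boundOptc} and \ref{lem:compOpt}, is fine.
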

\begin{proof}
Let $\WKB$ be a WKB, $k$ an integer and $q = \exists \vec{y} \psi$ a BCQ with $\psi=\bigwedge_{i=1}^n\varphi_i$ where each $\varphi_i$ is an atom of the form $A(t)$ or $R(t,t')$ with $t,t'\in\NI\cup\vec{y}$. 

Let $\mn{N}_{\vec{y}}\subseteq\NI\setminus\inds(\K) $ such that $|\mn{N}_{\vec{y}}|=|\vec{y}|$, and for every valuation $v : \vec{y} \mapsto \inds(\K) \cup \mn{N}_{\vec{y}}$ let $v(\varphi_i)$ denote the fact obtained by replacing each variable $x$ by $v(x)$ in $\varphi_i$ and define a WKB: 
$\K^v_{\omega_v} = (\langle \T , \A_v \rangle ,\omega_v)$ with $\A_v = \A \cup \{ v(\varphi_i)\mid 1\leq i \leq n \}$ and $\omega_v$ extends $\omega$ with $\omega_v(v(\varphi_i))= \infty$ for $1\leq i \leq n$. 
We show that $\wkb \sat{p}{k} q$ iff there exists 
$v$ such that $\K^v_{\omega_v}$ is $k$-satisfiable. 

\noindent($\Leftarrow$) If there exists $v$ such that $\K^v_{\omega_v}$ is $k$-satisfiable, let $\I$ be such that $\cost{\K^v_{\omega_v}}{\I} \leq k$. By construction of $\K^v_{\omega_v}$, $\I\models v(\varphi_i)$ for $1\leq i\leq n$ so $v$ is a match for $q$ in $\I$, \ie $\I\models q$. Moreover, $\cost{\wkb}{\I}=\cost{\K^v_{\omega_v}}{\I}\leq k$. Hence $\wkb \sat{p}{k} q$.

\noindent($\Rightarrow$) If $\wkb \sat{p}{k} q$, there exists $\I\models q$ with $\cost{\wkb}{\I}\leq k$. 
Let $\pi$ be a match for $q$ in $\Imc$ (note that $\pi(c)=c$ for every $c\in\NI$). 
Consider $\mn{D}^\pi_{\vec{y}}:=\{\pi(x)\mid x\in\vec{y}\}\setminus\inds(\K)$. Since $|\mn{D}^\pi_{\vec{y}} |\leq |\mn{N}_{\vec{y}}|$, we can define an injective function $f$ from $\mn{D}^\pi_{\vec{y}}$ to $\mn{N}_{\vec{y}}$. 
Let $v:\vec{y}\mapsto\inds(\K)\cup\mn{N}_{\vec{y}}$ such that $v(x)=\pi(x)$ if $\pi(x)\in \inds(\K)$ and $v(x)=f(\pi(x))$ otherwise, and define $\I_v$ by $\Delta^{\I_v}=\Delta^\I\setminus\mn{D}^\pi_{\vec{y}}\cup \mn{N}_{\vec{y}}$, $c^{\I_v}=c$ for every $c\in\mn{N}_{\vec{y}}$, and for every $A\in\NC$ and $R\in\NR$, substitute $\pi(x)\in\mn{D}^\pi_{\vec{y}}$ with $v(x)$ in $A^\I$ (\resp $R^\I$) to obtain $A^{\I_v}$ (\resp $R^{\I_v}$). 
By construction of $\I_v$, $\I_v\models v(\varphi_i)$ for $1\leq i\leq n$. Moreover, for every $\alpha\in\A$, $\I_v\models \alpha$ iff $\I\models \alpha$ and for every $\tau\in\T$, $\vio{\tau}{\I_v}=\vio{\tau}{\I}\setminus\mn{D}^\pi_{\vec{y}}\cup\{f(e)\mid e\in\vio{\tau}{\I}\cap\mn{D}^\pi_{\vec{y}}\}$. Hence $\cost{\K^v_{\omega_v}}{\I_v}=\cost{\wkb}{\I}\leq k$ and $\K^v_{\omega_v}$ is $k$-satisfiable. 

Therefore, checking for every valuation $v$ (there are at most $(|\inds(\K)| + |q|)^{|q|}$ such valuations) whether $\K^v_{\omega_v}$ is $k$-satisfiable (in exponential time \wrt $|\K^v_{\omega_v}|$ and $|k|$ by Theorem \ref{thm:upperboundBCS}, even with binary encoding of $k$ and the weights) yields an \exptime procedure to decide \dec{C}{b}{p}.

Regarding \dec{C}{opt}{p}, we obtain an \exptime decision procedure by first computing $\optc{\wkb}$ in exponential time \wrt $|\wkb|$ using  Lemma \ref{lem:compOpt}, then applying the \exptime procedure for \dec{C}{b}{p} using $\optc{\wkb}$ as the bound (since by Lemma~\ref{lem:boundOptc} $\optc{\wkb}$ is exponentially bounded in $|\wkb|$, its binary encoding is polynomial in $|\wkb|$). 
\end{proof}
 
 Regarding our variants of the certain semantics, we need to distinguish between IQs and CQs.

\begin{theorem}
	\dec{C}{b}{c} and \dec{C}{opt}{c} for $\mathcal{ALCO}$ are in \twoexptime  in combined complexity and \dec{I}{b}{c} and  \dec{I}{opt}{c} for $\mathcal{ALCO}$ are in \exptime in combined complexity (even if the bound $k$ and the weights are encoded in binary). 
\end{theorem}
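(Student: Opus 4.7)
My plan is to exploit the correspondence between $k$-bounded interpretations and models of the $\K_\gamma$ KBs established in Lemma~\ref{lem:kconfig} and Proposition~\ref{prop:satisfactionConfigKB}, treating instance queries and conjunctive queries separately because the negation of an IQ is expressible by a single $\mathcal{ALCO}$ axiom, whereas the negation of a CQ is not. For the opt-cost variants, I will simply precompute $\optc{\wkb}$ in \exptime via Lemma~\ref{lem:compOpt} and then invoke the bounded-cost procedure with $k = \optc{\wkb}$, noting that the set of interpretations of cost at most $\optc{\wkb}$ coincides with the set of optimal-cost interpretations, and that $\optc{\wkb}$ is at most exponential, so its binary encoding is polynomial in $|\wkb|$.

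For \dec{I}{b}{c}, given a query $A(a)$ (the role case is analogous using $\{a\} \sqsubseteq \forall R.\neg\{b\}$), I form the WKB $\K^{\lnot q}_{\omega'}$ whose TBox extends $\T$ with the axiom $\{a\} \sqsubseteq \neg A$ of infinite weight, the assertions and other weights being unchanged. Because this new axiom must be satisfied by any interpretation of finite cost, the $k$-satisfying interpretations of $\K^{\lnot q}_{\omega'}$ are exactly the interpretations $\I$ with $\cost{\wkb}{\I} \leq k$ and $a \notin A^\I$. Hence $\wkb \sat{c}{k} A(a)$ iff $\K^{\lnot q}_{\omega'}$ is not $k$-satisfiable, which by Theorem~\ref{thm:upperboundBCS} and closure of \exptime under complement is in \exptime, even for binary encodings.

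For \dec{C}{b}{c}, I cannot reduce to a single \BCS call, so I enumerate $k$-configurations. By the two results cited above, $\wkb \sat{c}{k} q$ iff for every $k$-configuration $\gamma$, $\K_\gamma \models q$. The number of $k$-configurations is bounded by $(k+1)^{|\T|} 2^{|\A|}$, i.e.\ at most $2^{p(|\wkb|,|k|)}$ for some polynomial $p$, and each $\K_\gamma$ is a polynomial-size $\ALCOQu$ KB computable in polynomial time (using binary encoding in the number restrictions). The remaining ingredient is combined-complexity CQ entailment in $\ALCOQu$ (or its extension $\ZOQ$), which is known to be in \twoexptime; doing this check for each configuration yields a total time of $2^{p} \cdot 2^{2^{q}} = 2^{2^{r}}$ for some polynomial $r$, giving the \twoexptime bound.

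The opt-cost cases reduce to the bounded-cost cases: after computing $\optc{\wkb}$ in \exptime by Lemma~\ref{lem:compOpt} (returning trivially if it is $\infty$), I run the corresponding bounded-cost procedure with $k = \optc{\wkb}$, which is in \exptime for IQs and \twoexptime for CQs. The main obstacle is justifying that CQ entailment from the auxiliary $\K_\gamma$ KBs falls within \twoexptime in combined complexity despite $\K_\gamma$ living in $\ALCOQu$ with binary-encoded number restrictions, and ensuring the weak SNA can be enforced (as in the proof of Theorem~\ref{thm:upperboundBCS}) by adding polynomially many $\{a\}\sqcap\{b\}\sqsubseteq\bot$ axioms without affecting the complexity.
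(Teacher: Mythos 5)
Your handling of \dec{C}{b}{c} and of the two opt-cost variants is essentially the paper's own proof: the paper likewise establishes $\wkb \sat{c}{k} q$ iff $\K_\gamma \models q$ for every $k$-configuration $\gamma$, enumerates the exponentially many configurations (each $\K_\gamma$ being a polynomial-size $\ALCOQu$ KB), and closes the argument by citing the \twoexptime bound for BCQ entailment in tame $\ZOIQ$ with binary number restrictions, with the same $\{a\}\sqcap\{b\}\sqsubseteq\bot$ trick for the weak SNA; the opt-cost cases are dispatched exactly as you propose via Lemma~\ref{lem:compOpt}. Where you genuinely diverge is the IQ case. The paper stays inside the configuration enumeration and invokes the \exptime bound for B\emph{IQ} entailment in tame $\ZOIQ$ applied to each $\K_\gamma$, whereas you reduce certain IQ entailment to the complement of a single \BCS instance by adding an infinite-weight axiom $\{a\}\sqsubseteq\neg A$ (resp.\ $\{a\}\sqsubseteq\forall R.\neg\{b\}$). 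Your route is more self-contained: it reuses only Theorem~\ref{thm:upperboundBCS} and closure of \exptime under complement, and needs no separate query-entailment result for the IQ case. It is sound, but you should make one point explicit: the violation set of $\{a\}\sqsubseteq\neg A$ is $\{a\}^\I\cap A^\I=\{a^\I\}\cap A^\I$, while under the paper's weak SNA the assertion $A(a)$ holds iff $a\in A^\I$; these coincide only when $a^\I=a$ for $a\in\inds(\K)\cup\inds(q)$, so state that you work under that reading of the weak SNA (as the paper implicitly does in its other nominal-based reductions). Two small finishing touches: say what the ``trivial'' answer is when $\optc{\wkb}=\infty$ (every interpretation is then optimal, so no BCQ is certainly entailed), and pin down the \twoexptime CQ-entailment citation to a logic that actually covers $\ALCOQu$ with binary-coded number restrictions, as the paper does with tame $\ZOIQ$.
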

\begin{proof}
Let $\WKB$ be a WKB, $k$ an integer and $q$ a BCQ. 
 We have that $\wkb \sat{c}{k} q$ iff $\I \models q$ for every interpretation $\I$ with $\cost{\wkb}{\I} \leq k$. By Lemma \ref{lem:kconfig}, this is the case iff for every $k$-configuration $\gamma$ of $\wkb$, for every  $\I$ satisfying $\gamma$, $\I \models q$. By Proposition \ref{prop:satisfactionConfigKB}, this holds iff for every $k$-configuration $\gamma$ of $\wkb$, for every  $\I\models\K_\gamma$, $\I \models q$. Hence we obtain that $\wkb \sat{c}{k} q$ iff $\K_\gamma\models q$ for every $k$-configuration $\gamma$ of $\wkb$. 
Therefore, checking for every $k$-configuration $\gamma$ for $\wkb$ whether 
$\K_\gamma \models q$  yields a decision procedure for \dec{C}{b}{c}. 

To obtain that \dec{C}{b}{c} is in \twoexptime  in combined complexity and \dec{I}{b}{c} is in \exptime in combined complexity, even if the bound $k$ and the weights are encoded in binary, we use the following facts: 
(i) the number of $k$-configurations is exponentially bounded and each $\K_\gamma$ is of polynomial size and can be constructed in polynomial time (\cf proof of Theorem~\ref{thm:upperboundBCS}), (ii) satisfiability of $\ALCOQu$ is in \exptime even with binary encoding in number restrictions (\cf proof of Theorem~\ref{thm:upperboundBCS}), and (iii) BCQ entailment in tame $\ZOIQ$ (which extends $\ZOQ$, hence $\ALCOQu$) is in \twoexptime,  and in \exptime in the case of BIQ, even with binary encoding in number restrictions \cite[Theorem 8]{ijcai2019p212}. 
Note that the complexity results for tame $\ZOIQ$ apply even if they are shown in a context where the SNA is not made because $\Kmc_\gamma \models q$ under our version of the SNA iff $\Kmc_\gamma\cup \{ \{a\}\sqcap\{b\}\sqsubseteq \bot \mid \!a, b\! \in\! \inds(\K),a\! \neq\! b\}\models q$ without any assumption on the way the individual names are interpreted.

Regarding \dec{C}{opt}{c} (\resp \dec{I}{opt}{c}), we obtain a \twoexptime (\resp \exptime) decision procedure as we did in the proof of Theorem~\ref{th:combined-upper-cqa-p} by first computing $\optc{\wkb}$ in exponential time then applying the procedure for \dec{C}{b}{c} (\resp \dec{I}{b}{c}) using $\optc{\wkb}$ as the bound (using the fact that its binary encoding is polynomial in $|\wkb|$). 
\end{proof}
\subsection{Lower Bounds}

We first prove the hardness of bounded cost satisfiability  for $\mathcal{EL}_\bot$ using a reduction from concept cardinality query answering for $\mathcal{EL}$ KBs  \cite{biem}. 
Given a \emph{concept cardinality query} $q_A$, 
where $A\in\NC$, the answer to $q_A$ in an interpretation $\I$, denoted $q_A^\I$,
 is equal to the cardinality of $A^\I$. 
A certain answer to $q_A$ w.r.t.\ a KB $\K$ is an interval $[m, M] \in (\mathbb{N}\cup\{\infty\})^2$ such that $q_A^\I \in [m, M]$ for every model $\I$ of $\K$.

\begin{theorem}\label{thm:lowerboundBCS}
    \BCS for $\mathcal{EL}_\bot$ is \exptime-hard in combined complexity (even if the bound $k$ and weights are in unary).
\end{theorem}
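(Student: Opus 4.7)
The plan is to reduce from a variant of concept cardinality query answering over $\mathcal{EL}$ KBs. Specifically, I would use the following problem, shown to be \exptime-hard in \cite{biem}: given an $\mathcal{EL}$ KB $\K$, a concept name $A$, and an integer $M$ encoded in unary, decide whether there exists a model $\I$ of $\K$ with $|A^\I| \leq M$. This is the complement of asking whether the interval $[M+1,\infty]$ is a certain answer to the cardinality query $q_A$, and the \exptime-hardness proof in \cite{biem} only requires $M$ to be polynomial in $|\K|$, so unary encoding of $M$ is sufficient.

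The reduction is straightforward. Given an instance $(\K,A,M)$ with $\KB$, construct the $\mathcal{EL}_\bot$ WKB $\K^\dagger_\omega = (\langle \T^\dagger, \A \rangle, \omega)$ where $\T^\dagger = \T \cup \{A \sqsubseteq \bot\}$, set $\omega(\tau)=\infty$ for every $\tau \in \T$, $\omega(\alpha) = \infty$ for every $\alpha \in \A$, and $\omega(A \sqsubseteq \bot) = 1$. Set the cost bound $k := M$. Note that the new axiom $A \sqsubseteq \bot$ is a legal $\mathcal{EL}_\bot$ inclusion, so $\K^\dagger_\omega$ is indeed an $\mathcal{EL}_\bot$ WKB.

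For correctness, observe that an interpretation $\I$ has finite cost $\cost{\K^\dagger_\omega}{\I}$ iff $\I \models \T \cup \A$, and in that case $\cost{\K^\dagger_\omega}{\I} = |\vio{A \sqsubseteq \bot}{\I}| = |A^\I|$. Hence $\K^\dagger_\omega$ is $k$-satisfiable iff there exists a model $\I$ of $\K$ with $|A^\I| \leq M$, which is exactly the source problem. The reduction is polynomial-time, and since the only finite weight used is $1$ and $k=M$ is polynomial in $|\K|$, both weights and the bound admit unary encodings of polynomial size, yielding the claimed hardness.

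The main obstacle is to make sure the cited hardness result from \cite{biem} applies in a form where the cardinality bound $M$ can be taken polynomial in the input (so that unary encoding suffices); this should be verifiable from the ATM-simulation reduction used there, in which $M$ is naturally polynomial in the ATM's tape size. Everything else is a direct application of the definition of $\cost{\K^\dagger_\omega}{\I}$.
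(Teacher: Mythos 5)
Your proposal is correct and is essentially the same reduction as the paper's: both add the soft axiom $A \sqsubseteq \bot$ with weight $1$ to an $\mathcal{EL}$ KB whose own axioms and assertions get weight $\infty$, so that finite-cost interpretations are exactly models of $\K$ with cost $|A^\I|$, and then invoke the \exptime-hardness of cardinality query answering with a bound polynomial in the input to get unary encodings. The only (immaterial) difference is that you phrase the source problem as the complement of the certain-answer question for $q_A$, which is harmless since \exptime is closed under complement.
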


\begin{proof}
Let $\K = \langle \T , \A \rangle$ be an $\mathcal{EL}$ KB and $q_A$ be a concept cardinality query. 
Define the following $\mathcal{EL}_\bot$ WKB:
\begin{align*}
&\wkb' = (\langle  \T \cup \{ A\sqsubseteq \bot\} , \A \rangle, \omega)\\
&\omega(\chi) = \infty\text{ for } \chi \in \T \cup \A\quad\quad\quad\omega(A\sqsubseteq\bot) = 1
\end{align*}
For every model $\I $ of $\K$, $q_A^\I=|A^\I|=|\vio{A\sqsubseteq\bot}{\I}|$, so $\cost{\wkb'}{\I} = |\vio{A\sqsubseteq\bot}{\I}|=q_A^\I$. 
It follows that $[m,\infty]$ is a certain answer to $q_A$ iff $\cost{\wkb'}{\I} \in [m, \infty]$ for every model $\I$ of $\K$, \ie iff there is no $\I$ with 
$\cost{\wkb'}{\I}< m$.  

As deciding if $[m,\infty]$ is a certain answer to a cardinality query in $\mathcal{EL}$ is \exptime-hard \cite[Theorem 42]{Maniere}, 
BCS for $\ELbot$ is \exptime-hard in combined complexity. Moreover, our reduction only uses weights independent from $|\Kmc|$ and the proof of \cite[Theorem 42]{Maniere} uses an $m$ linear in $|\A|$, 
so BCS \exptime-hardness holds even if the bound $k$ and the weights are encoded in unary.
\end{proof}

We next reduce BCS in $\mathcal{EL}_\bot$ to \dec{I}{b}{p} and \dec{I}{b}{c} for $\mathcal{EL}_\bot$ to leverage this hardness result to IQ (and thus CQ) answering under the $k$-cost-bounded semantics.

\begin{theorem}\label{thm:lowerboundBoundedSem}
	\dec{I}{b}{p} and \dec{I}{b}{c} for $\mathcal{EL}_\bot$ are \exptime-hard in combined complexity (even if  the bound $k$ and the weights are encoded in unary).
\end{theorem}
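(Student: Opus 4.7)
The plan is to reduce bounded cost satisfiability for $\mathcal{EL}_\bot$ to each of \dec{I}{b}{p} and \dec{I}{b}{c} through a very simple padding argument that preserves the unary encodings. Given an $\ELbot$ WKB $\wkb$ and integer $k$, we let $A \in \NC \setminus \signatureof[\Kmc]$ and $a \in \NI \setminus \inds(\Kmc)$ be fresh, take $\wkb' := \wkb$, and consider the instance query $q := A(a)$. Since \BCS for $\ELbot$ is \exptime-hard already under unary encodings by Theorem~\ref{thm:lowerboundBCS}, and our reduction is constant-size and changes neither the weights nor the bound, \exptime-hardness will transfer to both problems once the correctness claims below are established.

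For the possible semantics, the key observation is that $A$ and $a$ do not occur anywhere in $\wkb$, so given any interpretation $\I$ with $\cost{\wkb}{\I} \leq k$ we may define $\I_1$ to agree with $\I$ except that $A^{\I_1} = A^{\I} \cup \{a^{\I}\}$. Since no axiom or assertion in $\wkb$ mentions $A$ or $a$, we have $\cost{\wkb'}{\I_1} = \cost{\wkb}{\I} \leq k$ and $\I_1 \models A(a)$. Conversely, if $\wkb$ is not $k$-satisfiable then trivially no interpretation of cost $\leq k$ witnesses $A(a)$. Hence $\wkb$ is $k$-satisfiable iff $\wkb' \sat{p}{k} A(a)$, yielding the \exptime lower bound for \dec{I}{b}{p}.

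For the certain semantics we dualize: starting from the same $\I$ of cost $\leq k$, define $\I_2$ to coincide with $\I$ except that $A^{\I_2} = A^{\I} \setminus \{a^{\I}\}$. Again $\cost{\wkb'}{\I_2} = \cost{\wkb}{\I} \leq k$, but now $\I_2 \not\models A(a)$, so if $\wkb$ is $k$-satisfiable then $\wkb' \not\sat{c}{k} A(a)$. Conversely, if $\wkb$ is not $k$-satisfiable, entailment holds vacuously. Thus $\wkb$ is $k$-satisfiable iff $\wkb' \not\sat{c}{k} A(a)$, so \dec{I}{b}{c} is as hard as the complement of \BCS, which is itself \exptime-hard since \exptime is closed under complement.

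There is no real obstacle here: the only subtlety is to verify, using our weak SNA, that one can freely add or remove $a^{\I}$ from $A^{\I}$ without affecting the cost, which follows immediately from $A \notin \signatureof[\Kmc]$ and $a \notin \inds(\Kmc)$. The reduction introduces no new weights and does not touch $k$, so all lower bounds hold even when the bound and the weights are encoded in unary.
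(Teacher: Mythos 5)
Your proof is correct and follows essentially the same route as the paper's: both reductions hinge on the observation that a fresh concept name $A$ and fresh individual $a$ can be freely added to or removed from $A^{\I}$ without affecting the cost, so that $k$-cost-bounded possible entailment of $A(a)$ coincides with $k$-satisfiability and certain entailment with its complement (using closure of \exptime under complement). The only cosmetic difference is that for \dec{I}{b}{p} the paper adds the assertion $B(b)$ with infinite weight to the ABox, whereas you leave the WKB untouched; your variant is a slight simplification of the same idea and is equally valid.
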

\begin{proof}
Let $\WKB$ be an $\mathcal{EL}_\bot$ WKB and \emph{k} an integer. 
Let $B\in\NC\setminus\signatureof[\Kmc]$ and $b\in\NI\setminus\inds(\Kmc)$. 

First note that $k < \optc{\wkb}$ 
iff $\wkb \sat{c}{k} B(b)$, due to 
Proposition \ref{k-VariesProp} and the fact that $B$ and $b$ do not occur in $\K$. 
As it is \exptime-hard  to decide whether $k < \optc{\wkb}$ (Theorem \ref{thm:lowerboundBCS}), 
this yields the lower bound for \dec{I}{b}{c}.

Now for \dec{I}{b}{p}, let $\K'_{\omega'} = (\langle \T,\A \cup \{B(b)\} \rangle , \omega' )$ 
where $\omega'$ extends $\omega$ with $\omega'(B(b)) = \infty$. Then $\K'_{\omega'} \sat{p}{k} B(b) $ iff there exists an interpretation $\I$ such that $\cost{\K'_{\omega'}}{\I} \leq k$ and $\I\models B(b)$. For every $\I$ such that $\I\models B(b)$, $\cost{\wkb}{\I}=\cost{\K'_{\omega'}}{\I}$. Moreover, we can add $b^\I$ to $B^\I$ in any interpretation without changing $\cost{\wkb}{\I}$ since $b$ and $B$ do not occur in $\Kmc$. 
Thus $\K'_{\omega'} \sat{p}{k} B(b) $ iff there exists $ \I $ with $\cost{\wkb}{\I} \leq k$. 
\end{proof}

To show the lower bounds for IQ (and thus CQ) answering under the opt-cost semantics, we use a reduction from the problem of deciding if an $\mathcal{EL}$ KB with closed concept names is satisfiable: given a KB and a set of concept names decide if there exists a model $\I$ of the KB such that for every closed concept name $A$, if $d \in A^\I$ then $A(d)$ is in the ABox. 

\begin{restatable}{theorem}{ThIQAoptLowerBoundCombined}\label{th:ThIQAoptLowerBoundCombined}
	\dec{I}{opt}{p} and \dec{I}{opt}{c} for $\mathcal{EL}_\bot$ are \exptime-hard in combined complexity (even if the weights are encoded in unary).
\end{restatable}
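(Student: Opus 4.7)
The plan is to reduce from instance checking in $\mathcal{EL}$ knowledge bases with closed concept names, a problem that is \exptime-hard even when restricted to closed-name satisfiable inputs (which can be guaranteed by a simple padding argument from the \exptime-hard closed-name satisfiability problem). Given such an instance $(\K,\Sigma,q)$ with $\K=\langle\T,\A\rangle$ an $\mathcal{EL}$ KB, $\Sigma\subseteq\NC$ the set of closed concept names, $q=C(c)$ an IQ atom, and $(\K,\Sigma)$ closed-name satisfiable, I would construct an $\ELbot$ WKB $\wkb^*$ whose optimal-cost interpretations coincide with the closed-name models of $\K$. All axioms of $\T$ and assertions of $\A$ keep weight $\infty$; for each $A\in\Sigma$ I add the axiom $A\sqsubseteq\bot$ with weight $1$; and to prevent the weak SNA from artificially lowering the cost by merging individuals, I enforce pairwise distinctness of the individuals of $\inds(\K)$ via a polynomial-size hard gadget (for each distinct pair $a,b\in\inds(\K)$, introduce fresh concept names $D_a,D_b$ with the hard axiom $D_a\sqcap D_b\sqsubseteq\bot$ and hard assertions $D_a(a),D_b(b)$). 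All finite weights in the construction are equal to $1$, so the reduction carries over to unary encoding of weights.

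The key observation is that any interpretation $\I$ of finite cost satisfies $\K$ and interprets distinct individuals distinctly, whence $\cost{\wkb^*}{\I}=\sum_{A\in\Sigma}|A^\I|\geq N:=\sum_{A\in\Sigma}|S_A|$ (with $S_A:=\{a\mid A(a)\in\A\}$), with equality iff $A^\I=\{a^\I\mid a\in S_A\}$ for every $A\in\Sigma$, i.e., iff $\I$ is a closed-name model of $\K$. Since $(\K,\Sigma)$ is closed-name satisfiable, $\optc{\wkb^*}=N$ and the set of optimal-cost interpretations of $\wkb^*$ coincides with the set of closed-name models of $\K$. Consequently, $\wkb^*\sat{c}{opt} q$ iff every closed-name model of $\K$ satisfies $q$, and $\wkb^*\sat{p}{opt} q$ iff some closed-name model of $\K$ satisfies $q$; as both target problems are \exptime-hard, this simultaneously yields the desired lower bounds for \dec{I}{opt}{c} and \dec{I}{opt}{p}.

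The main obstacle is the interaction with the weak SNA: without the distinctness gadget, two individuals in $S_A$ could be merged onto a single domain element, dropping $|A^\I|$ below $|S_A|$ and breaking the correspondence between optimal-cost interpretations and closed-name models. Expressing distinctness within $\ELbot$ is delicate because the DL lacks nominals, inverse roles, and negation, so the hard-disjointness gadget above is the cleanest construction I see that respects the $\{1,\infty\}$ weight budget required by the unary-encoding lower bound.
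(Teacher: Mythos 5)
Your core reduction is sound and is in fact a cleaner way to ``soften'' the closed predicates than the paper's: enforcing closedness of each $A\in\Sigma$ by a weight-$1$ axiom $A\sqsubseteq\bot$ makes optimal-cost interpretations minimize $\sum_{A\in\Sigma}|A^\I|$, and, \emph{provided a closed-name model exists}, these are exactly the closed-name models. (A side remark: under the paper's weak SNA an assertion $A(a)$ is satisfied iff the name $a$ itself belongs to $A^\I$, so distinct individuals of the KB are automatically distinct domain elements and your disjointness gadget, while harmless, is unnecessary.)

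The genuine gap is the source problem. You need \exptime-hardness of certain \emph{and} possible IQ answering for $\mathcal{EL}$ KBs with closed concept names \emph{under the promise that the input is closed-name satisfiable}, and you dismiss this with ``a simple padding argument''. It is not simple, and it is precisely where the technical content of the paper's proof lives. The standard reduction (unsatisfiability iff a fresh $B(b)$ is certain) produces possibly-unsatisfiable instances, which is exactly what your construction cannot tolerate: if $(\K,\Sigma)$ has no closed-name model, then $\optc{\wkb^*}>N$ and the optimal-cost interpretations are models of $\K$ that merely minimize the total size of the closed predicates; they bear no relation to $(\K,\Sigma)\models q$ (vacuously true) or to the possible variant (vacuously false), so the correspondence breaks. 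Producing an always-satisfiable instance that still encodes satisfiability of the original closed-predicate KB requires relativizing the $\mathcal{EL}$ TBox (in particular the $\top\sqsubseteq A$ axioms) to an ``active'' concept and, crucially, a mechanism to \emph{detect} whether any non-asserted element has slipped into a closed predicate --- this is what the paper's $Aux_\top$ concept, the $R_B$ roles with $\exists R_B.B\sqsubseteq Goal_1$, the paired $Goal_1/Goal_2$ counting, and the merging argument showing $Goal_1^\I\subseteq\inds(\A')$ for optimal $\I$ accomplish. The cited source \cite{NgoOrSi} is used in the paper only for satisfiability hardness, not for the promise versions of instance checking you rely on. Until you either carry out that relativization yourself or point to a concrete published hardness result for the promise problems, your proof correctly transfers hardness from problems whose hardness has not been established.
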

\begin{proof}[Proof sketch]
We reduce the \exptime-hard problem of deciding if an $\mathcal{EL}$ KB with closed concept names is satisfiable  \cite{NgoOrSi} to \dec{I}{opt}{p} and \dec{I}{opt}{c}. 
Our reductions are adapted from the proof of the \exptime-hardness of concept cardinality query answering from \cite[Theorem 42]{Maniere}. 
\end{proof}

Finally, we strengthen the lower bounds for CQs and the variants of certain semantics, matching the upper bounds, by adapting a proof for the \twoexptime-hardness of CQ evaluation on circumscribed $\mathcal{EL}$ KBs. A circumscribed KB specifies that some predicates should be minimized, that 
is, the extensions of these predicates in models of the circumscribed KB must be minimal regarding set inclusion.

\begin{restatable}{theorem}{ThCQAcertainLowerBoundCombined}	\label{thm:CQAcertainLowerBoundCombined}
\dec{C}{b}{c} and \dec{C}{opt}{c} for $\mathcal{EL}_\bot$ are \twoexptime-hard in combined complexity (even if  the bound $k$ and the weights are encoded in unary).
\end{restatable}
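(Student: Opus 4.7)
The strategy is to reduce from a known \twoexptime-hard problem, namely conjunctive query entailment over circumscribed $\mathcal{EL}$ knowledge bases (see, e.g., Bonatti--Lutz--Wolter). Recall that in a circumscribed KB a set $M$ of concept names is declared \emph{minimized}, and one considers only models in which the extensions of these concepts are minimal (usually w.r.t.\ set inclusion). The core observation is that our opt-cost certain semantics can simulate such minimization by penalizing occurrences of the minimized concepts with a soft inclusion into $\bot$.

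Concretely, given a circumscribed $\mathcal{EL}$ KB $\langle \T, \A\rangle$ with minimized concepts $M$ and a BCQ $q$, I build a WKB $\wkb^\dagger = (\langle \T^\dagger,\A^\dagger\rangle,\omega)$ as follows: take $\T^\dagger = \T \cup \{A \sqsubseteq \bot \mid A \in M\}$, $\A^\dagger = \A$, and set $\omega(\chi)=\infty$ for every $\chi \in \T \cup \A$ and $\omega(A\sqsubseteq\bot)=1$ for every $A \in M$. Any interpretation that fails $\T$ or $\A$ has infinite cost, and any model $\I$ of $\langle \T,\A\rangle$ has cost exactly $\sum_{A \in M} |A^\I|$. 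Consequently, the optimal-cost interpretations of $\wkb^\dagger$ are precisely the models of $\langle\T,\A\rangle$ that minimize the total cardinality of the extensions of the minimized concepts, and $\wkb^\dagger \sat{c}{opt} q$ iff $q$ holds in every such model. Only weights $1$ and $\infty$ are used, so the unary/binary distinction for weights is vacuous here.

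The main obstacle is bridging the gap between \emph{set-inclusion} minimization (used by circumscription) and our \emph{cardinality} minimization: in general a set-inclusion-minimal extension need not be cardinality-minimal, so the two sets of selected models can differ. The plan is to revisit the gadgets of the existing \twoexptime-hardness reduction and verify that the constructed instances already exhibit the stronger property that, once the non-minimized part of the interpretation is fixed, each minimized concept admits a unique smallest extension; on such instances cardinality- and inclusion-minimality coincide, so the entailment of $q$ in the circumscribed KB transfers to $\wkb^\dagger \sat{c}{opt} q$. If the original reduction does not have this property directly, it can be patched by replacing each minimized concept by a chain of auxiliary concepts of strictly increasing weight, which forces a lexicographic refinement equivalent to inclusion-minimization on the intended instances.

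For \dec{C}{b}{c}, we use the same WKB $\wkb^\dagger$ with cost bound $k := \optc{\wkb^\dagger}$; since the minimized-concept extensions in the hard instances are polynomially bounded in $|\wkb^\dagger|$, this value is polynomial and its unary encoding is of polynomial size. Under this choice, the interpretations with cost $\leq k$ are exactly the optimal-cost interpretations, so $\wkb^\dagger \sat{c}{k} q$ iff $\wkb^\dagger \sat{c}{opt} q$, transferring the lower bound to \dec{C}{b}{c} with unary-encoded weights and bound.
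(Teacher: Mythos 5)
Your high-level strategy is the same as the paper's: reduce from \twoexptime-hard CQ entailment over circumscribed $\mathcal{EL}$ KBs (itself obtained from closed predicates), simulate minimization by soft axioms $A \sqsubseteq \bot$ of weight $1$, and observe that only weights $1$ and $\infty$ are needed. However, there are two genuine gaps. First, the step you explicitly defer --- checking that on the hard instances set-inclusion minimality and cardinality minimality select the same models --- is precisely the substance of the proof, not a routine verification. The paper does not use the circumscribed problem as a black box: it re-runs the reduction of Lutz et al.\ from closed predicates (which produces a \emph{single} minimized concept $M$ and a modified query $q'$), establishes the exact lower bound $\optc{\wkb} \geq |\bigcup_{A\in \Sigma} A_{\tilde{\A}}| + |var(q)| + 1$ on the cost, exhibits an interpretation attaining it, and re-proves the analogue of the ``Claim 2'' normalization for optimal-cost interpretations. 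Without this instance-specific analysis the reduction is not established. Moreover, your fallback ``patch'' --- a chain of auxiliary concepts with strictly increasing weights enforcing a lexicographic refinement --- would require exponentially large weights (as in the paper's $\deltaptwo$-hardness proof, which needs $u^{n+2-i}$), and this directly contradicts the claim that the lower bound holds with \emph{unary}-encoded weights; so the patch cannot rescue the statement as formulated.

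Second, the reduction to \dec{C}{b}{c} is not well-defined as a many-one reduction: you set $k := \optc{\wkb^\dagger}$, but computing the optimal cost of the constructed WKB is not a polynomial-time operation (it essentially encodes satisfiability of the closed-predicate KB, which is itself \exptime-hard); knowing only that $\optc{\wkb^\dagger}$ is polynomially \emph{bounded} does not let you write it down. The paper circumvents this with a polynomial-time \emph{Turing} reduction: it brackets $\optc{\wkb}$ between two polynomially computable values, determines the exact value by polynomially many oracle calls of the form $\wkb \sat{c}{\ell} B(b)$ for fresh $B,b$ (using the fact that this holds iff $\ell < \optc{\wkb}$), and then makes one final oracle call with the discovered bound; \twoexptime-hardness follows because the class is closed under such reductions. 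You need either this device or some other way to supply the bound $k$ without computing the optimal cost.
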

\begin{proof}[Proof sketch]
We adapt the proof of the \twoexptime-hardness of CQ evaluation on circumscribed $\mathcal{EL}$ KBs from \cite[Theorem 2]{lutz2023querying}, which proceeds by reduction from the \twoexptime-hard problem of BCQ entailment for $\mathcal{EL}$ KBs with closed predicates \cite{NgoOrSi}. 
\end{proof}


\section{Data Complexity}\label{sec:data}
We now turn our attention to the data complexity of the decision problems we consider. 
The second line of Table~\ref{tab:combinedResults} gives an overview of the results. 
Recall that data complexity takes only into account the size of the weighted ABox $|\Amc_\omega|$. 
In particular, for problems that have an integer $k$ as part of their input, we consider that $k$ is fixed. We will discuss the complexity \wrt  $|\Amc_\omega|$ and $k$ at the end of the section.

\subsection{Upper Bounds}

To obtain the data complexity upper bounds, our general approach is to guess a `small' interpretation of bounded or optimal cost, that satisfies or does not satisfy the query. 
It follows from Proposition~\ref{prop:boundedModelALCO} that if $\K_\infty$ is satisfiable,
then $\K_\omega$ admits an interpretation of finite cost whose domain 
is linearly bounded in the size of the ABox.
The following proposition shows that there is always such a bounded-cardinality interpretation which is also of optimal cost.

\begin{restatable}{proposition}{PropBoundedSizekInterpretation}\label{prop:boundedSizekInterpretation}
	Let $\WKB$ be an $\mathcal{ALCO}$ WKB and $k$ an integer. If there exists an interpretation $\I$ with $\cost{\wkb}{\I} \leq k$, then there is $\I'$ with $\cost{\wkb}{\I'} \leq k$ and 
	$|\Delta^{\I'}|\leq |\inds(\K)| + 2^{|\T|}$.
\end{restatable}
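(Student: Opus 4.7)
The plan is to adapt the standard filtration argument used to prove the $\mathcal{ALCO}$ bounded model property (Proposition~\ref{prop:boundedModelALCO}) so that, applied to an interpretation $\I$ with $\cost{\wkb}{\I}\le k$, the resulting quotient $\I'$ has cost no larger than $\cost{\wkb}{\I}$. First, I would fix the set $\mathsf{sub}(\T)$ of subconcepts appearing in $\T$ and define an equivalence relation $\sim$ on $\Delta^\I$ by $d\sim e$ iff $d$ and $e$ agree on membership in every $C\in\mathsf{sub}(\T)$. To handle nominals and preserve assertion (un)satisfaction, I would refine $\sim$ so that every $a^\I$ with $a\in\inds(\K)$ is in a class of its own. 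The number of equivalence classes is then bounded by $|\inds(\K)|+2^{|\mathsf{sub}(\T)|}\le|\inds(\K)|+2^{|\T|}$.

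Next, I would define $\I'$ as the quotient: $\Delta^{\I'}=\Delta^\I/{\sim}$, $a^{\I'}=[a^\I]$, $A^{\I'}=\{[d]\mid d\in A^\I\}$, and $R^{\I'}=\{([d],[e])\mid (d,e)\in R^\I\}$. A routine induction on concept structure (the classical $\mathcal{ALC}$ filtration lemma, with the nominal case handled by the singleton refinement above) shows that for every $C\in\mathsf{sub}(\T)$ and every $d\in\Delta^\I$ one has $d\in C^\I$ iff $[d]\in C^{\I'}$. Applying this to the violation concept $V_\tau=B\sqcap\neg D$ of each axiom $\tau=B\sqsubseteq D\in\T$ yields $V_\tau^{\I'}=\{[d]\mid d\in V_\tau^\I\}$, and since the map $d\mapsto[d]$ is surjective onto $V_\tau^{\I'}$, we get $|\vio{\tau}{\I'}|\le|\vio{\tau}{\I}|$.

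Finally, I would check that assertion satisfaction is unchanged: because $a^{\I'}=[a^\I]$ and concept/role membership is preserved for individual representatives, $\I'\models\alpha$ iff $\I\models\alpha$ for every $\alpha\in\A$. Summing over axioms and assertions with their weights then gives $\cost{\wkb}{\I'}\le\cost{\wkb}{\I}\le k$, while by construction $|\Delta^{\I'}|\le|\inds(\K)|+2^{|\T|}$. The main delicate point is the treatment of nominals, which forces the singleton-class refinement and hence the additive $|\inds(\K)|$ term in the size bound; once this is in place, the cost inequality is immediate from the fact that filtration can only collapse violation sets, never enlarge them, and that the interpretation of individual names is preserved verbatim.
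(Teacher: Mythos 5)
Your proposal is correct and takes essentially the same route as the paper: the paper's proof applies the $S$-filtration with $S=sub(\T)$ (which keeps every named individual in a class of its own, exactly your singleton refinement), invokes the filtration lemma (Lemma~\ref{lem:mainEquiv}) to get that membership in $C$ and $D$ is preserved for each $\tau = C \sqsubseteq D$, and concludes $|\vio{\tau}{\J}|\leq|\vio{\tau}{\I}|$ and $\vio{\A}{\J}=\vio{\A}{\I}$, hence $\cost{\wkb}{\J}\leq\cost{\wkb}{\I}\leq k$ with the same domain bound $|\inds(\K)|+2^{|\T|}$. The only cosmetic difference is that the paper keeps the individual names themselves as domain elements (to respect the weak SNA) rather than their singleton classes.
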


The upper data complexity bound for bounded cost satisfiability follows directly.
\begin{theorem}
	\BCS for $\mathcal{ALCO}$ is in \np in data complexity (even with a binary encoding of weights). 
\end{theorem}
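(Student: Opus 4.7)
The plan is to use Proposition~\ref{prop:boundedSizekInterpretation} to obtain a small model property, and then design a standard guess-and-check \np algorithm. In data complexity, the TBox $\T$ is fixed, so $2^{|\T|}$ is a constant and the bound $|\inds(\K)| + 2^{|\T|}$ given by Proposition~\ref{prop:boundedSizekInterpretation} is linear in $|\A_\omega|$. Consequently, if $\wkb$ is $k$-satisfiable, then it is witnessed by an interpretation $\I'$ whose domain has size polynomial in $|\A_\omega|$, and which can be represented by polynomially many bits (one bit per element-concept pair and per pair-role pair, over the fixed signature of $\T$ together with the symbols occurring in $\A$).

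The algorithm I would write nondeterministically guesses such an $\I'$: the domain $\Delta^{\I'}$ (of polynomial size), the interpretation $a^{\I'}$ of each $a \in \inds(\K)$, and the extensions $A^{\I'}$ and $R^{\I'}$ for every concept and role name in $\signatureof[\K]$. It then deterministically computes $\cost{\wkb}{\I'}$ and accepts iff this value is at most $k$. For the verification step, I would proceed as follows: (i) for each $\alpha \in \A$, check in constant time whether $\I' \models \alpha$ and, if not, add $\omega_\alpha$ to a running total; (ii) for each $\tau = C \sqsubseteq D \in \T$ (constantly many in data complexity), evaluate the violation concept $V_\tau = C \sqcap \neg D$ over $\I'$ using the standard bottom-up \ALCO semantics, count $|V_\tau^{\I'}|$, and add $\omega_\tau \cdot |V_\tau^{\I'}|$ to the total. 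Since $\T$ is fixed and $|\Delta^{\I'}|$ is polynomial, each concept evaluation takes polynomial time.

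The main point I would be careful about is the binary encoding of weights. Even though individual weights $\omega_\chi$ may be exponential in the length of their own encoding, each one has a binary representation of size at most $|\A_\omega|$, and the total cost is a sum of at most $|\A| + |\T| \cdot |\Delta^{\I'}|$ such values, so all arithmetic (additions, one multiplication by a count bounded by $|\Delta^{\I'}|$, final comparison with the fixed $k$) is computable in time polynomial in $|\A_\omega|$. One can even short-circuit as soon as the running total exceeds $k$, avoiding any concern that intermediate sums blow up. Putting these pieces together yields a nondeterministic polynomial-time decision procedure for \BCS, establishing the \np upper bound claimed in the theorem; the only nontrivial ingredient is the small-model property of Proposition~\ref{prop:boundedSizekInterpretation}, and the remaining work is routine bookkeeping.
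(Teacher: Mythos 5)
Your proposal is correct and follows exactly the paper's own argument: guess an interpretation of cardinality at most $|\inds(\K)| + 2^{|\T|}$ (polynomial in $|\A_\omega|$ since $\T$ is fixed) as licensed by Proposition~\ref{prop:boundedSizekInterpretation}, then verify $\cost{\wkb}{\I'} \leq k$ in polynomial time. Your additional care about the arithmetic with binary-encoded weights is a sound elaboration of the paper's one-line verification claim, not a different route.
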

\begin{proof}
Let $\WKB$ be an $\mathcal{ALCO}$ WKB and $k$ an integer. 
By Proposition~\ref{prop:boundedSizekInterpretation}, guessing an interpretation $\I$ of cardinality at most $|\inds(\K)| + 2^{|\T|}$ and checking if $\cost{\wkb}{\I} \leq k$ (which can be done in a polynomial time w.r.t $|\A_\omega|$) is an \np procedure to decide BCS.
\end{proof}

To obtain the results for query entailment, we need to refine Proposition~\ref{prop:boundedSizekInterpretation} to preserve also query (non) entailment.

\begin{restatable}{proposition}{PropCQboundedPossible}\label{prop:CQboundedPossible}
	Let $\WKB$ be an $\mathcal{ALCO}$ WKB, $k$ an integer and $q$ a BCQ.  
	If there exists an interpretation $\I$ with $\cost{\wkb}{\I} \leq k$ and $\I \models q$,  then there is one whose domain has cardinality at most $|\inds(\K)| + 2^{|\T|}$.
\end{restatable}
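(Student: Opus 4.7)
The plan is to extend the filtration argument behind Proposition~\ref{prop:boundedSizekInterpretation} so that it additionally preserves a match of $q$. Fix an interpretation $\I$ with $\cost{\wkb}{\I}\le k$ and $\I\models q$, and fix a match $\pi$ of $q$ in $\I$. I define the concept closure $\mathrm{cl}$ to contain all subconcepts of concepts appearing in $\T$ (in particular every violation concept $V_\tau=B\sqcap\neg C$ for $\tau=B\sqsubseteq C\in\T$), the nominal $\{a\}$ for every $a\in\inds(\K)$, and each concept name occurring in $q$. Put $d\sim e$ iff $d$ and $e$ agree on every $C\in\mathrm{cl}$, and build the filtrated interpretation $\J$ with domain $\Delta^\I/\!\sim$, setting $[d]\in A^\J$ iff $d\in A^\I$ when $A\in\mathrm{cl}$, $([d],[e])\in R^\J$ iff $(d',e')\in R^\I$ for some $d'\sim d,\, e'\sim e$, and $a^\J=[a^\I]$. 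Symbols not in $\mathrm{cl}$ that occur in $\A$ are interpreted to match $\I$ on named individuals (which is unambiguous, since each class containing some $a^\I$ is a singleton thanks to the nominals in $\mathrm{cl}$) and freely elsewhere. Because each $a^\I$ is its own class and the anonymous classes are characterised by a type over subconcepts of $\T$, we have $|\Delta^\J|\le|\inds(\K)|+2^{|\T|}$.

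The standard filtration lemma, proved by structural induction on $C\in\mathrm{cl}$, then yields $[d]\in C^\J$ iff $d\in C^\I$. Applied to the $V_\tau$'s, this gives $|V_\tau^\J|\le|V_\tau^\I|$ because the quotient map restricts to a surjection $V_\tau^\I\to V_\tau^\J$, so the TBox contribution to the cost does not increase. Assertion satisfaction is preserved by construction: $a^\I\in A^\I$ implies $[a^\I]\in A^\J$, and $(a^\I,b^\I)\in R^\I$ implies $([a^\I],[b^\I])\in R^\J$ by the witness-based definition of $R^\J$. Therefore $\cost{\wkb}{\J}\le\cost{\wkb}{\I}\le k$. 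Finally, the composed map $\pi'=[\cdot]\circ\pi$ is a match of $q$ in $\J$: a concept atom $A(t)\in q$ is preserved since $A\in\mathrm{cl}$, and a role atom $R(t,t')\in q$ is preserved because the pair $(\pi(t),\pi(t'))\in R^\I$ itself witnesses $([\pi(t)],[\pi(t')])\in R^\J$.

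The main obstacle is establishing the filtration lemma for $\mathcal{ALCO}$ concepts that mix negations (inside the $V_\tau$'s), nominals, and existentials at once; this is essentially the technical core of the proof of Proposition~\ref{prop:boundedModelALCO} and carries over here with care in the inductive cases for $\forall R.C$ and $\exists R.C$ using the witness-based role definition. A secondary subtlety is keeping the bound at $|\inds(\K)|+2^{|\T|}$ rather than blowing it up by factors depending on $|q|$ or $|\A|$: concept and role names lying outside $\T$ must not enlarge the anonymous type space, which is secured by defining them freely on anonymous classes while preserving their interpretation on the named-individual singletons—this is enough for both cost and query fidelity.
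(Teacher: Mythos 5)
Your overall strategy is the paper's: take the $S$-filtration of $\I$, observe that it keeps the domain within $|\inds(\K)|+2^{|\T|}$ and does not increase the cost (this is Propositions~\ref{prop:boundedModelALCO} and~\ref{prop:boundedSizekInterpretation}), and then push the match of $q$ forward through the quotient map. The cost argument via surjectivity of $d\mapsto[d]$ on each $V_\tau^\I$ and the preservation of role atoms via the witness-based definition of $R^\J$ are both fine and match the paper.

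There is, however, one concrete inconsistency in your construction. You put every concept name occurring in $q$ into the closure $\mathrm{cl}$ and define $d\sim e$ as agreement on \emph{all} of $\mathrm{cl}$. Then the anonymous equivalence classes are characterised by types over $\mathrm{cl}$, not over $\mathit{sub}(\T)$, so the bound you can actually extract is $|\inds(\K)|+2^{|\mathrm{cl}|}$, which exceeds $|\inds(\K)|+2^{|\T|}$ whenever $q$ mentions concept names absent from $\T$. Your closing remark that symbols outside $\T$ are ``defined freely on anonymous classes'' so as not to enlarge the type space contradicts the fact that those very symbols participate in the equivalence relation you defined. The fix is exactly what the paper does: keep $S=\mathit{sub}(\T)$ (so the type space stays at $2^{|\T|}$) and interpret \emph{every} concept name $A$ --- including those foreign to $\T$ --- by the existential clause $A^\J=\{[d]_S\mid \exists d'\in[d]_S,\ d'\in A^\I\}$ (together with the named-individual part). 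A query match is a homomorphism, so only the one-directional implication $d\in A^\I\Rightarrow[d]_S\in A^\J$ is needed to transport $\pi$ to $\tilde\pi=[\cdot]\circ\pi$; the biconditional of the filtration lemma is only required for concepts in $\mathit{sub}(\T)$, where it secures the cost bound. Once you make that adjustment (and similarly stop requiring an ``iff'' for query concept names), your argument goes through and coincides with the paper's proof.
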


Preserving query non-entailment is more complex. 
Contrary to the proofs of Propositions \ref{prop:boundedSizekInterpretation} and \ref{prop:CQboundedPossible} that build on the notion of filtration used to show the bounded model property of $\mathcal{ALC(O)}$, 
the following proposition relies on a non-trivial adaptation of constructions introduced in the context of counting queries
\cite{biem,Maniere}. The latter work shows how to convert a (potentially infinite) 
interpretation into a finite one while avoiding the introduction of new query matches. 
In our case, we must further prevent new violations of soft TBox axioms. 

\begin{restatable}{proposition}{PropCQboundedCertain}\label{prop:CQboundedCertain}
	Let $\WKB$ be an $\mathcal{ALCO}$ WKB, $k$ an integer, and $q$ a BCQ. If there exists an interpretation~$\I$ with $\cost{\wkb}{\I} \leq k$ and $\I \not \models q$, 
	 then there is one whose domain has cardinality that is bounded polynomially in $|\A|$ and $k$ (with $|\T|$ and $|q|$ treated as constants). 
\end{restatable}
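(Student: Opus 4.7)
The plan is to transform the given witness $\I$ into a reduced interpretation $\I'$ with $|\Delta^{\I'}|$ polynomial in $|\A|$ and $k$, while preserving $\cost{\wkb}{\I'}\leq k$ and $\I'\not\models q$. The first step is to single out a set $E \subseteq \Delta^\I$ of \emph{essential} elements: the interpretations of the individuals in $\inds(\K)$ together with every $d \in \Delta^\I$ lying in $V_\tau^\I$ for some $\tau \in \T$. Since $\cost{\wkb}{\I} \leq k < \infty$ and each violator of a soft axiom $\tau$ contributes at least $\omega_\tau \geq 1$ to the cost, we have $\sum_{\tau \in \T \setminus \T_\infty} |V_\tau^\I| \leq k$ and no violators of axioms in $\T_\infty$. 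Hence $|E| \leq |\inds(\K)| + k$, which is polynomial in $|\A|$ and $k$.

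The second step is to fill $\Delta^{\I'}$ with a constant number of additional representatives and to define $\I'$ on that domain. Adapting the filtration construction behind Proposition~\ref{prop:boundedModelALCO} and combining it with the folding technique for counting queries in \cite{biem,Maniere}, I would classify every $d \in \Delta^\I \setminus E$ by an \emph{enriched type} $\theta(d)$ recording (i) the subconcepts of $\T$ satisfied by $d$ in $\I$ and (ii) for each variable $x$ of $q$, a tag marking whether $d$ can serve as the image of $x$ in some partial homomorphism of $q$ into $\I$. Since $|\T|$ and $|q|$ are treated as constants, there are only constantly many enriched types; within each class it suffices to keep a constant number of representatives (determined by $|q|$). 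Take $\Delta^{\I'}$ to be $E$ together with these representatives, set $A^{\I'} = A^\I \cap \Delta^{\I'}$, and define $R^{\I'}$ by restricting $R^\I$ to $\Delta^{\I'}$ while re-routing any edge whose endpoint has been removed towards a chosen representative of the same enriched type. A structural induction analogous to the standard filtration argument shows that the $\mathcal{ALCO}$-type of every surviving element is preserved, so no soft axiom acquires a new violator, no axiom in $\T_\infty$ is newly violated, and $\cost{\wkb}{\I'} \leq \cost{\wkb}{\I} \leq k$.

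The main obstacle is showing that $\I' \not\models q$. The re-routing of role edges at the boundary of $E$ introduces new role tuples in $\I'$ and could a priori enable a fresh homomorphism $\pi : q \to \I'$. The remedy is to lift any such $\pi$ back to a homomorphism $\pi^\ast : q \to \I$: for each variable $x$ with $\pi(x)$ equal to a representative $d$, the query-oriented tag in $\theta(d)$ supplies a candidate pre-image in $\I$, and keeping enough representatives per enriched type allows these choices to be coordinated across all variables so that every atom of $q$ lands on a role tuple or concept membership already present in $\I$, contradicting $\I \not\models q$. This coordinated lifting, together with the choice of re-routing that simultaneously preserves endpoint types and blocks assembly of a $q$-match, is precisely where the adaptation of \cite{biem,Maniere} to $\mathcal{ALCO}$ becomes genuinely non-trivial, since the presence of negation inside $\mathcal{ALCO}$ axioms means that adding role edges cannot be done freely without risking new violations, so the merging must reconcile two tensions absent from the positive $\mathcal{EL}$ setting.
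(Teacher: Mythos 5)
Your first step (the set $E$ of individuals plus all violators, of size at most $|\inds(\K)|+k$) matches the paper's ``domain of interest'' $\deltastar$ exactly, and your instinct that these elements must be frozen is right. The gap is in the second and third steps. Merging the remaining elements by an \emph{enriched type} that records (i) the subconcepts satisfied and (ii) a per-variable tag ``$d$ can be the image of $x$ in some partial homomorphism'' is too coarse to rule out new matches of $q$. A match of a CQ is a global object: after you re-route role edges towards type-representatives, the new edges can act as shortcuts that assemble a match out of fragments that were \emph{individually} realizable in $\I$ but never simultaneously realizable there. Your tags certify only the individual realizability of each variable, so the ``coordinated lifting'' you invoke is precisely the statement that needs proof, and it is false for type-based merging in general (think of a query asking for an $R$-path from an $A$-element to a $B$-element in an interpretation where such elements are only connected by paths passing through many distinct intermediate types; collapsing by type creates the forbidden path). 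This is exactly why the paper does \emph{not} reuse the filtration underlying Propositions~\ref{prop:boundedModelALCO}--\ref{prop:CQboundedPossible}: it first applies the interlacing/unfolding construction of Mani\`ere to replace $\I$ by an interpretation that is forest-shaped outside $\deltastar$ and maps homomorphically \emph{into} $\I$ (so it trivially has no $q$-match and no new violations), and only then quotients by an equivalence $\sim_{|q|+1}$ that identifies two elements only when their entire neighbourhoods of radius about $|q|$ are isomorphic, \emph{including which specific elements of $\deltastar$ they attach to}. That last point also breaks your counting: because the neighbourhood type must name the nearby elements of $\deltastar$, the number of equivalence classes is polynomial in $|\deltastar|$ with an exponent depending on $|\T|$ and $|q|$ (the paper's bound is of the form $(2|q|+3)\cdot|\T|^{|q|+2}\cdot(|\deltastar|+2^{|\signatureof|}+1)^{|\T|^{2|q|+2}}$), not ``constantly many types with constantly many representatives.'' If the type omitted the identity of the adjacent essential elements, merging would again create new role edges between different violators or ABox individuals and hence new query matches.

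A secondary, more minor issue: your claim that the re-routing preserves every element's $\mathcal{ALCO}$-type ``by a structural induction analogous to the standard filtration argument'' does not directly apply, because your construction is not the filtration (you keep several representatives per class and redirect edges), so type preservation under $\forall R.C$ and negated concepts would have to be re-established for the modified edge relation. In the paper this is sidestepped: the interlacing is shown to be a \emph{model} of a normalized KB $\Knew$ in which each soft axiom $\tau$ is replaced by a fresh concept $A_\tau$ marking its violators, so containment of violations in $\deltastar$ and preservation under the quotient follow from $A_\tau$ being an ordinary concept name rather than from a bespoke induction. I would suggest restructuring your argument around these two ingredients (unfold first, then quotient by bounded-radius neighbourhood equivalence relative to $\deltastar$) rather than trying to repair the tag-based merging.
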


We are now ready to prove the data complexity upper bounds for the $k$-cost-bounded semantics.

\begin{theorem}\label{thm:data-ubCk}
	\dec{C}{b}{p} for $\mathcal{ALCO}$ is in \np in data complexity and \dec{C}{b}{c} for $\mathcal{ALCO}$ is in \conp in data complexity (even if the weights are encoded in binary).
\end{theorem}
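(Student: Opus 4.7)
The plan is to design standard guess-and-check algorithms based on the small-model properties already established in Propositions~\ref{prop:CQboundedPossible} and~\ref{prop:CQboundedCertain}. In data complexity, the TBox $\T$, the query $q$, and the bound $k$ are treated as constants, so any domain size polynomial in $|\A|$ (with $|\T|$, $|q|$, $k$ fixed) is polynomial in the input size.

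For \dec{C}{b}{p}, I would argue that $\wkb \sat{p}{k} q$ iff there exists an interpretation $\I$ with $\cost{\wkb}{\I} \leq k$ and $\I \models q$. By Proposition~\ref{prop:CQboundedPossible}, if such an $\I$ exists then one may be chosen whose domain has cardinality at most $|\inds(\K)| + 2^{|\T|}$, which is polynomial in $|\A_\omega|$ since $|\T|$ is constant. The NP algorithm then guesses such an interpretation (it can be represented in polynomial size, listing the interpretation of each concept and role name over the guessed domain) and verifies in polynomial time that (i)~$\cost{\wkb}{\I} \leq k$ and (ii)~$\I \models q$. Checking the cost takes polynomial time even with binary-encoded weights, because both $|\vio{\tau}{\I}|$ for each (fixed) $\tau\in\T$ and $\vio{\A}{\I}$ can be enumerated in polynomial time over the polynomial-size domain, and the resulting sum has polynomially many terms whose binary encodings are each of polynomial size. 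Checking $\I\models q$ is polynomial because $q$ is fixed.

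For \dec{C}{b}{c}, I would show the complement lies in NP. Observe that $\wkb \not\sat{c}{k} q$ iff there exists $\I$ with $\cost{\wkb}{\I}\leq k$ and $\I\not\models q$. By Proposition~\ref{prop:CQboundedCertain}, if such an $\I$ exists then one can be chosen whose domain has cardinality polynomially bounded in $|\A|$ and $k$, with $|\T|$ and $|q|$ treated as constants; in the data-complexity regime this bound is polynomial in $|\A_\omega|$. The NP procedure thus guesses such a bounded-size $\I$ and checks in polynomial time that $\cost{\wkb}{\I}\leq k$ (as above) and that $\I\not\models q$ (again polynomial since $q$ is fixed). Hence \dec{C}{b}{c} is in \conp.

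The only non-routine ingredients are the small-model bounds of Propositions~\ref{prop:CQboundedPossible} and~\ref{prop:CQboundedCertain}; once these are in place, the main obstacle is a bookkeeping one, namely verifying that the cost computation remains polynomial under binary encoding of the weights. This is handled by the observation that the polynomial-size domain bounds the number of violated axioms and assertions, so the cost is a sum of polynomially many terms each of polynomial bit-length and is therefore computable in polynomial time.
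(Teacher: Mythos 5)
Your proposal is correct and follows essentially the same route as the paper: guess a small interpretation whose existence is guaranteed by Proposition~\ref{prop:CQboundedPossible} (for the possible semantics) or Proposition~\ref{prop:CQboundedCertain} (for the complement of the certain semantics), then verify the cost bound and the query (non-)satisfaction in polynomial time. Your added remark on why the cost computation stays polynomial under binary weight encoding is a useful elaboration of a point the paper leaves implicit, but the argument is the same.
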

\begin{proof}
Let $\WKB$ be an $\mathcal{ALCO}$ WKB, $k$ an integer, and $q$ a BCQ.  
For \dec{C}{b}{p}, we know from Proposition~\ref{prop:CQboundedPossible} that it suffices to guess an interpretation $\I$ whose domain has cardinality at most $|\inds(\K)| + 2^{|\T|}$, 
and check that $\cost{\wkb}{\I} \leq k$ and $\I \models q$ (both checks being possible in polynomial time w.r.t $|\A_\omega|$), yielding an \np procedure. 
The argument is similar for \dec{C}{b}{c}, but uses Proposition~\ref{prop:CQboundedCertain}, which gives the desired polynomial bound in $|\A|$ on the interpretation domain, since $k$ is treated as fixed. 
\end{proof}

For the opt-cost semantics, we use the bound on the optimal cost to compute it by binary search before guessing a `small' interpretation of optimal cost that entails (or does not entail) the query. 
We recall that 
$\Theta^p_2$ is the class of problems which are solvable in polynomial time with at most logarithmically many calls to an \np oracle.

\begin{theorem}
	If there is an ABox-independent bound on the finite weights or weights are encoded in unary, then  \dec{C}{opt}{p} and \dec{C}{opt}{c} for $\mathcal{ALCO}$ are in $\deltaptwolog$ in data complexity.
\end{theorem}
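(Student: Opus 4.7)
The plan is to exploit the bound from Lemma~\ref{lem:boundOptc} to perform binary search for $\optc{\wkb}$ using only logarithmically many calls to an \np oracle, and then make one additional oracle call to check whether the query is (not) entailed under the $\optc{\wkb}$-cost-bounded semantics.

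First I would argue that, under the stated assumption on the finite weights, $\optc{\wkb}$ is polynomial in $|\A_\omega|$. Recall from Lemma~\ref{lem:boundOptc} that $\optc{\wkb} \leq L$ where $L = (|\T||\inds(\K)| + |\T|\,2^{|\T|})\max_{\tau\in\T\setminus\T_\infty}\omega_\tau + |\A|\max_{\alpha\in\A\setminus\A_\infty}\omega_\alpha$ whenever $\K_\infty$ is satisfiable. In data complexity, $|\T|$ and $\max_{\tau\in\T\setminus\T_\infty}\omega_\tau$ are constants, so the first summand is linear in $|\inds(\K)|$. For the second summand, either $\max_{\alpha}\omega_\alpha$ is bounded by a constant (ABox-independent bound), or the weights are encoded in unary, in which case $\max_{\alpha}\omega_\alpha \leq |\A_\omega|$. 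In both cases $L$ is polynomial in $|\A_\omega|$, so $\log L = O(\log |\A_\omega|)$.

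Next, I would compute $\optc{\wkb}$ by binary search on $[0,L]$: each probe asks ``is $\wkb$ $k$-satisfiable?'', which is an instance of \BCS and hence decidable by a single call to an \np oracle (\BCS is in \np in data complexity even with binary encoding of weights, by Theorem~\ref{thm:data-ubCk}'s preceding theorem). We must first handle the special case $\K_\infty$ unsatisfiable, detected by asking \BCS with $k=0$ on $\K_\infty$ (or by noting that binary search over $[0,L]$ returns `no' for every $k \leq L$); in that case $\optc{\wkb} = \infty$ and every interpretation has infinite cost, so the semantics trivialise in the way captured by Proposition~\ref{k-VariesProp}, and the problem is decidable without further oracle calls. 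Otherwise, the binary search uses $O(\log L) = O(\log |\A_\omega|)$ oracle calls and returns the exact value of $\optc{\wkb}$.

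Finally, since any interpretation of finite cost has cost $\geq \optc{\wkb}$, we have $\wkb \sat{p}{opt} q$ iff $\wkb \sat{p}{\optc{\wkb}} q$, and $\wkb \sat{c}{opt} q$ iff $\wkb \sat{c}{\optc{\wkb}} q$. By Theorem~\ref{thm:data-ubCk}, \dec{C}{b}{p} is in \np and \dec{C}{b}{c} is in \conp in data complexity (even for binary weights), so a single additional call to an \np oracle decides each of the two problems. In total, we use $O(\log |\A_\omega|)$ oracle calls interleaved with polynomial-time computation, placing \dec{C}{opt}{p} and \dec{C}{opt}{c} in $\deltaptwolog$. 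The only mildly delicate point is verifying that the bound $L$ remains polynomial in $|\A_\omega|$ under both hypotheses on the weights; once this is in place, the two-phase binary-search-then-oracle-call procedure is routine.
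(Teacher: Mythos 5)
Your proposal is correct and follows essentially the same route as the paper: show that under either hypothesis on the weights the bound $L$ of Lemma~\ref{lem:boundOptc} is polynomial in $|\A_\omega|$, compute $\optc{\wkb}$ by binary search with $O(\log|\A_\omega|)$ calls to an \np oracle for \BCS, then make one final oracle call for the corresponding $k$-cost-bounded problem with $k=\optc{\wkb}$. The one point to tighten is the last step: Theorem~\ref{thm:data-ubCk} is stated for \emph{fixed} $k$, whereas here $k=\optc{\wkb}$ grows with the ABox, so you cannot invoke it verbatim; you must observe, as the paper does, that the \np/\conp procedures behind that theorem still run in polynomial time when $k$ is polynomial in $|\A_\omega|$. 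This is immediate for the possible semantics (the guessed domain is bounded independently of $k$), but for the certain semantics the witness size from Proposition~\ref{prop:CQboundedCertain} is polynomial in $k$, so the weight hypothesis is needed a second time here --- not only to bound the length of the binary search, which is the only role you assign to it.
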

\begin{proof}
Let $\WKB$ be an $\mathcal{ALCO}$ WKB, and $q$ a BCQ. Suppose there is a bound on the maximum finite weights that is independent from $|\A|$, or alternatively, that the weights are encoded in unary. 
Observe that under this assumption, the bound \emph{L}  on the optimal cost given in Lemma~\ref{lem:boundOptc} is polynomial in $|\A_\omega|$. 
	By doing a  binary search using calls to an \np oracle that decides BCS, we can compute the value of 
	$\optc{\wkb}$ with a logarithmic numbers of such calls. 
	It then suffices to make a final call to an \np (resp.\ \conp) oracle that decides \dec{C}{b}{p} (resp.\  \dec{C}{b}{c}) (Theorem~\ref{thm:data-ubCk}) with $k = \optc{\wkb}$.
	Indeed, it is easily seen from the proof of Theorem~\ref{thm:data-ubCk} that the \np / \conp upper bounds hold not only for fixed $k$, but also when $k$ is polynomial in $|\A_\omega|$. 
         We obtain $\deltaptwolog$ procedures for deciding  \dec{C}{opt}{p} and \dec{C}{opt}{c}.
\end{proof}

In the case where the weights are encoded in binary and not bounded independently from the ABox size, we can further show that \optsem{possible} is in $\deltaptwo$ \wrt data complexity (solvable in polynomial time with an \np oracle). 
\begin{theorem}
\dec{C}{opt}{p} for $\mathcal{ALCO}$ is in $\deltaptwo$ in data complexity. 
\end{theorem}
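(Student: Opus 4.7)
The plan is to refine the $\deltaptwolog$ procedure from the previous theorem by replacing the logarithmic-depth binary search with a polynomial-depth one, which still fits within $\deltaptwo$ (that is, $P^{NP}$). Given $\WKB$ and a BCQ $q$, we first compute $\optc{\wkb}$ and then make one final $\np$ oracle call to \dec{C}{b}{p} with $k = \optc{\wkb}$. If $\Kmc_\infty$ is unsatisfiable (detectable by a single BCS query restricted to the hard part, or equivalently by verifying that no $k$ admits a $k$-satisfiable witness), we return trivially; otherwise $\optc{\wkb}$ is finite and bounded by the quantity $L$ of Lemma~\ref{lem:boundOptc}.

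In data complexity, $|\T|$ is a constant, so the formula for $L$ reduces to something of the form $c_1 \cdot |\A| \cdot \max_{\alpha} \omega_\alpha + c_2$, which is at most exponential in $|\A_\omega|$ when weights are encoded in binary, but crucially has binary representation of size polynomial in $|\A_\omega|$. Binary search over the interval $[0, L]$ therefore requires $O(\log L) = O(|\A_\omega|)$ iterations, i.e.\ polynomially many $\np$ oracle calls, each resolving a BCS instance whose bound is a midpoint of the current interval. Since BCS in $\ALCO$ is in $\np$ in data complexity with binary weights (and the BCS proof goes through unchanged when the input bound is also given in binary, because the check $\cost{\wkb}{\I} \leq k$ only requires comparing two integers whose binary encodings are polynomially bounded), each call is a legitimate $\np$ oracle query. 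After this polynomial-depth binary search we have the exact value of $\optc{\wkb}$ in binary, of polynomial size in $|\A_\omega|$.

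The final step is a single call to an $\np$ oracle for \dec{C}{b}{p} with input $\wkb$, $q$ and $k = \optc{\wkb}$. The main (and only) obstacle is to argue that this oracle call is legitimate, since Theorem~\ref{thm:data-ubCk} was stated for a fixed $k$ or for $k$ polynomial in $|\A_\omega|$, whereas here $\optc{\wkb}$ may be exponential in value. However, inspecting the proof of Theorem~\ref{thm:data-ubCk}: the procedure guesses an interpretation whose domain has cardinality at most $|\inds(\K)| + 2^{|\T|}$ (polynomial in $|\A_\omega|$ since $|\T|$ is constant), then computes $\cost{\wkb}{\I}$ and compares it to $k$. The cost is a sum of polynomially many products of integers whose binary encodings are polynomial in $|\A_\omega|$, so both the computation and the comparison with $k$ (given in binary of polynomial size) run in polynomial time. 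Thus \dec{C}{b}{p} remains in $\np$ even when $k$ is given in binary of polynomially bounded size, completing the $\deltaptwo$ upper bound.
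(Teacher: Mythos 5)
Your proposal is correct and follows essentially the same route as the paper's proof: a binary search over $[0,L]$ using polynomially many \np oracle calls to BCS (since $L$ is at most exponential in $|\A_\omega|$, its logarithm is polynomial), followed by one final \np oracle call to \dec{C}{b}{p} with $k=\optc{\wkb}$, justified by the key observation that the guessed interpretation's domain cardinality is bounded independently of $k$ so the \np bound survives an exponentially large bound. The only (harmless) additions are your explicit handling of the case where $\K_\infty$ is unsatisfiable and the remark that the cost comparison is a polynomial-time integer comparison, both of which the paper leaves implicit.
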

\begin{proof}
Let $\WKB$ be an $\mathcal{ALCO}$ WKB, $q$ a BCQ, and let $L$ be the bound on the optimal cost given in Lemma~\ref{lem:boundOptc}. 
The weights of assertions are encoded in $\Amc_\omega$ so $\max_{\alpha\in\A\setminus\A_\infty}(\omega_\alpha)$ and thus $L$ 
is at most exponential in $|\A_\omega|$ (in the case of binary encoding). 
By doing a binary search using calls to an \np oracle that decides BCS we can compute 
$\optc{\wkb}$ with a numbers of such calls logarithmic in $L$ (so a polynomial \wrt $|\A_\omega|$). Therefore using a final call to an \np oracle that decides \dec{C}{b}{p} (Theorem~\ref{thm:data-ubCk}) with $k = \optc{\wkb}$ we obtain a $\deltaptwo$ procedure to decide \dec{C}{opt}{p}. 
Indeed, even if $k$ might be exponential \wrt  $|\A_\omega|$, the proof of Theorem~\ref{thm:data-ubCk} for the  \bsem{\emph{possible}} relies on guessing an interpretation whose domain cardinality is bounded independently from $k$, so the \np upper bound for \dec{C}{b}{p} holds even if $k$ depends arbitrarily on $|\A_\omega|$. 
\end{proof}

We leave open the question of whether the same upper bound can be obtained for \optsem{certain}. The reason is that  the proof of Theorem~\ref{thm:data-ubCk} for the \bsem{\emph{certain}} relies on guessing an interpretation whose domain cardinality is bounded polynomially in $k$, hence exponentially in $|\A_\omega|$ when $k = \optc{\wkb}$ and the weights are encoded in binary and not bounded independently from $|\A|$.

\subsection{Lower Bounds}

We start by showing that bounded cost satisfiability is \np-hard, using an adaptation of the proof of the \np-hardness of cardinality query answering \cite[Theorem 48]{Maniere}.

\begin{theorem}\label{thm:lbDataBCS}
	BCS for $\mathcal{EL}_\bot$ is \np-hard in data complexity (even with a unary encoding of weights).
\end{theorem}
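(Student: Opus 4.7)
The plan is to reuse the reduction employed in the proof of Theorem~\ref{thm:lowerboundBCS} and combine it with the data-complexity lower bound for concept cardinality query answering established in~\cite[Theorem 48]{Maniere}. Given an $\mathcal{EL}$ KB $\K = \langle \T, \A\rangle$, a concept name $A$, and a threshold $m \in \mathbb{N}$, we form the $\ELbot$ WKB $\wkb' = (\langle \T \cup \{A \sqsubseteq \bot\}, \A\rangle, \omega)$ with $\omega(\chi) = \infty$ for every $\chi \in \T \cup \A$ and $\omega(A \sqsubseteq \bot) = 1$. Exactly as in the proof of Theorem~\ref{thm:lowerboundBCS}, the finite-cost interpretations of $\wkb'$ are precisely the models of $\K$, and for any such $\I$ we have $\cost{\wkb'}{\I} = |A^\I|$. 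Hence $\wkb'$ is $(m-1)$-satisfiable iff $\K$ admits a model $\I$ with $|A^\I| < m$, i.e., iff $[m,\infty]$ is \emph{not} a certain answer to $q_A$ w.r.t.\ $\K$.

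Checking that this reduction behaves correctly in the data-complexity regime with unary weights is routine. The TBox of $\wkb'$ is obtained from $\T$ by adding a single fixed axiom, so fixing $\T$ on the source side fixes the TBox on the target side; the only weights used are $1$ and $\infty$, so unary encoding of the finite weight incurs no blow-up; and the weighted ABox $\A_\omega$ has size linear in $|\A|$. By \cite[Theorem 48]{Maniere}, deciding whether $[m,\infty]$ is a certain answer is coNP-hard in data complexity; taking the complement yields NP-hardness of BCS with bound $k = m-1$.

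The only real verification needed is that Maniere's hard instances produce threshold values $m$ that are polynomially bounded in $|\A|$ (so that the bound $k$ of the BCS instance remains polynomial in $|\A_\omega|$, as required for a genuine data-complexity reduction). I expect this to be the main (and quite mild) obstacle. Should this not follow directly from the statement of~\cite[Theorem 48]{Maniere}, the plan B is to redo the combinatorial core of that proof inside our setting: start from a standard NP-complete problem such as $\mathsf{3SAT}$ and encode candidate solutions as interpretations of $\wkb'$, arranging the ABox so that satisfying assignments correspond precisely to interpretations whose number of violations of the single unit-weight axiom $A \sqsubseteq \bot$ is at most~$k$. Either route yields NP-hardness of BCS for $\ELbot$ in data complexity, even with weights encoded in unary.
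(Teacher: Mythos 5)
Your high-level idea is the same as the paper's: the paper's proof of this theorem is explicitly an adaptation of the 3-colorability reduction behind \cite[Theorem 48]{Maniere}, and your ``plan B'' is essentially that adaptation stated as an intention rather than carried out. The black-box route you propose first has one genuine problem, and it is precisely the point you flag but then mis-state. Under the paper's convention for data complexity, the bound $k$ is \emph{fixed}, not part of the input (``for problems that have an integer $k$ as part of their input, we consider that $k$ is fixed''; the variant where $k$ is an input is only discussed separately at the end of Section~5). So it is not enough that Maniere's hard instances have thresholds $m$ ``polynomially bounded in $|\A|$'': if $m$ grows with the ABox, your reduction produces BCS instances whose bound $k=m-1$ varies with the data, and you have only shown hardness of the weaker problem where $k$ is an input, not of BCS for some fixed $k$ and fixed weighted TBox. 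What you actually need to verify is that the hard instances use a \emph{constant} threshold (independent of the ABox). This is in fact the case for the standard 3-colorability-based construction (the paper's own proof uses $k=3$ with a fixed TBox), but your proposal neither verifies it nor states the right condition to verify.

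The remaining content of your proposal is correct: the translation $\wkb' = (\langle \T \cup \{A \sqsubseteq \bot\}, \A\rangle, \omega)$ with $\omega(A\sqsubseteq\bot)=1$ and all other weights $\infty$ does give $\cost{\wkb'}{\I}=|A^\I|$ for models of $\K$ (and cost $\infty$ otherwise), so $(m-1)$-satisfiability of $\wkb'$ is equivalent to $[m,\infty]$ not being a certain answer, and only weights $1$ and $\infty$ are used. But the proof as written is incomplete: its main route rests on an unverified (and incorrectly formulated) property of an external construction, and its fallback is a one-sentence promise to redo that construction. The paper's proof is exactly that fallback made concrete --- a fixed $\ELbot$ TBox encoding 3-colorability, an ABox built from the graph, three ``color'' individuals forced into the unit-weight concept $B$, and a two-direction argument that the graph is 3-colorable iff the WKB is $3$-satisfiable --- and none of those details appear in your proposal.
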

\begin{proof}
We reduce the 3-colorability problem to \BCS. Let $\mathcal{G} = (\mathcal{V}, \mathcal{E})$ be a graph, and define $\WKB$ with:
\begin{align*}
\T=&\{\exists R.C_i \sqcap \exists E.(\exists R.C_i) \sqsubseteq B \mid 1\leq i\leq 3\}\cup
\\&
\{A \sqsubseteq \exists R.B, B \sqsubseteq \bot\}\\
\A=& \{ A(v) \mid v \in \mathcal{V}\} \cup \{E(v_1,v_2) \mid \{ v_1,v_2 \} \in \mathcal{E} \} \cup
\\& \{C_1(c_1),C_2(c_2),C_3(c_3),B(c_1),B(c_2),B(c_3) \}
\end{align*}
$\omega(B \sqsubseteq \bot)=1$ and $\omega(\chi)=\infty$ for all other $\chi\in\T\cup\A$.

We show that $\wkb$ is 3-satisfiable iff $\mathcal{G}$ is 3-colorable.

\noindent($\Leftarrow$) If $\mathcal{G}$ is 3-colorable, let $\rho: \mathcal{V} \mapsto \{c_1,c_2,c_3\}$ be a 3-coloring of $\mathcal{G}$ and let $\I_\rho$ be the interpretation that satisfies exactly the assertions of $\A\cup\{R(v, \rho(v))\mid v \in \mathcal{V}\}$.  We show that $\I_\rho\models \K_\infty$, so that $\cost{\wkb}{\I_\rho} = |B^{\I_\rho}|= 3$. 
By construction, $\I_\rho\models\Amc$ and 
since $A^{\I_\rho}=\mathcal{V}$, the $R(v, \rho(v))$ assertions ensure that $\I_\rho\models A \sqsubseteq \exists R.B$. Moreover, by definition of $\rho$, there is no monochromatic edge, which ensures that $(\exists R.C_i \sqcap \exists E.(\exists R.C_i))^{\I_\rho}=\emptyset$ for $i\in\{1,2,3\}$.

\noindent($\Rightarrow$) If $\wkb$ is 3-satisfiable, since $\{c_1,c_2,c_3\}\subseteq B^\I$ for any $\I$ of finite cost, then there exists $\I$ such that $\cost{\wkb}{\I} = 3$ and $B^\I=\{c_1,c_2,c_3\}$. 
For every $v\in \mathcal{V}$, since $\I\models A(v)$ and $\I\models A\sqsubseteq\exists R.B$, then there exists $c\in \{c_1,c_2,c_3\}$ such that $(v,c)\in R^\I$. 
Define a coloring $\rho: \mathcal{V} \mapsto \{c_1,c_2,c_3\}$ by arbitrarily selecting one such $c$ per $v$:  $(v,\rho(v))\in R^\I$ for each $v\in\mathcal{V}$. We show that $\rho$ is a 3-coloring of $\mathcal{G}$. Otherwise, if there was an edge $(v_1,v_2)$ such that $\rho(v_1)=\rho(v_2)=c_i$, 
then $\I\models E(v_1,v_2)\wedge R(v_1,c_i)\wedge R(v_2,c_i) \wedge C_i(c_i)$, \ie  $v_1\in (\exists R.C_i \sqcap \exists E.(\exists R.C_i))^\I$. It would follow that $v_1\in B^\I$ and 
$\cost{\wkb}{\I} \geq 4$, contradicting $\cost{\wkb}{\I} = 3$.
\end{proof}

A direct adaptation of the last proof gives \np and \conp lower bounds for IQ (and thus CQ) answering under the $k$-cost-bounded semantics.

\begin{theorem}
	\dec{I}{b}{p} (\resp \dec{I}{b}{c}) for $\mathcal{EL}_\bot$ is \np-hard (\resp \conp-hard) in data complexity (even with a unary encoding of weights).
\end{theorem}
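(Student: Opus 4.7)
The plan is to adapt the reduction from Theorem~\ref{thm:lowerboundBoundedSem} to the data-complexity setting, reusing the $\ELbot$ WKB $\wkb$ built from a graph $\mathcal{G}$ in the proof of Theorem~\ref{thm:lbDataBCS}: that WKB uses only weights in $\{1,\infty\}$ and satisfies $\wkb$ is $3$-satisfiable iff $\mathcal{G}$ is $3$-colorable. Fix a fresh concept name $B\in\NC\setminus\signatureof[\Kmc]$ and a fresh individual $b\in\NI\setminus\inds(\Kmc)$, take the IQ to be $B(b)$, and fix the cost bound $k=3$.

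For the \conp-hardness of \dec{I}{b}{c}, I would show $\wkb\sat{c}{3} B(b)$ iff $\mathcal{G}$ is not $3$-colorable. The key observation, as in the proof of Theorem~\ref{thm:lowerboundBoundedSem}, is that from any interpretation $\I$ with $\cost{\wkb}{\I}\leq 3$ one can produce another interpretation of the same cost that does not satisfy $B(b)$ — for instance, by re-interpreting $b$ as a fresh domain element left out of $B$, which is harmless since $B$ and $b$ do not occur in any axiom or assertion of $\K$. Hence $\wkb\sat{c}{3} B(b)$ iff no interpretation of cost at most $3$ exists, iff $\wkb$ is not $3$-satisfiable, iff $\mathcal{G}$ is not $3$-colorable, giving a reduction from coNP-complete non-$3$-colorability.

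For the \np-hardness of \dec{I}{b}{p}, I would define $\K'_{\omega'}=(\langle\T,\A\cup\{B(b)\}\rangle,\omega')$ with $\omega'$ extending $\omega$ by $\omega'(B(b))=\infty$, and reduce $3$-colorability of $\mathcal{G}$ to the question $\K'_{\omega'}\sat{p}{3} B(b)$. Any interpretation with cost $\leq 3$ w.r.t.\ $\K'_{\omega'}$ must satisfy $B(b)$ (otherwise the cost is $\infty$), while conversely any interpretation $\I$ with $\cost{\wkb}{\I}\leq 3$ can be adjusted to also satisfy $B(b)$ without changing its cost w.r.t.\ $\wkb$, again because $B, b$ are fresh. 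Thus $\K'_{\omega'}\sat{p}{3} B(b)$ iff $\wkb$ is $3$-satisfiable iff $\mathcal{G}$ is $3$-colorable. Both reductions produce only weights in $\{1,\infty\}$ and a constant cost bound, so the bounds hold under unary encoding. No step poses a substantive obstacle: the argument is a direct transport of the two tricks from Theorem~\ref{thm:lowerboundBoundedSem} onto the \np-hard base problem provided by Theorem~\ref{thm:lbDataBCS}.
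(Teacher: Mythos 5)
Your proposal is correct and follows essentially the same route as the paper's own proof: reuse the 3-colorability WKB from Theorem~\ref{thm:lbDataBCS}, query an instance assertion over a fresh concept and individual with bound $k=3$, and for the possible semantics add that assertion to the ABox with infinite weight. The only nitpick is notational: the WKB of Theorem~\ref{thm:lbDataBCS} already uses the concept name $B$ (in $B \sqsubseteq \bot$ and $B(c_1),B(c_2),B(c_3)$), so your ``fresh'' concept must be given a genuinely new name, as the paper does by using $D$ and $d$.
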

\begin{proof}
Given a graph $\mathcal{G}$, consider the WKB $\wkb$ defined in the proof of Theorem \ref{thm:lbDataBCS} and let $D\in\NC\setminus\signatureof[\Kmc]$ and $d\in\NI\setminus\inds(\Kmc)$. 
Then $\mathcal{G}$ is \emph{not} 3-colorable iff $\wkb$ is \emph{not} $3$-satisfiable, iff $\wkb \sat{c}{3} D(d)$, yielding the lower bound for \dec{I}{b}{c}. 
Now for \dec{I}{b}{p}, let $\K'_{\omega'} = (\langle \T,\A \cup \{D(d)\} \rangle , \omega' )$ where  $\omega'$ extends $\omega$ with $\omega'(D(d)) = \infty$. Then $\K'_{\omega'} \sat{p}{3} D(d) $ iff 
there exists $ \I $ with $\cost{\wkb}{\I} \leq 3$, iff $\mathcal{G}$ is 3-colorable. 
\end{proof}

The reduction used for the next theorem is inspired by the proof of $\deltaptwolog$-hardness of the $\leq$-AR semantics (based upon cardinality-maximal repairs, or equivalently, weight-based ABox repairs with all assertions assigned equal weight) 
for some existential rule languages \cite[Theorem 8]{Lukasiewicz_Malizia_Vaicenavicius_2019}. 

\begin{restatable}{theorem}{ThIQAoptDataLowerBound}
	\dec{I}{opt}{p} and \dec{I}{opt}{c} for $\mathcal{EL}_\bot$ are $\deltaptwolog$-hard in data complexity (even if the finite weights are bounded independently from $|\A|$ and encoded in unary).
\end{restatable}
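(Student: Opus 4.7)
The plan is to reduce from a $\deltaptwolog$-complete problem, adapting the technique of the $\deltaptwolog$-hardness proof for the cardinality-based $\leq$-AR semantics in \cite[Theorem 8]{Lukasiewicz_Malizia_Vaicenavicius_2019}. Since our opt-cost certain semantics specialises to the cardinality-maximal ABox-repair semantics when every TBox axiom has weight $\infty$ and every ABox assertion has the same unit weight (\cf the proposition relating opt-cost to $\leq_\omega$-AR), the task is to re-express a similar reduction using only $\mathcal{EL}_\bot$ TBox axioms and a single instance query.

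A convenient source is a Wagner-style $\deltaptwolog$-complete comparison problem such as \textsc{MaxIS-Compare}: given two graphs $G_1$ and $G_2$, decide whether $\alpha(G_1) \geq \alpha(G_2)$, where $\alpha(G)$ denotes the maximum independent set size of $G$. Given such an instance, I would construct an $\mathcal{EL}_\bot$ WKB $\wkb$ encoding both graphs, with: (i) a fixed TBox containing the independent-set axiom $I \sqcap \exists \mathrm{Edge}.I \sqsubseteq \bot$ together with auxiliary axioms implementing a comparison gadget; (ii) an ABox with hard role assertions $\mathrm{Edge}(u,v)$ representing the edges of both graphs and unit-weight assertions $I(v)$ for every vertex $v$, so that any interpretation of finite cost corresponds to a pair of independent sets, one per graph, and optimal-cost interpretations correspond to pairs of \emph{maximum} independent sets. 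A carefully designed atomic query $B(d)$, with $B$ a fresh concept name and $d$ a fresh individual, will track the comparison outcome in those optimal-cost interpretations.

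Since the comparison property is determined by the sizes $\alpha(G_1)$ and $\alpha(G_2)$, which are identical across all optimal-cost interpretations, the query $B(d)$ will hold either in every optimal-cost interpretation or in none of them. Therefore $\wkb \sat{c}{opt} B(d)$ and $\wkb \sat{p}{opt} B(d)$ both coincide with the answer to the source $\deltaptwolog$-complete problem, yielding the $\deltaptwolog$ lower bound for both \dec{I}{opt}{c} and \dec{I}{opt}{p} simultaneously. All weights used in the construction lie in $\{1,\infty\}$, so the finite weights are bounded by a constant independent of $|\A|$ and are encoded in unary, fulfilling the additional conditions of the theorem.

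The main obstacle is designing the comparison gadget within $\mathcal{EL}_\bot$, whose syntax lacks negation, disjunction, and universal restrictions. The standard workaround is to implement the needed bit computations via a chain of auxiliary individuals connected by a fresh role, using unit-weight ``guard'' assertions whose optimal violation profile deterministically encodes the comparison outcome and axioms of the form $C \sqsubseteq \bot$ to prune incorrect configurations; cost trade-offs between the gadget and the two ``halves'' of the encoding must then be tuned so that the only optimal-cost interpretations realise a maximum IS in each half and produce the correct comparison verdict at $B(d)$. The delicate step is verifying that this gadget neither creates spurious optimal-cost interpretations that corrupt the intended bijection nor shifts the optimum beyond a controlled additive constant.
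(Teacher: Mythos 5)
Your overall strategy (reduce from a Wagner-style $\deltaptwolog$-complete problem, use weights in $\{1,\infty\}$, read the answer off a single instance query at a fresh individual) is the right shape, but the proposal has a genuine gap: the comparison gadget, which is the entire technical content of the reduction, is never constructed, and the properties you ascribe to it do not follow from anything you have set up. If you encode both graphs with unit-weight assertions $I(v)$ and hard independent-set axioms, the cost of an interpretation is the \emph{sum} $(|\mathcal{V}_1|-|I_1|)+(|\mathcal{V}_2|-|I_2|)+(\text{gadget cost})$, so optimal-cost interpretations simply realise a maximum independent set in each graph independently. There is then no channel through which the \emph{relative} sizes $\alpha(G_1)$ and $\alpha(G_2)$ can influence membership of a fixed individual $d$ in a concept $B$: \ELbot{} inclusions cannot compare cardinalities, and the cost function only constrains the global sum. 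To make the comparison visible you would need something like a mode-selection gadget in which the optimal interpretation ``chooses'' the cheaper of the two halves, but that raises exactly the issues you defer (tie-breaking when $\alpha(G_1)=\alpha(G_2)$, which would make $B(d)$ hold in some but not all optimal interpretations and break your claimed coincidence of $\sat{c}{opt}$ and $\sat{p}{opt}$; and pollution of the possible semantics by the unconstrained non-chosen half). As written, the assertion that ``$B(d)$ will hold either in every optimal-cost interpretation or in none of them'' is precisely the statement that needs proving, and it is not true for the natural encodings.

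The paper sidesteps all of this by choosing a different source problem: deciding whether a given vertex $w$ belongs to \emph{all} maximum independent sets of a single graph (also $\deltaptwolog$-complete). There the answer is directly observable as a concept assertion about $w$ itself, so no cardinality comparison is needed; the only subtlety is that the unit-weight tracking assertion $NoGoal(w)$ could perturb which independent sets are cost-optimal, and this is neutralised by duplicating the independent-set encoding (two concepts $In_1,In_2$), so that each vertex left out of the independent set costs $2$ while the tracking assertion costs only $1$. If you want to salvage your approach, I would recommend switching to a source problem of this ``membership in all optima'' form rather than a two-instance comparison problem, or else fully working out a mode-selection gadget and handling ties explicitly.
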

\begin{proof}[Proof sketch]
We use a reduction from the problem of deciding whether a given vertex belongs to all the independent sets of maximum size of a graph. 
\end{proof}

When the assertions weights can be exponential \wrt $|\A|$, one can encode priority between the assertions, which we use to prove the following result. 

\begin{restatable}{theorem}{ThCQAoptDataLowerBound}
	\dec{I}{opt}{p} and \dec{I}{opt}{c} for $\mathcal{EL}_\bot$ are $\deltaptwo$-hard in data complexity (if the weights are encoded in binary).
\end{restatable}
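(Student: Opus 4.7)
The plan is to reduce from the $\Delta^p_2$-complete problem \textsc{Lex-Max-3Col}: given a 3-colorable graph $G=(V,E)$ with a linear ordering $v_1,\dots,v_n$ of its vertices and a target vertex $v^\ast$, decide whether the unique lexicographically maximum proper 3-coloring of $G$ (which at each vertex in order prefers color $c_1$ over $c_2$ over $c_3$) assigns color $c_1$ to $v^\ast$. $\Delta^p_2$-completeness of \textsc{Lex-Max-3Col} follows from an $O(n)$-round \np-oracle algorithm that reads off the lex-max coloring bit by bit, together with a standard bit-preserving reduction from the $\Delta^p_2$-complete problem of determining a specified bit of the lex-max satisfying assignment of a CNF formula, via the classical SAT-to-3-coloring gadgets. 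The core idea of the DL reduction is to lift the $\mathcal{EL}_\bot$ 3-coloring encoding of Theorem~\ref{thm:lbDataBCS} by attaching binary-encoded exponential priority weights to the ABox, exploiting the hint that binary weights can encode arbitrary priorities.

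Concretely, I would fix an $\mathcal{EL}_\bot$ TBox $\T$ containing, all with infinite weight, the pairwise color-disjointness axioms $C_i \sqcap C_j \sqsubseteq \bot$ for $i\neq j \in \{1,2,3\}$ and the monochromatic-edge forbidding axioms $C_i \sqcap \exists E.C_i \sqsubseteq \bot$ for $i\in\{1,2,3\}$, and fix the IQ to $C_1(d)$ for a distinguished individual name $d$. Given an input $(G,v^\ast)$, the ABox $\A_\omega$ renames $v^\ast$ to $d$ and contains (i) an infinite-weight edge assertion $E(v,v')$ for every $\{v,v'\}\in E$, and (ii) for every vertex $v_j$ and color $k\in\{1,2,3\}$, a soft \emph{color-preference} assertion $C_k(v_j)$ of weight $2^{3(n-j)+(3-k)}$. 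These weights are calibrated so that within each vertex $C_1 \succ C_2 \succ C_3$, and every cost increment produced by demoting the color of $v_j$ by one rank strictly dominates the total finite weight attached to all vertices $v_{j+1},\dots,v_n$, which a short geometric-series calculation bounds by $2^{3(n-j)}-1$ whereas the smallest single-rank demotion costs $2^{3(n-j)}$.

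Correctness proceeds in two steps. First, any finite-cost interpretation $\I$ satisfies all hard axioms and hence induces a partial proper 3-coloring of $G$ via $c(v)=i$ iff $v\in C_i^{\I}$, with cost equal to the sum of the weights of the violated soft color-preference assertions. Second, the weight-domination inequality forces cost minimization to coincide with lexicographic maximization of the coloring; since $G$ is 3-colorable, the optimum is attained by a full 3-coloring, namely the unique lex-max one. Consequently, every optimal-cost interpretation agrees on whether $d\in C_1^{\I}$, so
\[
\wkb \sat{c}{opt} C_1(d) \iff \wkb \sat{p}{opt} C_1(d) \iff v^\ast \text{ has color } c_1 \text{ in the lex-max 3-coloring of } G.
\]
Since $\T$ and the query are fixed and the ABox with its $O(n)$-bit binary weights is polynomial in $|G|$, this establishes $\Delta^p_2$-hardness of both \dec{I}{opt}{c} and \dec{I}{opt}{p} in data complexity.

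The main obstacle is ruling out `cheating' interpretations that exploit the weak SNA or add spurious anonymous elements to evade the intended correspondence with colorings of $G$. Because $\mathcal{EL}_\bot$ lacks nominals and equality, distinct vertex names may be collapsed in $\Delta^{\I}$; however, identifying two $G$-adjacent vertices forces a self-loop that becomes monochromatic as soon as either is colored (infinite cost), while identifying two non-adjacent vertices can only conjoin their color constraints and never strictly decreases the cost contributed by their paired color-preference assertions. Anonymous elements can be added to any $C_k^{\I}$ without affecting the violation count of the ABox assertions, provided they do not trigger the monochromatic-edge axiom. A careful case analysis based on these observations, combined with the weight-domination calculation, shows that no such deviation beats the cost of the canonical interpretation that realizes the lex-max 3-coloring, completing the reduction.
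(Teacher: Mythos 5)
Your overall strategy --- binary-encoded, exponentially decreasing per-vertex weights that turn cost minimization into lexicographic maximization --- is the same device the paper uses (there applied to the lexicographically maximum satisfying assignment of a 2+2-CNF, following Krentel). However, your concrete reduction has a fatal gap. In your WKB, an interpretation of finite cost only has to induce a proper \emph{partial} coloring of $G$: a vertex may be left outside all of $C_1,C_2,C_3$ at the finite price of violating all three of its preference assertions. Because your domination inequality makes any single demotion at $v_j$ (including the demotion from $c_3$ to ``uncolored'') outweigh the entire weight mass attached to $v_{j+1},\dots,v_n$, the optimal-cost interpretations realize the lexicographically maximum proper \emph{partial} coloring, which is exactly the polynomial-time greedy coloring in the order $v_1,\dots,v_n$ --- not the lex-max \emph{total} proper 3-coloring. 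Your claim that ``since $G$ is 3-colorable, the optimum is attained by a full 3-coloring'' is false: take the bipartite (hence 3-colorable) crown graph on $a_1,\dots,a_4,b_1,\dots,b_4$ with edges $a_ib_j$ for $i\neq j$ and the order $a_1,b_1,a_2,b_2,\dots$; greedy assigns $c_1,c_1,c_2,c_2,c_3,c_3$ and then gets stuck on $a_4,b_4$, and a direct computation with your weights shows this partial coloring has strictly smaller cost than the lex-max total coloring (which already differs at $b_2$). Since the greedy partial coloring is computable in polynomial time, the WKB you build decides a \ptime question and cannot witness $\deltaptwo$-hardness.

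The paper circumvents precisely this totality problem with a two-tier priority structure sitting above the lexicographic tier: a top level of \emph{equal} huge weights on all $T(x_i),F(x_i)$ so that optimal interpretations first minimize the number of violations there, which forces every variable to receive exactly one truth value; a second level forcing all clauses to be satisfied; and only then the per-variable exponentially decreasing weights on the $T'(x_i)$ that select the lex-max among the \emph{total satisfying} assignments. To repair your proof you would need an analogous mechanism forcing every vertex to be colored (this is not expressible as a hard $\ELbot$ axiom $\top\sqsubseteq C_1\sqcup C_2\sqcup C_3$, so it must be engineered through the weights), and you would additionally need to actually establish $\deltaptwo$-hardness of your source problem \textsc{Lex-Max-3Col}, which you assert but which is not a standard citable result; the lex-preservation of the SAT-to-3-colorability gadgets under a chosen vertex ordering (palette first, variable vertices in order, clause gadgets last) requires its own argument.
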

\begin{proof}[Proof sketch]
The result is proved by reduction from the following $\deltaptwo$-hard problem:
given a satisfiable Boolean formula $\varphi = c_1 \wedge \ldots \wedge c_m$ over variables $x_1,\dots, x_n$ such that each clause $c_i$ 
has exactly two positive and two negative literals (with \texttt{true} and \texttt{false} admitted as literals) 
and given  $k\in\{1,\dots, n\}$, decide whether the lexicographically maximum truth assignment $\nu_{max}$ satisfying $\varphi$ \wrt $(x_1,\dots, x_n)$ fulfills $\nu_{max}(x_k)=1$. 
 It follows from \cite{Krentel88} and from the reductions from SAT to 3SAT and from 3SAT to 2+2SAT \cite{Schaerf93} that this problem is $\deltaptwo$-hard.

Let $\varphi$ be an instance of the problem as previously defined. We define  an \ELbot WKB $\WKB$ as follows.
\begin{align*}
\A=&\{S(\varphi,c_j)\mid 1\leq j\leq m\}\cup\\
&\{P_\ell(c_k,x_j) \mid \ell\in\{1, 2\},x_j\text{ is the }\ell^{th}\text{ pos. lit. of }c_k \}\cup\\
    &\{N_\ell(c_k,x_j) \mid \ell\in\{1, 2\},\neg x_j\text{ is the }\ell^{th}\text{ neg. lit. of }c_k \}\cup\\
    & \{F(x_i), T(x_i), T'(x_i)\mid 1\leq i\leq n\}\cup\{F(f), T(t)\}
    \\
\T=&\{F\sqcap T\sqsubseteq \bot, F\sqcap T'\sqsubseteq \bot\}\cup\\
&\{  \exists S.(  \exists P_1.F\sqcap\exists P_2.F\sqcap \exists N_1.T\sqcap \exists N_2.T) \sqsubseteq \bot \}
\end{align*}
We set $\omega(\tau)=\infty$ for every $\tau\in\T$, and $\omega(\alpha)=\infty$ for every  assertion $\alpha$ built on $P_\ell$ and $N_\ell$ as well as for $T(t)$ and $F(f)$. For the remaining assertions, we define the weights through the following prioritization \cite[Lemma~6.2.5]{DBLP:phd/hal/Bourgaux16}:
\begin{itemize}
\item $L_1=\{T(x_i), F(x_i)\mid 1\leq i\leq n\}$
\item $L_2=\{S(\varphi,c_j)\mid 1\leq j\leq m\}$
\item $L_3=\{T'(x_1)\}$, $L_4=\{T'(x_2)\}$, $\dots$, $L_{n+2}=\{T'(x_n)\}$
\end{itemize} 
Let $u=max(2n,m)+1$ and define $\omega(\alpha)=u^{n+2-i}$ for every $\alpha\in L_i$. 
We show that $\wkb \sat{c}{opt} T'(x_k)$ iff $\wkb \sat{p}{opt} T'(x_k)$ iff $\nu_{max}(x_k)=1$.
\end{proof}

Interestingly, while the preceding proof crucially relies upon a binary encoding of exponential weights, all of our other lower 
bounds only employ 1 and $\infty$ as weights.

\subsection{Taking $k$ Into Account}

We defined the data complexity as the complexity \wrt $|\A_\omega|$ but it is interesting to investigate the impact of considering $k$ as part of the input. 
Naturally, all data complexity lower bounds hold when $k$ is also part of the input. 
Moreover, it can be checked that the data complexity upper bounds also hold for the complexity \wrt $|\A_\omega|$ and $|k|$, \emph{except 
for the $k$-cost-bounded certain semantics} (Theorem \ref{thm:data-ubCk}). The difficulty comes from the proof of Proposition \ref{prop:CQboundedCertain}: the bound obtained for the cardinality of the domain of the new interpretation is polynomial in $k$ thus becomes exponential in $|k|$ if $k$ is encoded in binary. Therefore, if $|k|$ is treated as part of the input, the proof of Theorem \ref{thm:data-ubCk} only yields a \conp upper bound for \dec{C}{b}{c} if we assume a \emph{unary encoding of~$k$}.

\section{Related Work}
Our cost-based framework for reasoning on inconsistent KBs shares features
with a number of existing formalisms. 
Our axioms with finite weights  
can be seen as quantitative versions of the soft rules considered in \cite{eit}.
More generally, the idea of allowing TBox axioms to be (exceptionally) violated 
shares high-level similarities with defeasible axioms, minimized concepts, and typicality operators  
considered in non-monotonic extensions of DLs, cf.\ \cite{DBLP:journals/jair/BonattiLW09,DBLP:journals/ai/GiordanoGOP13,DBLP:journals/tocl/BritzCMMSV21}.
Although we differ in adopting  
a quantitative 
semantics for axiom violations, 
we were nevertheless able to 
import some   
techniques from  
circumscription \cite{lutz2023querying} 
and reasoning with closed predicates \cite{NgoOrSi}.

Our approach is related to existing quantitative notions of repair. Indeed, our  
assignment of costs to interpretations 
can be seen as adapting the database repairs for soft constraints from  \cite{DBLP:conf/icdt/CarmeliGKLT21}
to the DL setting. 
As detailed in Section \ref{weighted knowledge bases}, 
our approach is also 
related to weight-based ABox repairs:  
when TBox axioms have weight $\infty$, 
our \optsem{certain} coincides with the $\leq_\omega$-AR semantics from \cite{biebougoa}.

Our proofs rely upon techniques devised for other forms of quantitative reasoning in DLs. 
We exploit results on counting and cardinality queries \cite{biem,Maniere}
and reduce some of our problems to reasoning 
in DLs with number restrictions. 
We could also have obtained some of our combined complexity upper bounds 
by using results on \emph{cardinality constraints} $\leq n C$ (which enforce that $|C^\Imc|\leq n$), 
previously studied in \cite{DBLP:journals/jair/Tobies00} and expressible in the 
much more expressive quantitative logics in \cite{baader2020satisfiability}. 
We expect that the techniques from these works 
may prove useful in future studies of WKBs.

\section{Conclusion}
We have introduced a new cost-based framework to querying inconsistent DL KBs,
in which both TBox axioms and ABox assertions may be violated and notions of 
possible and certain query answers
are defined  w.r.t.\ the bounded or minimal-cost interpretations. 
By exploiting connections to other DL reasoning tasks, 
we were able to establish 
an almost complete picture of the complexity of query answering in our
framework, for DLs between $\ELbot$ and $\ALCO$. 

The main aim of future work will be to extend our results to cover DLs involving other common constructs, like
 inverse roles ($\mathcal{I}$), role inclusions ($\mathcal{H}$), functional roles ($\mathcal{F}$), and number restrictions ($\mathcal{Q}$).  
We will focus first on the DL-Lite family due to its widespread use in OMQA. 
While it is not difficult to adapt the definitions of violations and WKBs (although there may be several ways to count violations of a functionality axiom, as one could count as a single violation a domain element that has multiple successors or instead count every pair of two distinct successors), 
the techniques underlying our upper bounds do not readily transfer. 
For example, violations of role inclusions cannot be encoded as concepts (at least in standard DLs), 
and adding functionality or number restrictions may lead to 
every optimal-cost interpretation having an infinite domain. 
Inverse roles (present in even the simplest DL-Lite logics) 
are also non-trivial to handle, 
though we expect that 
the techniques 
employed for our data complexity upper bounds can be suitably adapted. Finally, another important but challenging direction is to devise practical algorithms that are amenable to implementation.

\section*{Acknowledgements} 
We thank the reviewers for their careful reading and useful feedback. 
This work was partially supported by the ANR AI Chair INTENDED (ANR-19-CHIA-0014) and the ANR project CQFD (ANR-18-CE23-0003).

\bibliographystyle{kr}
\bibliography{reference}

\begin{thebibliography}{}

\bibitem[\protect\citeauthoryear{Andolfi \bgroup et al\mbox.\egroup
  }{2024}]{DBLP:conf/aaai/AndolfiCCL24}
Andolfi, L.; Cima, G.; Console, M.; and Lenzerini, M.
\newblock 2024.
\newblock What does a query answer tell you? {Informativeness} of query answers
  for knowledge bases.
\newblock In {\em Proceedings of {AAAI}}.

\bibitem[\protect\citeauthoryear{Baader and
  Kriegel}{2022}]{DBLP:conf/kr/BaaderK22}
Baader, F., and Kriegel, F.
\newblock 2022.
\newblock Pushing optimal {ABox} repair from {$\mathcal{EL}$} towards more
  expressive {Horn}-{DLs}.
\newblock In {\em Proceedings of {KR}}.

\bibitem[\protect\citeauthoryear{Baader, Bednarczyk, and
  Rudolph}{2020}]{baader2020satisfiability}
Baader, F.; Bednarczyk, B.; and Rudolph, S.
\newblock 2020.
\newblock Satisfiability and query answering in description logics with global
  and local cardinality constraints.
\newblock In {\em Proceedings of {ECAI}}.

\bibitem[\protect\citeauthoryear{Baader \bgroup et al\mbox.\egroup
  }{2017}]{Baader_Horrocks_Lutz_Sattler_2017}
Baader, F.; Horrocks, I.; Lutz, C.; and Sattler, U.
\newblock 2017.
\newblock {\em An Introduction to Description Logic}.
\newblock Cambridge University Press.

\bibitem[\protect\citeauthoryear{Bednarczyk and Rudolph}{2019}]{ijcai2019p212}
Bednarczyk, B., and Rudolph, S.
\newblock 2019.
\newblock Worst-case optimal querying of very expressive description logics
  with path expressions and succinct counting.
\newblock In {\em Proceedings of {IJCAI}}.

\bibitem[\protect\citeauthoryear{Bienvenu and Bourgaux}{2016}]{biebou}
Bienvenu, M., and Bourgaux, C.
\newblock 2016.
\newblock Inconsistency-tolerant querying of description logic knowledge bases.
\newblock In {\em Reasoning Web, Tutorial Lectures}.

\bibitem[\protect\citeauthoryear{Bienvenu and
  Ortiz}{2015}]{DBLP:conf/rweb/BienvenuO15}
Bienvenu, M., and Ortiz, M.
\newblock 2015.
\newblock Ontology-mediated query answering with data-tractable description
  logics.
\newblock In {\em Reasoning Web}.

\bibitem[\protect\citeauthoryear{Bienvenu and
  Rosati}{2013}]{DBLP:conf/ijcai/BienvenuR13}
Bienvenu, M., and Rosati, R.
\newblock 2013.
\newblock Tractable approximations of consistent query answering for robust
  ontology-based data access.
\newblock In {\em Proceedings of {IJCAI}}.

\bibitem[\protect\citeauthoryear{Bienvenu, Bourgaux, and
  Goasdou{\'{e}}}{2014}]{biebougoa}
Bienvenu, M.; Bourgaux, C.; and Goasdou{\'{e}}, F.
\newblock 2014.
\newblock Querying inconsistent description logic knowledge bases under
  preferred repair semantics.
\newblock In {\em Proceedings of {AAAI}}.

\bibitem[\protect\citeauthoryear{Bienvenu, Mani{\`{e}}re, and
  Thomazo}{2022}]{biem}
Bienvenu, M.; Mani{\`{e}}re, Q.; and Thomazo, M.
\newblock 2022.
\newblock Counting queries over {$\mathcal{ELHI}_\bot$} ontologies.
\newblock In {\em Proceedings of {KR}}.

\bibitem[\protect\citeauthoryear{Bienvenu}{2020}]{DBLP:journals/ki/Bienvenu20}
Bienvenu, M.
\newblock 2020.
\newblock A short survey on inconsistency handling in ontology-mediated query
  answering.
\newblock {\em K{\"{u}}nstliche Intell.} 34(4):443--451.

\bibitem[\protect\citeauthoryear{Bonatti, Lutz, and
  Wolter}{2009}]{DBLP:journals/jair/BonattiLW09}
Bonatti, P.~A.; Lutz, C.; and Wolter, F.
\newblock 2009.
\newblock The complexity of circumscription in {DLs}.
\newblock {\em J. Artif. Intell. Res.} 35:717--773.

\bibitem[\protect\citeauthoryear{Bourgaux}{2016}]{DBLP:phd/hal/Bourgaux16}
Bourgaux, C.
\newblock 2016.
\newblock {\em Inconsistency handling in ontology-mediated query answering.
  (Gestion des incoh{\'{e}}rences pour l'acc{\`{e}}s aux donn{\'{e}}es en
  pr{\'{e}}sence d'ontologies)}.
\newblock Ph.D. Dissertation, University of Paris-Saclay, France.

\bibitem[\protect\citeauthoryear{Britz \bgroup et al\mbox.\egroup
  }{2021}]{DBLP:journals/tocl/BritzCMMSV21}
Britz, K.; Casini, G.; Meyer, T.; Moodley, K.; Sattler, U.; and Varzinczak, I.
\newblock 2021.
\newblock Principles of {KLM}-style defeasible description logics.
\newblock {\em {ACM} Trans. Comput. Log.} 22(1):1:1--1:46.

\bibitem[\protect\citeauthoryear{Calvanese, Eiter, and Ortiz}{2009}]{ceo}
Calvanese, D.; Eiter, T.; and Ortiz, M.
\newblock 2009.
\newblock Regular path queries in expressive description logics with nominals.
\newblock In {\em Proceedings of {IJCAI}}.

\bibitem[\protect\citeauthoryear{Carmeli \bgroup et al\mbox.\egroup
  }{2021}]{DBLP:conf/icdt/CarmeliGKLT21}
Carmeli, N.; Grohe, M.; Kimelfeld, B.; Livshits, E.; and Tibi, M.
\newblock 2021.
\newblock Database repairing with soft functional dependencies.
\newblock In {\em Proceedings of {ICDT}}.

\bibitem[\protect\citeauthoryear{Eiter, Lukasiewicz, and Predoiu}{2016}]{eit}
Eiter, T.; Lukasiewicz, T.; and Predoiu, L.
\newblock 2016.
\newblock Generalized consistent query answering under existential rules.
\newblock In {\em Proceedings of KR}.

\bibitem[\protect\citeauthoryear{Giordano \bgroup et al\mbox.\egroup
  }{2013}]{DBLP:journals/ai/GiordanoGOP13}
Giordano, L.; Gliozzi, V.; Olivetti, N.; and Pozzato, G.~L.
\newblock 2013.
\newblock A non-monotonic description logic for reasoning about typicality.
\newblock {\em Artif. Intell.} 195:165--202.

\bibitem[\protect\citeauthoryear{Krentel}{1988}]{Krentel88}
Krentel, M.~W.
\newblock 1988.
\newblock The complexity of optimization problems.
\newblock {\em J. Comput. Syst. Sci.} 36(3):490--509.

\bibitem[\protect\citeauthoryear{Lembo \bgroup et al\mbox.\egroup
  }{2010}]{Lele}
Lembo, D.; Lenzerini, M.; Rosati, R.; Ruzzi, M.; and Savo, D.~F.
\newblock 2010.
\newblock Inconsistency-tolerant semantics for description logics.
\newblock In {\em Proceedings of {RR}}.

\bibitem[\protect\citeauthoryear{Lopatenko and
  Bertossi}{2007}]{DBLP:conf/icdt/LopatenkoB07}
Lopatenko, A., and Bertossi, L.~E.
\newblock 2007.
\newblock Complexity of consistent query answering in databases under
  cardinality-based and incremental repair semantics.
\newblock In {\em Proceedings of {ICDT}}.

\bibitem[\protect\citeauthoryear{Lopatenko and
  Bertossi}{2016}]{DBLP:journals/corr/LopatenkoB16}
Lopatenko, A., and Bertossi, L.~E.
\newblock 2016.
\newblock Complexity of consistent query answering in databases under
  cardinality-based and incremental repair semantics (extended version).
\newblock {\em CoRR} abs/1605.07159.

\bibitem[\protect\citeauthoryear{Lukasiewicz, Malizia, and
  Vaicenavičius}{2019}]{Lukasiewicz_Malizia_Vaicenavicius_2019}
Lukasiewicz, T.; Malizia, E.; and Vaicenavičius, A.
\newblock 2019.
\newblock Complexity of inconsistency-tolerant query answering in
  {Datalog+/–} under cardinality-based repairs.
\newblock {\em Proceedings of {AAAI}}.

\bibitem[\protect\citeauthoryear{Lutz, Mani{\`{e}}re, and
  Nolte}{2023}]{lutz2023querying}
Lutz, C.; Mani{\`{e}}re, Q.; and Nolte, R.
\newblock 2023.
\newblock Querying circumscribed description logic knowledge bases.
\newblock In {\em Proceedings of {KR}}.

\bibitem[\protect\citeauthoryear{Mani{\`{e}}re}{2022}]{Maniere}
Mani{\`{e}}re, Q.
\newblock 2022.
\newblock {\em Counting queries in ontology-based data access. (Requ{\^{e}}tes
  de comptage pour l'acc{\`{e}}s aux donn{\'{e}}es en pr{\'{e}}sence
  d'ontologies)}.
\newblock Ph.D. Dissertation, University of Bordeaux, France.

\bibitem[\protect\citeauthoryear{Ngo, Ortiz, and Simkus}{2016}]{NgoOrSi}
Ngo, N.; Ortiz, M.; and Simkus, M.
\newblock 2016.
\newblock Closed predicates in description logics: Results on combined
  complexity.
\newblock In {\em Proceedings of {KR}}.

\bibitem[\protect\citeauthoryear{Ortiz and Simkus}{2012}]{orsi}
Ortiz, M., and Simkus, M.
\newblock 2012.
\newblock Reasoning and query answering in description logics.
\newblock In {\em Reasoning Web.}

\bibitem[\protect\citeauthoryear{Parsia, Sirin, and
  Kalyanpur}{2005}]{DBLP:conf/www/ParsiaSK05}
Parsia, B.; Sirin, E.; and Kalyanpur, A.
\newblock 2005.
\newblock Debugging {OWL} ontologies.
\newblock In {\em Proceedings of {WWW}}.

\bibitem[\protect\citeauthoryear{Pe{\~{n}}aloza}{2020}]{DBLP:series/ssw/Penaloza20}
Pe{\~{n}}aloza, R.
\newblock 2020.
\newblock Axiom pinpointing.
\newblock In {\em Applications and Practices in Ontology Design, Extraction,
  and Reasoning}, volume~49 of {\em Studies on the Semantic Web}. {IOS} Press.
\newblock  162--177.

\bibitem[\protect\citeauthoryear{Poggi \bgroup et al\mbox.\egroup
  }{2008}]{DBLP:journals/jods/PoggiLCGLR08}
Poggi, A.; Lembo, D.; Calvanese, D.; {De Giacomo}, G.; Lenzerini, M.; and
  Rosati, R.
\newblock 2008.
\newblock Linking data to ontologies.
\newblock {\em Journal of Data Semantics} 10:133--173.

\bibitem[\protect\citeauthoryear{Schaerf}{1993}]{Schaerf93}
Schaerf, A.
\newblock 1993.
\newblock On the complexity of the instance checking problem in concept
  languages with existential quantification.
\newblock {\em J. Intell. Inf. Syst.} 2(3):265--278.

\bibitem[\protect\citeauthoryear{Tobies}{2000}]{DBLP:journals/jair/Tobies00}
Tobies, S.
\newblock 2000.
\newblock The complexity of reasoning with cardinality restrictions and
  nominals in expressive description logics.
\newblock {\em J. Artif. Intell. Res.} 12:199--217.

\bibitem[\protect\citeauthoryear{Xiao \bgroup et al\mbox.\egroup
  }{2018}]{DBLP:conf/ijcai/XiaoCKLPRZ18}
Xiao, G.; Calvanese, D.; Kontchakov, R.; Lembo, D.; Poggi, A.; Rosati, R.; and
  Zakharyaschev, M.
\newblock 2018.
\newblock Ontology-based data access: {A} survey.
\newblock In {\em Proceedings of {IJCAI}}.

\end{thebibliography}

\newpage

\allowdisplaybreaks

\appendix
\section{Proofs for Section \ref{weighted knowledge bases}}

\consistentCase*
\begin{proof}
For every interpretation $\I$ we have $\cost{\wkb}{\I} = 0$ iff $\I$ is a model of $\K$. Indeed, we have $\cost{\wkb}{\I} = \sum_{\tau \in \T} \omega_\tau|\vio{\tau}{\I}| + \sum_{\alpha \in \vio{\A}{\I}}\omega_\alpha$ thus $\cost{\wkb}{\I} = 0$ iff $\vio{\tau}{\I} = \emptyset$ for every $\tau \in \T$ (as $\omega_\tau > 0$ for every $\tau \in \T$) and $\vio{\A}{\I} = \emptyset$  (as $\omega_\alpha > 0$ for every $\alpha \in \A$) iff $\I$ is a model of $\K$.
The result directly follows from this 
characterization of interpretations of cost~$0$ and the definition of the opt-cost certain and possible semantics.
\end{proof}

\kVariesProp*
\begin{proof}
\begin{itemize}
\item Let $0 \leq k' \leq k$ and suppose $\wkb \sat{c}{k} q$ \ie $\I \models q$ for every interpretation $\I$ with $\cost{\wkb}{\I} \leq k$. Then $\I \models q$ for every interpretation $\I$ with $\cost{\wkb}{\I} \leq k' \leq k$, thus $\wkb \sat{c}{k'} q$.
\item Let $0 \leq k' \leq k$ and suppose $\wkb \not \sat{p}{k} q$ \ie $\I \not \models q$ for every interpretation $\I$ with $\cost{\wkb}{\I} \leq k$. Then $\I \not \models q$ for every interpretation $\I$ with $\cost{\wkb}{\I} \leq k' \leq k$, thus $\wkb \not \sat{p}{k'} q$.
\end{itemize}

\noindent 
The last part of the proposition follows from the fact that if $k < \optc{\wkb}$ then there is no interpretation $\I$ with $\cost{\wkb}{\I} = k$. Thus, no query (\resp every query) is entailed under \bsem{possible} (\resp \bsem{certain}).
\end{proof}

\RelPreferredAR*
\begin{proof}
Since $\T$ is satisfiable, there is a model $\I$ of $\T$, and since $\omega(\alpha)\neq \infty$ for every $\alpha\in\A$, $\cost{\wkb}{\I}\neq \infty$. It follows that $\optc{\wkb} \neq \infty$ and that every $\I$ of optimal cost is such that $\I\models \T$ and $\cost{\wkb}{\I}= \sum_{\alpha \in vio_\A(\I)} \omega_\alpha$. 

Hence, for every $\I$ of optimal cost, $\A' = \A \setminus vio_\A(\I)$ is a $\leq_\omega$-repair. Indeed, since $\I\models\langle\T,\A'\rangle$, $\langle\T,\A'\rangle$ is consistent. Moreover, $\sum_{\alpha \in \A'} \omega_\alpha = \sum_{\alpha \in \A} \omega_\alpha - \cost{\wkb}{\I}$ and $\cost{\wkb}{\I}$ is minimal, so $\sum_{\alpha \in \A'} \omega_\alpha$  is maximal.  
It also follows that every $\leq_\omega$-repair $\A'$ is such that $\sum_{\alpha \in \A'} \omega_\alpha = \sum_{\alpha \in \A} \omega_\alpha - \optc{\wkb}$. 

\noindent($\Leftarrow$) Assume that $\wkb \not\sat{c}{opt} q$. There exists $\I$ of optimal cost such that $\I\not\models q$. 
Let $\A' = \A \setminus vio_\A(\I)$ be the $\leq_\omega$-repair associated to $\I$. Since $\I\not\models q$, it follows that $\langle \T, \A'\rangle \not\models q$. Therefore $\langle \T, \A \rangle \not\models_{\leq_\omega\text{-AR}} q$.

\noindent($\Rightarrow$) Assume that $\wkb \sat{c}{opt} q$.     
Let $\A'$ be a $\leq_\omega$-repair and $\I$ be a model of $\T$ and $\A'$. 
Since $\cost{\wkb}{\I}= \sum_{\alpha \in \A\setminus\A'} \omega_\alpha=\sum_{\alpha \in \A} \omega_\alpha -\sum_{\alpha \in \A'} \omega_\alpha$ and $\sum_{\alpha \in \A'} \omega_\alpha = \sum_{\alpha \in \A} \omega_\alpha - \optc{\wkb}$ because $\A'$ is a $\leq_\omega$-repair, 
it follows that $\cost{\wkb}{\I}= \optc{\wkb}$. 
Therefore $\I\models q$. Hence $\langle\T,\A'\rangle\models q$. 
\end{proof}

\section{Proofs for Section \ref{sec:combined}}

\subsection{Proof of Proposition~\ref{prop:boundedModelALCO}}\label{app-filtration}
The proof of Proposition~\ref{prop:boundedModelALCO} is adapted from the proof of $\mathcal{ALC}$ bounded model property by \citeauthor{Baader_Horrocks_Lutz_Sattler_2017}~\shortcite{Baader_Horrocks_Lutz_Sattler_2017}. It is based on the notion of filtration that `merges' elements that belong to the same concepts and that we extend below to handle nominals (while respecting the SNA).

  Let $S$ be a set of $\mathcal{ALCO}$ concepts and $\I$ an interpretation. The \emph{$S$-type} of $d \in \Delta^{\I}$ is defined as $$t_{S}(d)=\{C \in S \mid d \in C^{\I}\}.$$ If $S$ is finite it can be shown that $|\{t_{S}(d) \mid d \in \Delta^{\I} \} | \leq 2^{|S|}$.

Let $\KB$ be an $\mathcal{ALCO}$ KB, $S = sub(\T)$ be the set of all subconcepts of the concepts that occur in the concept inclusions of $\T$, and let $\I$ be an interpretation. The equivalence relation $\simeq_{S}$ on $\Delta^{\I} \setminus \inds(\K)$ is defined as follows:
$$
d \simeq_{S} e \text { if } t_{S}(d)=t_{S}(e) .
$$
The $\simeq_{S}$-equivalence class of $d \in \Delta^{\I}\setminus \inds(\K)$ is denoted by $[d]_{S}$, i.e.,
$$
[d]_{S}=\{e \in \Delta^{\I}\setminus \inds(\K) \mid d \simeq_{S} e\}
$$
The \emph{$S$-filtration} of $\I$ is the following interpretation $\J$ :

\begin{align*}
\Delta^{\J}= & \{[d]_{S} \mid d \in \Delta^{\I}\setminus \inds(\K)\} \cup \inds(\K) \\
A^{\J}= & \{[d]_{S} \mid \text{ there is } d' \in[d]_{S} \text { with } d' \in A^{\I}\}
	\\ &\cup \{ a \in \inds(\K) \mid a \in A^\I \}\text { for all } A \in \NC \\
\{a\}^\J = &  \{ [a^\I]_{S} \}  \text { for all } a \in \NI\setminus \inds(\K) \\ 
\{a\}^\J = &\{a\} \text{ for all } a \in \inds(\K) \\ 
R^{\J}= & \{([d]_{S},[e]_{S}) \mid (d', e') \in R^{\I}, d' \in[d]_{S}, e' \in[e]_{S}\} \\
&\cup \{(a,[d]_{S}) \mid (a, d') \in R^{\I}, a \in \inds(\K), d' \in[d]_{S}\} \\
&\cup \{([d]_{S},a) \mid   (d', a) \in R^{\I}, a \in \inds(\K), d' \in[d]_{S}\} \\
&\cup \{(a,b) \mid (a, b) \in R^{\I}, a,b \in \inds(\K)\} \\
& \text { for all } R \in \NR. 
\end{align*}

It can be shown as in \cite[Lemma 3.15]{Baader_Horrocks_Lutz_Sattler_2017} that the $S$-filtration $\J$ has the same `form' as the interpretation $\I$ we started with in the following sense.

\begin{lemma}\label{lem:mainEquiv}

Let $\KB$ be an $\mathcal{ALCO}$ KB, $S = sub(\T)$, $\I$ an interpretation and $\J$ the $S$-filtration of $\I$. Then for every $C \in S$, 
\begin{itemize}
\item$
d \in C^{\I} \quad \text { iff } \quad  [d]_{S} \in C^{\J} \text{ for all } d \in \Delta^\I \setminus \inds(\K)
$
\item $a \in C^{\I} \quad \text { iff } \quad  a \in C^{\J} \text{ for all } a \in \inds(\K)
$
\end{itemize}

\end{lemma}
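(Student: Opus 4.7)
The plan is to proceed by structural induction on $C \in S$. Because the filtration's domain consists of the equivalence classes of $\Delta^\I \setminus \inds(\K)$ together with the named individuals $\inds(\K)$ themselves, each case of the induction splits into two sub-claims: one for $d \in \Delta^\I \setminus \inds(\K)$ and one for $a \in \inds(\K)$. The named-individual half is always a direct or mildly simpler version of the non-named half, since the filtration embeds $\inds(\K)$ identically, so I focus on the principal arguments.

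For the base cases, concept names and $\top, \bot$ are handled by unfolding the filtration definition and using that $A \in S$ forces all members of an equivalence class $[d]_S$ to agree on membership in $A^\I$. The important base case is that of a nominal $C = \{a\}$. The key observation is that since $\{a\} \in S$, the element $a^\I$ has an $S$-type strictly distinct from every other element of $\Delta^\I$ (it is the only element whose type contains $\{a\}$), so $[a^\I]_S$ is in fact the singleton $\{a^\I\}$. This ensures the filtration's stipulation that $\{a\}^\J$ be a singleton is correctly aligned with $\{a\}^\I = \{a^\I\}$. The equivalence then follows by splitting on whether $a \in \inds(\K)$, in which case the weak SNA gives $a^\I = a$ and $\{a\}^\J = \{a\}$, or $a \notin \inds(\K)$, in which case $\{a\}^\J = \{[a^\I]_S\}$ together with the singleton property identifies $[a^\I]_S$ with $a^\I$.

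The inductive cases for $\neg$, $\sqcap$, $\sqcup$ are immediate from the induction hypothesis. The interesting cases are $\exists R.C_1$ and $\forall R.C_1$. For the forward direction of $\exists R.C_1$, a witness $e$ for $d$ in $\I$ is transported to a witness in $\J$ using the appropriate clause of the definition of $R^\J$ (according to whether $d, e \in \inds(\K)$), then the induction hypothesis gives $[e]_S \in C_1^\J$ (or $e \in C_1^\J$). For the backward direction, from a filtration witness $([d]_S, [e]_S) \in R^\J$ with $[e]_S \in C_1^\J$ we extract, via the definition of $R^\J$, some $d' \in [d]_S$ and $e' \in [e]_S$ with $(d', e') \in R^\I$ and, by induction, $e' \in C_1^\I$. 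This yields $d' \in (\exists R.C_1)^\I$, and since $\exists R.C_1 \in S$ and $d \simeq_S d'$ share their $S$-type, we conclude $d \in (\exists R.C_1)^\I$. The case $\forall R.C_1$ is dual, again using that $\forall R.C_1 \in S$ so that $d \simeq_S d'$ transfers satisfaction.

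The main obstacle relative to the standard $\mathcal{ALC}$ filtration argument is the handling of nominals and their interaction with $\inds(\K)$. The former is resolved precisely because $S = sub(\T)$ contains every nominal appearing in $\T$, which forces nominals to be filtered as singletons; the latter is absorbed into the multi-part definition of $R^\J$, whose clauses mirror each combination of named and unnamed endpoints. The bookkeeping is somewhat heavier than in the $\mathcal{ALC}$ case, but no genuinely new idea beyond the classical filtration proof is required.
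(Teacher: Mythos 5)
Your proof is correct and takes essentially the same route as the paper, which does not spell out an argument at all but simply cites the standard $\mathcal{ALC}$ filtration lemma (Lemma 3.15 of Baader et al.) and asserts that the construction ``is easily extended to handle nominals''; your structural induction on $C \in S$ is exactly that argument, with the type-transfer step for $\exists R.C_1$ and $\forall R.C_1$ handled correctly. Your observation in the nominal base case --- that $\{a\} \in S$ forces $a^{\I}$ to be the unique element whose $S$-type contains $\{a\}$, so $[a^{\I}]_S$ is a singleton and aligns with $\{a\}^{\J}$ --- supplies precisely the detail the paper leaves implicit, so there is no gap.
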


Proposition~\ref{prop:boundedModelALCO} follows easily from Lemma~\ref{lem:mainEquiv}. Indeed, if $\K$ has a model $\I'$, we can show that the $S$-filtration $\I$ of $\I'$ with $S = sub(\T)$ is  a model of $\K$ such that $|\Delta^\I| \leq |\inds(\K)| + 2^{|\T|}$.

\PropBoundedModelALCO*

\subsection{Lower Bounds}

\ThIQAoptLowerBoundCombined*

\begin{proof}
We proceed by reduction from the \exptime-hard problem of deciding whether an $\mathcal{EL}$ KB with closed concept names is satisfiable \cite{NgoOrSi} (note that \citeauthor{NgoOrSi}  consider a more general problem where role names can also be closed but as mentioned in their conclusion, the reductions they use to prove complexity lower bounds only require closed concept names). 

An $\mathcal{EL}$ KB with closed concept names takes the form $(\K,\Sigma)$ with $\K$ an $\mathcal{EL}$ KB and $\Sigma \subseteq \NC$ a set of closed concept names. An interpretation $\I$ \emph{respects $\Sigma$} if for every $A \in \Sigma$, $d \in A^\I$ implies $A(d) \in \A$, and $\I$ is a model of $(\K,\Sigma)$ if $\I\models\K$ and $\I$ respects $\Sigma$. 
Let $(\K', \Sigma)= (\langle \T' , \A' \rangle, \Sigma)$ be an $\mathcal{EL}$ KB with closed concept names, which we assume \wlg to be in normal form. 

\smallskip

We start by the reduction for \dec{I}{opt}{c}. Adapting the proof of Theorem 42 in \cite{Maniere}, we define $\tilde{\K}$ as follows. 
Let $Goal_1, Goal_2$ and $Aux_\top$ be fresh concept names, $\{R_B\mid B \in \Sigma\}$ be fresh role names and $aux$ be a fresh individual name. Then $\tilde{\K}=\langle\tilde{\T},\tilde{\A}\rangle$ with 
\begin{align*}
	\tilde{\T} = &\,\T' \setminus \{\top \sqsubseteq A \mid \top \sqsubseteq A \in \T' \}    
	\\ &\cup \{ Aux_\top \sqsubseteq A \mid\top \sqsubseteq A \in \T' \}  
	\\ &\cup \{ A \sqsubseteq Aux_\top \mid A \in \signatureof[\T'] \} 
	\\ &\cup \{ B \sqsubseteq Goal_1 \mid B \in \Sigma \}  
	\\ &\cup \{ \exists R_B.B \sqsubseteq Goal_1 \mid B \in \Sigma \} 
	\\ &\cup \{ Goal_1 \sqsubseteq Goal_2 \}
	\\ \tilde{\A} =&\, \A'
	\cup \{Aux_\top(a) \mid a \in \inds(\A') \}	
	\\ &\cup \{ R_B(aux,a) \mid a \in \inds(\A'), B \in \Sigma, B(a) \not\in \A' \}
	\\ &\cup \{ Goal_1(a) \mid a \in \inds(\A'), B \in \Sigma, B(a) \in \A' \}
\end{align*}	
We then define the $\mathcal{EL}_\bot$ WKB $\WKB$ with 
\begin{align*}
	\T =\, & \tilde{\T} \cup \{Goal_1 \sqsubseteq \bot, Goal_2 \sqsubseteq \bot \} \\
	\A =\, &\tilde{\A} \cup \{ Goal_1(aux) \} \\
	\omega(\tau) =\, & \infty \text{ for } \tau \in \tilde{\T} \\ 
	\omega(\alpha) =\, & \infty \text{ for } \alpha \in \tilde{\A} \\
	\omega(Goal_i \sqsubseteq \bot) =\, & 1 \text{ for } i \in\{ 1,2\} \\
    \omega(Goal_1(aux)) =\, &1
\end{align*} 
We show that $(\K',\Sigma)$ is satisfiable iff $ \wkb\! \not\sat{c}{opt}\! Goal_1(aux)$. 
Let $n := |\{ b \in \NI | B(b) \in \A , B \in \Sigma\}|$.

\noindent($\Rightarrow$) Suppose that $(\K', \Sigma) $ is satisfiable. 
First, note that for every interpretation $\I$, $\cost{\wkb}{\I} \geq 2n+1$. Indeed:
\begin{itemize}
\item if $\I \not\models \Kinf$, then $\cost{\wkb}{\I}=\infty$;
\item if $\I \models \Kinf$, then $\{ b \in \NI | B(b) \in \A , B \in \Sigma\} \subseteq Goal_i^\I$ for $i \in\{ 1,2\}$ so 
\begin{itemize}
\item either $\I \models Goal_1(aux)$ and $\{ b \in \NI | B(b) \in \A , B \in \Sigma\} \cup \{aux\} \subseteq Goal_i^\I$ for $i \in\{ 1,2\}$ so $|Goal_i^\I| \geq n+1$ for $i \in\{ 1,2\}$ i.e. $\cost{\wkb}{\I} \geq 2(n+1)$, 
\item or $\I \not\models Goal_1(aux)$ and $\cost{\wkb}{\I} \geq 2n+1$. 
\end{itemize}	
\end{itemize}	
Let $\J$ be a model of $(\K',\Sigma)$ and define a new interpretation $\J'=(\Delta^\J\cup\{aux\},\cdot^{\J'})$ as follows:
\begin{align*}
	A^{\J'} &:= A^\J		\text{ for } A \in \signatureof[\T'] \cap \NC
	\\ Goal_i^{\J'} &:= \{a \mid B(a) \in \A', B \in \Sigma \} \text{ for } i \in\{1,2\}& 
	\\ Aux_\top^{\J'} &:= \Delta^\J&
	\\  P^{\J'}  &:= P^\J   \text{ for } P \in \signatureof[\T']\cap \NR
	\\  R_B^{\J'} &:= \{(aux,a) \mid a \in \inds(\A'),  B(a) \not\in \A', B \in \Sigma\}
\end{align*}

By an easy adaptation of the proof of \cite[Theorem 42]{Maniere}, $\J'$ is a model of $\tilde{\K} = \Kinf$. Hence $\cost{\wkb}{\J'} = 2n + 1 = \optc{\wkb}$. Since $\J' \not\models Goal_1(aux)$, it follows that $\wkb \not\sat{c}{opt} Goal_1(aux)$. \medskip

\noindent($\Leftarrow$) 
Suppose $\wkb \not\sat{c}{opt} Goal_1(aux)$.
Since $\K'$ is satisfiable (as it is an $\mathcal{EL}$ KB), $\Kinf$ is satisfiable so $\optc{\wkb} < \infty$.
Let $\I$ be an interpretation with optimal cost such that $\I \not\models Goal_1(aux)$ (and hence also $\I \not\models Goal_2(aux)$). Consider the interpretation $\J$ obtained by restricting $\I$ to the domain $\Delta^\J = (Aux_\top)^\I$. We show that $\J$ is a model of $(\K', \Sigma)$.

Since $\cost{\wkb}{\I} < \infty$, $\I \models \Kinf$ thus $\I \models \tilde{\K}$. As in the proof of \cite[Theorem 42]{Maniere}, it follows that  $\J\models \K'$. 
It remains to verify that $\J$ respects $\Sigma$. To this end, let $B \in \Sigma$ and $e \in B^\J$. We aim to show that $B(e)\in\A'$.

 We first show that $e\in \inds(\A')$. Since $B^\J=B^\I\cap \Delta^\J$, we have $e\in B^\I$. Since $\I\models\Tinf$, we obtain $e \in Goal_1^\I$ (hence also $e \in Goal_2^\I$). 
As $\I$ has optimal cost, 
$Goal_1^\I \subseteq \inds(\A') $. 
Indeed, suppose there exists $u \in Goal_1^\I$ such that $u \not \in \inds(\A')$. Consider the interpretation $\I'$ which is defined as $\I$ but where every occurence of $u$ is replaced by $aux$. Formally, we set:
\begin{align*} 
\Delta^{\I'} &= \Delta^\I \\
A^{\I'} &= A^\I \text{ for every } A \in N_C \text{ such that } u \not \in A^\I \\
A^{\I'} &= \{ A^\I \setminus \{ u \} \} \cup \{aux \} \text{ for every } A \in N_C \\
		&\text{ such that } u \in A^\I \\
R^{\I'} &= \{ (b,c) \mid b,c \neq u \text{ and } (b,c) \in R^\I \} \\
		&\cup \{ (aux,c) \mid c \neq u \text{ and } (u,c) \in R^\I \} \\
		&\cup \{ (b,aux) \mid b \neq u \text{ and } (b,u) \in R^\I \} \\
		&\cup \{ (aux,aux) \mid (u,u) \in R^\I \}
\end{align*}
It follows from the construction of $\I'$ that $\I' \models\Kinf$ and  $|Goal_i^{\I'}| = |Goal_i^\I|$ for $i = 1,2$ (since $\I \not\models Goal_i(aux)$). Moreover, $\I'$ does not violate $Goal_1(aux)$. We therefore have  
 $\cost{\K_\omega}{\I'} = |Goal_1^{\I'}| + |Goal_2^{\I'}| < |Goal_1^{\I}| + |Goal_2^{\I}| + 1 = \cost{\K_\omega}{\I}$. 
Therefore $e \in \inds(\A')$.

We now show that $B(e) \in \A'$. 
By construction of $\tilde{\T}$, $B\sqsubseteq Goal_1$ and $\exists R_B.B \sqsubseteq Goal_1$ are in  $\tilde{\T} \subseteq \Tinf$. 
Hence, since $\I \not \models Goal_1(aux)$ and $\I\models\Tinf$, it follows that $aux \notin (\exists R_B.B)^\I$. 
Since $B^\J=B^\I\cap \Delta^\J$, $e\in B^\I$ so $aux \notin (\exists R_B.B)^\I$ implies in particular that $(aux, e)\notin R_B^\I$. 
Since $\I\models\Ainf$, $\I\models R_B(aux, b)$ for every $b\in\inds(\A')$ such that $B(b) \not\in \A'$ so $e\in\inds(\A')$ and $(aux, e)\notin R_B^\I$ means that $B(e)\in\A'$.

We have thus showed that $\J$ respects $\Sigma$, hence that  $(\K', \Sigma) $ is satisfiable.

\medskip

We now present the reduction for \dec{I}{opt}{p}. Let $B$ be a fresh concept name. From the WKB $\WKB$ we used for the reduction for \dec{I}{opt}{c}, we define an $\mathcal{EL}_\bot$ WKB $\K''=(\langle\T'',\A''\rangle,\omega'')$ by setting:
\begin{align*}
\T'' &= \T \cup \{ B \sqcap Goal_1 \sqsubseteq \bot \} \\
\A'' &= \A 
\end{align*}
and letting $\omega''$ extend $\omega$ with $\omega''(B \sqcap Goal_1 \sqsubseteq \bot) = \infty$.

We show that $\K''_{\omega''} \!\sat{p}{opt}\! B(aux)$ iff $\wkb\! \not\sat{c}{opt}\! Goal_1(aux)$ (which we know holds iff $(\K',\Sigma)$ is satisfiable). First, note that for each interpretation $\I$ we have	$\cost{\K''_{\omega''}}{\I} = \cost{\wkb}{\I} + |vio_{B \sqcap Goal_1 \sqsubseteq \bot}(\I)|\omega_{B \sqcap Goal_1 \sqsubseteq \bot}$. Therefore it is clear that $\optc{\wkb} = \optc{\K''_{\omega''}}$ as $B$ is a fresh concept name.

\noindent($\Leftarrow$) Let $\I$ be an interpretation with optimal cost for $\wkb$ such that $\I \not\models Goal_1(aux)$. Let $\J$ be obtained from $\I$ by setting 
$B^\J = \{ aux \}$. We have $\cost{\K''_{\omega''}}{\J} = \cost{\wkb}{\I} = \optc{\wkb} = \optc{\K''_{\omega''}}$ thus $\J$ has optimal cost for $\K''$ and $\J \models B(aux)$. Hence $\K''_{\omega''} \sat{p}{opt} B(aux)$.

\noindent($\Rightarrow$) Let $\I$ be an interpretation with optimal cost for $\K''_{\omega''}$ such that $\I \models B(aux)$. Then $\I$ also has optimal cost for $\wkb$ and $\I \not\models Goal_1(aux)$ (otherwise its cost would be infinite).
\smallskip

Note that in the construction of $\wkb$ and $\K''_{\omega''} $ the weights used are either $1$ or $\infty$ thus the complexity lower bounds hold even for unary encoding of the weights.
\end{proof}

\ThCQAcertainLowerBoundCombined*

\begin{proof} 
We reduce the problem of BCQ entailment on $\mathcal{EL}$ KBs with closed concept names which is known to be \twoexptime-hard \cite{NgoOrSi} (\cf proof of Theorem~\ref{th:ThIQAoptLowerBoundCombined} for the definition of KBs with closed concept names) to \dec{C}{opt}{c}. 
Let $(\tilde{\K},\Sigma)$ be an $\mathcal{EL}$ KB with closed concept names with $\tilde{\K}=\langle\tilde{\T},\tilde{\A}\rangle$ in normal form, and let $q = \exists \vec{y} \phi$ be a BCQ and $var(q)$ be the variables that occur in $q$. 

We adapt the proof of \twoexptime-hardness of CQ evaluation on circumscribed KBs with a single minimized concept name and no fixed concept \cite[Theorem 2]{lutz2023querying}. Define, as in their proof, the following $\mathcal{EL}$ KB $\K'=\langle \T',\A'\rangle$. 
Let $M$, $L$, $X$, $\bar{X}$ and $\bar{A}$ for every $A \in \Sigma$ be fresh concept names, $u$ be a fresh role name, and $t$ be a fresh individual name.
\begin{align*}	
				\T' &= \{ A \sqsubseteq M \mid A \in \Sigma\} 
				\\ & \cup \{ A \sqcap \bar{A} \sqsubseteq L \mid A\in \Sigma\} 
				\\ & \cup \{ \exists u.L \sqsubseteq X \}
				\\ & \cup \{ X \sqsubseteq A \mid \top \sqsubseteq A \in \tilde{\T} \}
				\\ & \cup  \{  X  \sqcap A \sqsubseteq B \mid A \sqsubseteq B \in \tilde{\T} \}
				\\ & \cup \{ X  \sqcap A \sqsubseteq \exists r.(X \sqcap B)\mid A \sqsubseteq \exists r.B \in \tilde{\T} \} 
				\\ & \cup \{ X  \sqcap \exists r.(X \sqcap B) \sqsubseteq A \mid \exists r.B \sqsubseteq A \in \tilde{\T} \}
				\\ & \cup \{ X  \sqcap A_1 \sqcap A_2 \sqsubseteq A \mid  A_1 \sqcap A_2 \sqsubseteq A \in \tilde{\T} \}
				\\ & \cup \{ X \sqcap \bar{X} \sqsubseteq L \}\\
				 \A' &=  \tilde{\A}
				\\ & \cup \{ \bar{A}(a)\mid  a \in \inds(\tilde{\A}), A\in\Sigma \text{ and } A(a) \not\in \tilde{\A}\} 
				\\ & \cup \{ A(x) \mid A(x) \in \phi \}
				\\ & \cup \{ r(x,y)\mid r(x,y) \in \phi \}
				\\ & \cup \{ X(a) \mid a \in \inds(\tilde{\A}) \} 
				\\ & \cup \{ u(x,a) \mid a \in \inds(\tilde{\A}) \text{ and } x \in var(q)\} 
				\\ & \cup \{ M(x) \mid x \in var(q)\}
				\\ & \cup \{ M(t) \}
				\\ & \cup \{ \bar{X}(a)\mid a \in var(q) \cup \{ t \} \} 
				\\ & \cup \{ u(x,a) \mid a \in var(q) \cup \{ t \} \text{ and } x \in var(q)\}
\end{align*}
Let $q' = q \wedge \bigwedge_{x \in var(q)}  X(x)$. By the proof of \cite[Theorem 2]{lutz2023querying}, $(\tilde{\K},\Sigma) \models q$ iff $Circ_{CP}(\K')\models q'$, where $Circ_{CP}(\K')$ is the \emph{circumscribed KB} with only one minimized concept name: $M$. In this context, the preference relation $<_{CP}$ over interpretations is defined by $\J <_{CP} \I$ when $\Delta^\J = \Delta^\I$ and $M^\J \subsetneq M^\I$. A model of the circumscribed KB $Circ_{CP}(\K')$ is a model $\I$ of $\K'$ such that there are no $\J <_{CP} \I$ and $Circ_{CP}(\K')\models q'$ if every model of $Circ_{CP}(\K')$ satisfies $q'$.

Let us now define an $\ELbot$ WKB $\WKB $ such that $(\tilde{\K},\Sigma) \models q$ iff $\wkb \sat{c}{opt} q'$.
\begin{align*}
\T &= \T' \cup \{ M \sqsubseteq \bot \} \\
\A &= \A' \\
\omega(\tau) &= \infty \text{ for } \tau \in \T' \\
\omega(\alpha) &= \infty \text{ for } \alpha \in \A' \\
\omega(M \sqsubseteq \bot) &= 1
\end{align*}
Let $A_{\tilde{\A}} =\{a\mid A(a)\in\tilde{\A}\}$ for every $A\in\Sigma$.
First, note that by construction of $\K'$ and definition of $\wkb$, it holds that $\optc{\wkb} \geq |\bigcup_{A\in \Sigma} A_{\tilde{\A}}| + |var(q)| +  1$. \smallskip

\noindent($\Leftarrow$) Assume that $(\tilde{\K},\Sigma) \not\models q$ and let $\I$ be a model of $(\tilde{\K},\Sigma) $ such that $\I \not \models q$. 
We assume w.l.o.g.\ that $A^\I=\emptyset$ for concept names $A \not \in \signatureof[\tilde{\K}]$, and likewise for role names not present in  $\tilde{\K}$. 
Let $\J$ be the following interpretation, defined as in the proof of \cite[Theorem 2]{lutz2023querying}, where we assume w.l.o.g.\ that $\Delta^\I \cap (var(q) \cup \{ t \}) = \emptyset$:
\begin{align*}
\Delta^\J &= \Delta^\I \cup var(q) \cup \{ t \} \\
A^\J &= A^\I \cup \{ x \mid A(x) \in \A' \} \text{ for } A \in\NC\setminus \{ M, X \} \\
r^\J &= r^\I \cup \{ (x,y) \mid r(x,y) \in \A'\}  \text{ for } r\in\NR\\
X^\J &= \Delta^\I \\
M^\J &= \bigcup_{A \in \Sigma} A^\J \cup var(q) \cup \{ t \}
\end{align*}

By the proof of \cite[Theorem 2]{lutz2023querying}, $\J \models Circ_{CP}(\K')$ and $\J \not \models q'$. 
It follows that $\J\models\K'=\K_\infty$.  
Moreover, since  $\I$ is a model of $(\tilde{\K},\Sigma) $, for every $A\in\Sigma$, $A^\I=A_{\tilde{\A}}$. Thus, for every $A\in\Sigma$, by construction of $\A'$ and $\J$, $A^\J = A_{\tilde{\A}}\cup \{ x \mid A(x) \in \phi\}\subseteq A_{\tilde{\A}} \cup var(q)$. 
It follows that $\cost{\wkb}{\J} = |M^\J| =|\bigcup_{A\in \Sigma} A_{\tilde{\A}}| + |var(q)| +  1 = \optc{\wkb}$. 
Thus $\J$ is an interpretation with optimal cost such that $\J \not\models q'$. Hence $\wkb \not\sat{c}{opt} q'$. \smallskip

\noindent($\Rightarrow$)
Assume that $\wkb \not\sat{c}{opt} q'$. Let $\I'$ be an interpretation with optimal cost such that $\I' \not\models q'$. It is clear that $\I' \models Circ_{CP}(\K')$ (otherwise a model of $\K'$ such that $\J <_{CP} \I'$ would have a smaller cost than $\I'$). 

As in the proof of \cite[Theorem 2]{lutz2023querying} we claim that there is an interpretation $\I$ with optimal cost w.r.t $\wkb$ s.t. $\I \not\models q'$ and $\I$ satisfies the following properties:
	\begin{enumerate}[label=(\alph*)]
		\item $a \in A^\I $ implies $A(a) \in \A'$ for all $a \in var(q)\cup \{ t\}$,
		\item $(d,e) \in r^\I$ with $r\neq u$ implies $r(d,e) \in \A'$ for all $d,e\in \Delta^\I$ with $\{d,e\} \cap (var(q) \cup\{t\}) \neq \emptyset$, and
		\item $(d,e)\in u^\I$ implies $u(d,e)\in \A'$ for all $d,e \in \Delta^\I$.
	\end{enumerate}

The proof of the claim is the same as that of Claim 2 in the proof of \cite[Theorem 2]{lutz2023querying} as the interpretation they build has optimal cost w.r.t $\wkb$ (we just have to build $\I$ starting with $\I' \models Circ_{CP}(\K')$ previously chosen with optimal cost and since the proof shows that $\I$ is a model of $ Circ_{CP}(\K')$, hence of $\K_\infty$ and that $\I \leq_{CP} \I'$, $M^\I\subseteq M^{\I'}$ and $\I$ has optimal cost). 

Still following the proof of \cite[Theorem 2]{lutz2023querying}, 
we define $\J$ as the restriction of $\I$ to domain $X^\I$ and obtain that $\J \models (\tilde{\K},\Sigma)$ and $\J \not\models q$ (with the same proof as in \cite[Theorem 2]{lutz2023querying}). Hence $(\tilde{\K},\Sigma) \not\models q$.
It follows that \dec{C}{opt}{c} is \twoexptime-hard \wrt combined complexity. 
\bigskip

We now show how to adapt the preceding reduction to show that \dec{C}{b}{c} is \twoexptime-hard. The difficulty is that we need to reduce  to a single instance of \dec{C}{b}{c}, with a single input cost bound $k$. However, from the existing reduction, we only know that $|\bigcup_{A\in \Sigma} A_{\tilde{\A}}| + |var(q)| +  1 \leq \optc{\wkb} \leq   |\inds(\A')|= |\inds(\tilde{\A})| + |var(q)| + 1$, and we cannot easily determine which of these values is the optimal one. The general idea underlying the adapted reduction will be to use a sufficiently high cost bound $k$ that is guaranteed to be equal or greater than the optimal value, while at the same time modifying the KB to ensure that all interpretations with cost at most $k$ (including those with cost higher than the optimal cost) satisfy some properties that ensure that the central arguments of the reduction still go through.

\def\chooserole{pick}

To make this precise, consider the WKB $\K^*_{\omega^*}=(\langle \T^*, \A^*\rangle,\omega^*)$ obtained from $\WKB $ as follows: 
\begin{align*}
\T^* = \,& \T \cup \{  X \sqsubseteq \exists \chooserole. D\} \\
& \cup \{\exists \chooserole. D_1 \sqsubseteq N \}\\
& \cup \{\exists \chooserole. D_2 \sqsubseteq  \bar{A} \mid A \in \Sigma \}\\
& \cup \{D \sqcap \bar{D} \sqsubseteq \bot, X \sqcap \bar{D} \sqsubseteq \bot \} \\
& \cup \{\exists r. L \sqsubseteq L \mid r \in \signatureof[\tilde{\T}]\}\\
& \cup \{N \sqsubseteq \bot, D \sqsubseteq \bot\}\\
\A^* = \,& \A \cup \{N(a) \mid a \in \inds(\tilde{\A})\}\\
& \cup \{D(d_1), D(d_2), D_1(d_1), D_2(d_2)\}\\
& \cup \{\bar{A}(d_1), \bar{A}(d_2) \mid A \in \Sigma\}\\
& \cup \{\bar{X}(d_1), \bar{X}(d_2)\}\\
& \cup \{ u(x,d_1), u(x,d_2) \mid  x \in var(q)\} \\
& \cup \{\bar{D}(x) \mid x \in var(q) \cup \{t\}\} 
\end{align*}
where $d_1,d_2$ are fresh individuals, $D, D_1, D_2$, $\bar{D}$ and $N$ are fresh concept names, and $\chooserole$ is a fresh role name. 
The weight function is defined as follows:
\begin{align*}
\omega^*(M \sqsubseteq \bot) &= 1 \\
\omega^*(D \sqsubseteq \bot) &= |\inds(\tilde{\A})| + |var(q)| + 1 \\
\omega^*(N \sqsubseteq \bot) &= |\inds(\tilde{\A})| + |var(q)| + 1 
\end{align*}
with all other axioms and all ABox assertions assigned infinite weight. 
We keep $q'$ as previously defined and set 
$$k= (|\inds(\tilde{\A})| + |var(q)| + 1) \cdot (|\inds(\tilde{\A})| + 3)$$
To establish \twoexptime-hardness, we prove in what follows that $(\tilde{\K},\Sigma) \models q$ iff $\K^*_{\omega^*} \sat{c}{k} q'$.\smallskip

\noindent($\Leftarrow$) Assume that $(\tilde{\K},\Sigma) \not\models q$ and let $\I$ be a model of $(\tilde{\K},\Sigma) $ such that $\I \not \models q$. 
We may assume w.l.o.g.\ that $ \Delta^\I$ does not contain any element from $var(q) \cup \{ t,d_1, d_2\}$ and that 
all symbols not present in $\tilde{\K}$ are interpreted as $\emptyset$ in $\I$.
We define an interpretation $\J$ as follows:
\begin{align*}
\Delta^\J =& \Delta^\I \cup var(q) \cup \{ t,d_1, d_2\}\\
A^\J =& A^\I \cup \{ x \mid A(x) \in \A^* \} \\ & \text{ for } A \in\NC\setminus (\{ M, X \} \cup \{\bar{A}\mid A \in \Sigma\}) \\ 
r^\J =& r^\I \cup \{ (x,y) \mid r(x,y) \in \A^*\}  \text{ for } r\in\NR \setminus \{\chooserole\}\\
X^\J =& \Delta^\I \\
M^\J =& \bigcup_{A \in \Sigma} A^\J \cup var(q) \cup \{ t \}\\
\bar{A}^\J =& \{ a \mid \bar{A}(a) \in \A^* \} \cup (\Delta^\I \setminus \inds(\tilde{\A})) \text{ for } A \in \Sigma\\
\chooserole^\J =& \{(a,d_1) \mid a \in \inds(\tilde{\A})\} \cup \\&\{(e, d_2) \mid e \in \Delta^\I \setminus \inds(\tilde{\A})\}
\end{align*}
In order to obtain $\K^*_{\omega^*} \not\sat{c}{k} q'$, we shall show that the interpretation $\J$ has cost at most~$k$ and is such that $\J \not \models q'$.

We first show that $\J$ has cost at most $k$. By construction, $\J$ satisfies all of the ABox assertions from $\A^*$. 
Moreover, it can be easily verified that $\J$ satisfies all infinite-weight axioms in $\T^* \cap \T$ due to the similarly with the previous definition of $\J$. 
The only axioms from $\T$ which need to be rechecked are the axioms of the form $A \sqcap \bar{A} \sqsubseteq L$, due to the modified definition of $\bar{A}^\J$. 
We consider these axioms as well as the axioms occurring only in $\T^*$: 
\begin{itemize}
\item Consider the axiom $A \sqcap \bar{A} \sqsubseteq L$ and suppose for a contradiction that $e \in (A \sqcap \bar{A})^\J$. 
Since $e \in A^\J$ and $A \in \Sigma$, we must have either $e \in A^\I$,
 in which case $A(e) \in \tilde{\A}$ and $e \not \in \bar{A}^\J$, 
or $e \in var(q)$, in which case $e \not \in \bar{A}^\J$. In both cases, $e \not \in (A \sqcap \bar{A})^\J$, a contradiction. 
We have thus shown that $(A \sqcap \bar{A})^\J = \emptyset$
so $A \sqcap \bar{A} \sqsubseteq L$ is trivially satisfied.
\item 
Consider $X \sqsubseteq \exists \chooserole. D \in \T^*$ and suppose $e \in X^\J$. 
By definition, we must either have $e \in \inds(\tilde{\A})$ or $e \in \Delta^\I \setminus \inds(\tilde{\A})$. 
In the former case, we have $(e,d_1) \in \chooserole^\J$ and $d_1 \in D^\J$, so $e \in (\exists \chooserole. D)^\J$. 
In the latter case, we have $(e,d_2) \in \chooserole^\J$ and $d_2 \in D^\J$, so again $e \in (\exists \chooserole. D)^\J$. 
\item Consider $D \sqcap \bar{D} \sqsubseteq \bot$ and $X \sqcap \bar{D} \sqsubseteq \bot$. 
Then it is clear from the definition of $\A^*$ and $\J$ that 
$(D \sqcap \bar{D})^\J=\emptyset$ and $(X \sqcap \bar{D})^\J=\emptyset$  as required. 
\item Consider the axiom $\exists \chooserole. D_1 \sqsubseteq N$ and suppose $e \in (\exists \chooserole. D_1)^\J$. 
By construction, this means that $e \in \inds(\tilde{\A})$, hence $e \in N^\J$. 
\item Consider an axiom $\exists \chooserole. D_2 \sqsubseteq \bar{A}$, with $A \in \Sigma$, and $e \in (\exists \chooserole. D_2)^\J$. 
Then due to the definition of $\chooserole^\J$ and $D_2^\J$, we must have $e \in \Delta^\I \setminus \inds(\tilde{\A})$, which immediately gives $e \in \bar{A}^\J$, as required. 
\item Consider an axiom $\exists r. L \sqsubseteq L \in \T^*$. Since $L^\I=\emptyset$ and no $L$ assertions occur in $\A^*$, we
have $L^\J=\emptyset$. It follows that $(\exists r. L \sqsubseteq L)^\J = \emptyset$, so the axiom is trivially satisfied.
\end{itemize}
It follows that no infinite-weight statements are violated. To compute the cost, it 
remains to consider the three axioms having finite weight:
\begin{itemize}
\item Consider the axiom $M \sqsubseteq \bot$ of weight $1$. As $|M^\J| = |\bigcup_{A\in \Sigma} A_{\tilde{\A}}| + |var(q)| +  1$, 
the violations of $M \sqsubseteq \bot$ contribute a total of $|\bigcup_{A\in \Sigma} A_{\tilde{\A}}| + |var(q)| +  1$ to the cost of $\J$. 
\item Consider the axiom $N \sqsubseteq \bot$ of weight $|\inds(\tilde{\A})| + |var(q)| + 1$. As $|N^\J| = |\inds(\tilde{\A})|$, 
the violations of $N \sqsubseteq \bot$ contribute $ |\inds(\tilde{\A})| \cdot (|\inds(\tilde{\A})| + |var(q)| + 1)$. 
\item Consider the axiom $D \sqsubseteq \bot$ of weight $|\inds(\tilde{\A})| + |var(q)| + 1$. As $|D^\J|=2$, 
the violations of this axiom contribute $2 \cdot (|\inds(\tilde{\A})| + |var(q)| + 1)$. 
\end{itemize}
From the preceding items, we arrive at the following cost:
\begin{align*}&(|\bigcup_{A\in \Sigma} A_{\tilde{\A}}| + |var(q)| +  1) \ + \\&(|\inds(\tilde{\A})| + 2) \cdot (|\inds(\tilde{\A})| + |var(q)| + 1)\end{align*} 
Since $|\bigcup_{A\in \Sigma} A_{\tilde{\A}}| \leq |\inds(\tilde{\A})|$, we obtain $\cost{\K^*_{\omega^*}}{\J} \leq k$. 

We now show that $\J \not \models q'$. The argument is similar to one used for the optimal-cost case. 
Indeed, since $q'$ contains the atom $X(v)$ for every $v \in vars(q)$, and $X^\J = \Delta^\I$, 
any match for $q'$ in $\J$ would yield a match for $q$ in $\I$. As $\I$ contains no such match, 
we can infer that $\J \not \models q'$. \medskip

\noindent($\Rightarrow$)
Assume that $\K^*_{\omega^*} \not\sat{c}{k} q'$, and let $\I'$ be an interpretation with optimal cost such that $\I' \not\models q'$. 
We start by establishing the following claim, which shows how we can modify $\I'$ to obtain another interpretation $\I$ with $\cost{\K^*_{\omega^*}}{\I} \leq k$ and $\I \not \models q'$ 
and which satisfies a set of structural properties. \smallskip

\noindent\textbf{Claim.} There exists an interpretation $\I$ with $\cost{\K^*_{\omega^*}}{\I} \leq k$ and $\I \not \models q'$ satisfying the following properties: 
	\begin{enumerate}[label=(\alph*)]
		\item $a \in A^\I $ implies $A(a) \in \A^*$ for all $a \in var(q)\cup \{ t, d_1, d_2\}$, 
		\item $(e,e') \in r^\I$ implies $r(e,e') \in \A^*$ for all $r \not \in \{u, \chooserole\}$ and $e,e'\in \Delta^\I$ with $\{e,e'\} \cap (var(q) \cup\{t,d_1,d_2\}) \neq \emptyset$, 
		\item  $(e,e')\in u^\I$
		implies $u(e,e')\in \A^*$,
		\item $(e,e') \in \chooserole^\I$ implies that $e \in X^\I$, 		
		\item for every $e \in \Delta^\I \setminus \inds(\A^*)$, there exist sequences $e_0, \ldots, e_\ell \in \Delta^\I$ and $r_1, \ldots, r_\ell \in \NR \cap \signatureof[\tilde{\T}]$
		such that $e_0 \in \inds(\tilde{\A})$, $e_\ell=e$, and for every $1 \leq i \leq \ell$,  we have $(e_{i-1}, e_i) \in r_i^\I$ and $e_i \in X^\I$. 
	\end{enumerate}
We thus conserve properties (b) and (c) from the optimal-cost case, and we extend property (a) to cover also the new individuals $d_1,d_2$. 
Property (d) ensures that the new role $\chooserole$ always originates from an element in $X$. The final property (e) enforces that every non-ABox element
is reachable from some original ABox individual  ($\inds(\tilde{\A})$) via a chain of elements in $X$. When combined with the axioms $\exists r. L \sqsubseteq L$, 
this will ensure that any `errors' will be propagated back to some original ABox individual, which in turn will make the query $q'$ hold. \smallskip

\noindent\emph{Proof of claim.} Define $XConn$ as the smallest set that contains $\inds(\tilde{\A})$ and such that 
$e' \in XConn$ whenever $e' \in X^{\I'}$ and there exists $e \in XConn$ and $r \in \NR \cap \signatureof[\tilde{\T}]$ such that  $(e,e') \in r^{\I'}$. 
We define the interpretation $\I$ as follows: 
\begin{align*}
\Delta^\I =& \Delta^{\I'} \cap (\inds(\A^*) \cup XConn)\\
A^\I =& (A^{\I'} \cap \Delta^\I )  \setminus \{ x \in var(q) \\&\cup \{t,d_1,d_2\} \mid A(x) \not \in \A^* \} \\ 
r^\I =& (r^{\I'} \cap (\Delta^\I \times \Delta^\I )) \setminus \{(e,e') \mid r(e,e') \not \in \A^* \text{ and } \\
& \qquad \{e,e'\} \cap (var(q) \cup \{t,d_1,d_2\})\neq\emptyset\} \\
u^\I=& \{(a,b) \mid u(a,b) \in \A^*\}\\
\chooserole^\I =& \chooserole^{\I'} \cap (XConn \times \Delta^\I)
\end{align*}
where $r$ above is neither $u$ nor $\chooserole$. By construction, $\I$ satisfies properties (a)-(e). 
Moreover, as $\I$ is obtained from $\I'$ by removing domain elements 
and reducing the interpretations of some of the concept and role names, we immediately obtain $\I \not \models q'$
from $\I' \not \models q'$. 

To show that $\cost{\K^*_{\omega^*}}{\I} \leq k$, we first show that none of the infinite-weight statements 
is violated: 
\begin{itemize}
\item ABox assertions: we only remove an ABox individual from a concept (or a pair of elements involving an ABox individual from a role)
if no ABox assertion is present to enforce the presence of the individual or pair in the concept / role. Hence since all ABox assertions
were satisfied in $\I'$, they remain satisfied in $\I$. 
\item Axiom $A \sqsubseteq M$: as the axiom is satisfied in $\I'$, the only elements worth checking are the individuals in $var(q) \cup \{t,d_1,d_2\}$.
For those in $var(q) \cup \{t\}$, it suffices to note they belong to $M^\I$, whereas for $d_1$ and $d_2$, we can observe that $d_1 \not \in A^\I$ and 
$d_2 \not \in A^\I$ for all $A \in \Sigma$. 
\item Axiom $\exists u.L \sqsubseteq X$: we show that the axiom is satisfied by showing that $(\exists u. L)^\I = \emptyset$. 
Suppose for a contradiction that $e \in (\exists u. L)^\I$, and take some $e'$ with $(e,e') \in u^\I$ and $e' \in L^\I$. 
As the role $u$ is interpreted in $\I$ exactly as following the ABox $\A^*$, we know that $e \in var(q)$ and moreover that 
$(e'',e') \in u^\I$ for every $e'' \in var(q)$. As $u^\I \subseteq u^{\I'}$ and $L^\I \subseteq L^{\I'}$, and $\I'$ satisfies $\exists u.L \sqsubseteq X$,
we have $e'' \in X^{\I'}$ for every $e'' \in var(q)$. When combined with the satisfaction of $\A^*$ (which contains a `copy' of $q$ using the $var(q)$ individuals),
this allows us to obtain $\I' \models q'$, a contradiction. 
\item Axiom $A \sqcap \bar{A} \sqsubseteq L$ with $A \in \Sigma$: here again, it is enough to check individuals $a \in var(q) \cup \{t,d_1,d_2\}$. 
For each such $a$, we have $\bar{A}(a) \not \in \A^*$, hence $a \not \in \bar{A}^\I$. It follows that $a \not \in (A \sqcap \bar{A})^{\I}$, so the axiom is trivially satisfied.
\item Axiom $X \sqcap \bar{X} \sqsubseteq L$: again, we can focus on individuals $a \in var(q) \cup \{t,d_1,d_2\}$. 
Due to the ABox, we know that $a \in \bar{X} ^\I$. If $a \in X^\I$, then $a \in X^{\I'}$, which would imply $a \in L^{\I'}$ (since $\I'$ must satisfy $X \sqcap \bar{X} \sqsubseteq L$),
which would once again lead to a contradiction with the assumption that $\I' \not \models q'$. 
\item Axioms of the forms $X \sqsubseteq A$, $X  \sqcap A \sqsubseteq B$, $X  \sqcap \exists r.(X \sqcap B) \sqsubseteq A$, and $X  \sqcap A_1 \sqcap A_2 \sqsubseteq A$: a similar argument applies here, as the only relevant elements to check are $a \in var(q) \cup \{t,d_1,d_2\}$, and $a \in X^\I$ would again allow us to obtain $a \in L^{\I'}$, and then contradict $\I' \not \models q'$. 
\item Axiom of the form $X  \sqcap A \sqsubseteq \exists r.(X \sqcap B)$: here we require a different argument, since the restriction of the domain to $XConn$ could potentially remove a witness for $\exists r.(X \sqcap B)$. So consider some $e \in (X  \sqcap A)^\I$. If $e \in var(q) \cup \{t,d_1,d_2\}$,  we use $e \in X^\I$ to obtain a contradiction, as in the preceding items. Otherwise, we have $e \in XConn$. Since $e \in (X  \sqcap A)^\I$, and $e \not \in var(q) \cup \{t,d_1,d_2\}$, it follows that $e \in (X  \sqcap A)^{\I'}$
and thus there exists $e'$ such that $(e,e') \in r^{\I'}$ and $e' \in (X \sqcap B)^{\I'}$. But then we must also have $e' \in XConn$, hence $e' \in \Delta^\I$. 
It follows that $(e,e') \in r^{\I}$ and $e' \in (X \sqcap B)^{\I}$, so $e \in (\exists r.(X \sqcap B))^\I$, as required. 
\item Axiom $D \sqcap \bar{D} \sqsubseteq \bot$: if $e \in (D \sqcap \bar{D})^\I$, then $e \in (D \sqcap \bar{D})^{\I'}$, which is impossible since $\I'$ satisfies the axiom. 
\item Axiom $X \sqcap \bar{D} \sqsubseteq \bot$: analogous to previous item. 
\item Axiom $X \sqsubseteq \exists \chooserole. D$: take some $e \in X^\I$. If $e \in var(q) \cup \{t,d_1,d_2\}$, we obtain a contradiction as in earlier items. Otherwise, 
we have $e \in XConn$. Since $e \in X^{\I'}$ and $\I'$ satisfies the axiom, there must exist $e'$ such that $(e,e')\in \chooserole^{\I'}$ and $e' \in D^{\I'}$. 
If we have $e' \in \{d_1,d_2\}$, then we get $e' \in D^{\I}$ due to the ABox assertions $D(d_1), D(d_2)$. 
Next observe that $e' \in var(q) \cup \{t\}$ is not possible, since by the previous item, $(D \sqcap \bar{D})^\I = \emptyset$, while $x \in \bar{D}^\I$ for each $x \in var(q) \cup \{t\}$. 
For $e' \not \in var(q) \cup \{t, d_1, d_2\}$, then $e' \in D^{\I'}$ implies $e' \in D^{\I}$. Finally, we note that since $e \in XConn$, 
$(e,e')\in \chooserole^{\I'}$ implies $(e,e')\in \chooserole^{\I}$. We therefore obtain $e \in (\exists \chooserole. D)^\I$. 
\item Axiom $\exists \chooserole. D_1 \sqsubseteq N$: Suppose $e \in (\exists \chooserole. D_1)^\I$, and let $e'$ be such that $(e,e')\in \chooserole^{\I}$ and $e' \in D_1^{\I}$.
Due to the definition of $\chooserole^\I$, we have $e \in XConn$ and $(e,e')\in \chooserole^{\I'}$. 
Since $D_1^\I\subseteq D_1^{\I'}$, we also have $e' \in D_1^{\I'}$.
This means that $e \in (\exists \chooserole. D_1)^{\I'}$, hence $e \in N^{\I'}$ due to the satisfaction of the axiom $\exists \chooserole. D_1 \sqsubseteq N$ in $\I'$.
Finally, since $e \in XConn$, we have $e \in X^\I$, so by repeating earlier arguments, we can derive $e \not \in var(q) \cup \{t,d_1,d_2\}$. 
This allows us to infer the desired $e \in N^{\I}$ from $e \in N^{\I'}$.
\item Axiom $\exists \chooserole. D_2 \sqsubseteq  \bar{A}$: Suppose $e \in (\exists \chooserole. D_2)^\I$, and let $e'$ be such that $(e,e')\in \chooserole^{\I}$ and $e' \in D_2^{\I}$.
Reasoning analogously to the previous item, we can show that $e \in (\exists \chooserole. D_2)^{\I'}$, hence $e \in \bar{A}^{\I'}$. Here again we can show that $e \in X^\I$ and thus $e \not \in var(q) \cup \{t,d_1,d_2\}$. It then follows from $e \in \bar{A}^{\I'}$ that $e \in \bar{A}^{\I}$. 
\item Axiom $\exists r. L \sqsubseteq L$ for $r \in \signatureof[\tilde{\T}]$: Suppose that $e \in (\exists r. L)^\I$, and take some $e'$ such that $(e,e') \in r^\I$ and $e' \in L^\I$. 
Since $r^\I\subseteq r^{\I'}$ and $L^\I\subseteq L^{\I'}$, we have $(e,e') \in r^{\I'}$ and $e' \in L^{\I'}$, hence $e \in (\exists r. L)^{\I'}$. As $\I'$ satisfies $\exists r. L \sqsubseteq L$,
this yields $e \in L^{\I'}$. We can then use $e \in L^{\I'}$ and $\I' \not \models q'$ to infer that $e \not \in var(q) \cup \{t,d_1,d_2\}$. This in turn means that $e \in L^\I$ must hold as $e \in L^{\I'}$.
\end{itemize}
We next consider the three axioms with finite weights: 
$$M \sqsubseteq \bot \quad  N \sqsubseteq \bot \quad D \sqsubseteq \bot$$
Here the argument is easy, since the definition of $\I$ ensures $M^\I \subseteq M^{\I'}$, $N^\I \subseteq N^{\I'}$,
and $D^\I \subseteq D^{\I'}$. From this, we can immediately infer $\cost{\K^*_{\omega^*}}{\I} \leq \cost{\K^*_{\omega^*}}{\I'}$, 
and since we have assumed that $\cost{\K^*_{\omega^*}}{\I'}  \leq k$, we get $\cost{\K^*_{\omega^*}}{\I} \leq k$, as required. 
 (\emph{End proof of claim})\medskip
 
 Now we take the interpretation $\I$ given by the preceding claim, and we define $\J$ 
 as the restriction of $\I$ to elements in $X^\I$. We aim to show that $\J \models (\tilde{\K},\Sigma)$ and $\J \not\models q$. 
 The latter is immediate since $\J$ is a restriction of $\I$ and $\I \not \models q'$, with $q'$ containing $q$ as a subquery. 
 We thus concentrate on showing $\J \models (\tilde{\K},\Sigma)$, starting by considering the different forms of axioms in $\tilde{\T}$: 
 \begin{itemize}
 \item Axiom $\top \sqsubseteq A \in \tilde{\T}$: In this case, $\T^*$ contains $X \sqsubseteq A$. Since $\Delta^\J=X^\I$
 and $X^\I \subseteq A^\I$, we get $\Delta^\J \subseteq A^\I$.
 \item Axiom $A \sqsubseteq B \in \tilde{\T}$: In this case, $\T^*$ contains $X  \sqcap A \sqsubseteq B$. Again, using 
 $\Delta^\J=X^\I$ and the fact that $\I$ is a model of $\T^*$, we have $A^\J = A^\I \cap X^\I \subseteq B^\I \cap X^\I = B^\J$. 
 \item Axiom $ A \sqsubseteq \exists r.B \in \tilde{\T}$: In this case, $\T^*$ contains $X  \sqcap A \sqsubseteq \exists r.(X \sqcap B)$. 
 We have $A^\J = A^\I \cap X^\I \subseteq X^\I \cap (\exists r.(X \sqcap B))^\I \subseteq (\exists r. B)^\J$. 
 \item Axiom $\exists r.B \sqsubseteq A \in \tilde{\T}$: In this case, $X  \sqcap \exists r.(X \sqcap B) \sqsubseteq A$ appears in $\T^*$. 
 We then have $(\exists r. B)^\J = (X^\I \cap (\exists r. (B \sqcap X))^\I) \subseteq X^\I \cap A^\I \subseteq A^\J$. 
 \item Axiom $A_1 \sqcap A_2 \sqsubseteq A \in \tilde{\T}$: In this case, $\T^*$ contains $X  \sqcap A_1 \sqcap A_2 \sqsubseteq A$. 
 We have $(A_1  \sqcap A_2)^\J = X^\I \cap A_1^\I \cap A_2^\I \subseteq X^\I \cap A^\I = A^\J$. 
 \end{itemize}
As for the ABox assertions in $\tilde{A}$, it suffices to observe that they are satisfied in $\I$ and that $a \in X^\I$ 
for every $a \in \inds(\tilde{\A})$. 

It remains to show that $\J$ respects $\Sigma$. To this end, it is useful to observe 
that due solely to the ABox assertions in $\A^*$ involving the concept names 
$M$, $N$, and $D$, the cost of $\I$ must be at least:
\begin{align*}
&(|\bigcup_{A\in \Sigma} A_{\tilde{\A}}| + |var(q)| +  1)\ + \\&(|\inds(\tilde{\A})| + 2) \cdot (|\inds(\tilde{\A})| + |var(q)| + 1)
\end{align*}
and it is at most 
$$k= (|\inds(\tilde{\A})| + |var(q)| + 1) \cdot (|\inds(\tilde{\A})| + 3)$$
by our earlier assumption.  
We now consider the two possible ways that the closed predicates may not be respected: 
\begin{itemize}
\item Suppose that $a \in A^\J$, $a \in \inds(\tilde{\A})$, but $A(a) \not \in \tilde{\A}$. Then $a \in A^\I$ and 
$a \in \bar{A}^\I$ (due to $\bar{A}(a) \in \A^*$). It can then be shown, using the axioms $A \sqcap \bar{A} \sqsubseteq L$ and $\exists u. L \sqsubseteq X$ and assertions $u(x,a)$ for $x \in var(q)$, that we must have $\I  \models q'$, which is a contradiction. 
\item Suppose instead that $e \in A^\J$ but $e \not \in  \inds(\tilde{\A})$. Since $e \in \Delta^\J$, it follows that $e \in X^\I$, and hence that $e \not \in var(q) \cup \{t,d_1,d_2\}$. Moreover,  
due to the axiom $X \sqsubseteq \exists \chooserole. D$, there must exist $e' \in D^\I$ such that $(e,e') \in \chooserole^\I$. 
There are three possibilities for $e'$ to consider: 
\begin{itemize}
\item $e' \not \in \{d_1, d_2\}$, in which case $|D^\I|\geq 3$. The addition of a further $D$ element, together with $D \sqsubseteq \bot$ of weight $|\inds(\tilde{\A})| + |var(q)| + 1$, introduces an extra $|\inds(\tilde{\A})| + |var(q)| + 1$, taking the total cost of $\I$ above $k$, a contradiction.
\item $e'=d_1$, in which case we must have $e \in N^\I$. But since we assumed that $e \not \in  \inds(\tilde{\A})$, this means that we have an additional element in $N$, and thus must pay a further penalty of $|\inds(\tilde{\A})| + |var(q)| + 1$ for violating $N \sqsubseteq \bot$, again taking the cost of $\I$ above $k$. 
\item $e' = d_2$, in which case we must have $e \in \bar{A}^\I$. But due to the axiom $A \sqcap \bar{A} \sqsubseteq L$, we must have $e \in L^\I$. Furthermore, by point (e) of the claim, there exist sequences $e_0, \ldots, e_\ell \in \Delta^\I$ and $r_1, \ldots, r_\ell \in \NR \cap \signatureof[\tilde{\T}]$
such that $e_0 = a \in \inds(\tilde{\A})$, $e_\ell=e$, and for every $1 \leq i \leq \ell$,  we have $(e_{i-1}, e_i) \in r_i^\I$ and $e_i \in X^\I$. 
By a simple inductive argument, using the axioms $\exists r. L \sqsubseteq L$, we can infer that the individual $e_0 = a \in \inds(\tilde{\A})$ belongs to $L^\I$. 
But this allows us, reusing earlier arguments, to conclude that $\I \models q'$, a contradiction. 
\end{itemize}
\end{itemize}
As we reach a contradiction in each of the cases, we may conclude that  $\J$ respects $\Sigma$.

A final observation is that since the weights and input $k$ value used in the reduction are polynomial, we obtain \twoexptime-hardness even for unary 
encoding. 
\end{proof}


\newcommand{\Idag}{\I^\dagger}

\section{Proofs for Section \ref{sec:data}}

\subsection{Proofs of Propositions \ref{prop:boundedSizekInterpretation} and \ref{prop:CQboundedPossible}}

\PropBoundedSizekInterpretation*
\begin{proof}

Let $\WKB$ be a WKB and \emph{k} an integer. Let $\I$ be an interpretation with $\cost{\wkb}{\I} \leq k$.
Let $\J$ be the $S$-filtration of $\I$ with $S=sub(\T)$, defined as in the proof of Proposition \ref{prop:boundedModelALCO} (Appendix \ref{app-filtration}). 
We know that $|\Delta^\J| \leq |\inds(\K)| + 2^{|\T|}$. We show that $\cost{\wkb}{\J} \leq k$.

To do so, we show that $\cost{\wkb}{\J} \leq \cost{\wkb}{\I}$. First note that by construction of $\J$, $ \vio{\A}{\J} = \vio{\A}{\I}$. 
	
We are now left to show that for every $\tau \in \T$ we have $|\vio{\tau}{\J}| \leq |\vio{\tau}{\I}|$. Let $\tau = C \sqsubseteq D \in \T$. We need to show that $|(C\sqcap\neg D)^\J|\leq |(C\sqcap\neg D)^\I|$. Since $C,D \in S$ this directly follows from Lemma \ref{lem:mainEquiv}. 
\end{proof}

\PropCQboundedPossible*
\begin{proof}
Let $\WKB$ be a WKB, \emph{k} an integer and \emph{q} a conjunctive query. Let $\I$ be an interpretation with $\cost{\wkb}{\I} \leq k$ s.t. $\I \models q$.
Let $\J$ be the $S$-filtration of $\I$ with $S=sub(\T)$ (as defined in Appendix \ref{app-filtration}).  With the proofs of Proposition \ref{prop:boundedModelALCO} and Proposition \ref{prop:boundedSizekInterpretation}, we know that $|\Delta^\J| \leq |\inds(\K)| + 2^{|\T|}$ and $\cost{\wkb}{\J} \leq k$. We want to show that $\J \models q$.

Let $\pi : var(q) \rightarrow \Delta^\I$ be a match for q in $\I$ and define $\tilde{\pi} : var(q) \rightarrow \Delta^\J$ by
\begin{itemize}
\item $\tilde{\pi}(x) = \pi(x)$ if $\pi(x) \in \Delta^\I \cap \inds(\K)$ and
\item$\tilde{\pi}(x) = [\pi(x)]_S$ if $\pi(x) \in \Delta^\I \setminus \inds(\K)$.
\end{itemize}
It follows from Lemma \ref{lem:mainEquiv}, the definition of $S$-filtration, and the fact that $\pi$ is a match for \emph{q} in $\I$ that $\tilde{\pi}$ is a match for $q$ in~$\J$, as required. 
\end{proof}

\subsection{Proof of Proposition \ref{prop:CQboundedCertain}}

Consider an $\mathcal{ALCO}$ WKB $\WKB$, a BCQ~$q$, and a finite bound $k$,
and suppose that there exists an interpretation $\I$ with $\cost{\wkb}{\I} \leq k$ such that $\I \not \models q$. 
We may assume w.l.o.g.\ that $q$ does not mention any individuals not present in $\K$. 

In order to prove Proposition \ref{prop:CQboundedCertain}, we will suitably adapt the constructions 
in \cite{Maniere} to transform $\I$ into a finite interpretation $\J$, of size polynomial in $|\A|$, such that 
$\cost{\wkb}{\J} \leq k$ and $\J \not \models q$. 

\paragraph{Translating input WKB into a standard KB in normal form}
As the constructions in \cite{Maniere} work on standard KBs (without weights) in normal form,
our first step will be to replace the initial WKB $\K$ with an $\ALCO$ KB in normal form. 
Essentially the idea is to remove all inclusions with finite weights and instead introduce new concept
names into the TBox to be able to keep track of axiom violations. We will also prune the initial ABox to keep only those 
assertions that are satisfied in the interpretation $\I$. 

We recall that an $\ALCO$ KB is in normal form if all of its TBox axioms are in one of the following forms:
\begin{align*}
\axiom{\top}{D} \qquad \axiom{A}{D} \qquad \axiom{\{a\}}{A}  \qquad 
\axand
\\
\axexistsleft
\qquad
\axexistsright
\qquad
\axnotleft
\qquad
\axnotright
\end{align*}
where $A, A_1, A_2, B \in \NC$, $D \in \NC \cup \{\{a\} \mid \NI\}$, and $R \in \NR$. 
It is well known that an arbitrary $\ALCO$ TBox can be transformed (possibly through introduction of new concept names)
into an $\ALCO$ TBox in normal form. As the transformation may possibly introduce new concept names, 
the new TBox may not be equivalent to the original one, but it will be a conservative extension, 
which is sufficient for our purposes.  

We now proceed as follows to convert our initial WKB into a KB in normal form:
\begin{enumerate}
\item For every 
 $\tau = C \sqsubseteq D \in \T \setminus \T_\infty$, 
let $A_\tau$ be a fresh concept name (i.e. occurring neither in $\K$ nor $q$), and set $$\T_\tau= \{A_\tau \sqsubseteq C \sqcap \neg D, C \sqcap \neg D \sqsubseteq A_\tau\}$$

\item Apply the normalization procedure to the TBox 
$$\T_\infty^{vio} = \T_\infty \cup \bigcup_{\tau \in \T \setminus \T_\infty} \T_\tau$$ 
and let $\Tnew$ denote the resulting TBox. We can assume w.l.o.g.\ that all concept names in $\signatureof[\Tnew] \setminus \signatureof[\T_\infty^{vio}]$
occur neither in $q$ nor $\K$. 
\end{enumerate} 
We set $\Knew = \langle \Tnew, \Anew \rangle$, where $$\Anew = \{\alpha \in \A \mid \I \models \alpha\} \cup \{A_\top(a) \mid a \in \indsof{\A}\}$$
with $A_\top$ a fresh concept, not in $\Tnew$ nor $q$.
Note that the assertions $A_\top(a)$ simply serve to ensure that no individuals from $\K$
are lost during the translation (which will be convenient for the later constructions and proofs).

The following lemma summarizes the properties of $\Knew$.  Items (ii)-(iv) basically correspond to 
$\Tnew$ being a conservative extension of $\T_\infty^{vio}$, while (v) is a direct consequence 
of (iii) and the definition of $\T_\infty^{vio}$. 

\begin{lemma}\label{newkb}
The KB $\Knew$ satisfies the following:
\begin{enumerate}[label=(\roman*),leftmargin=25pt]
\item $\inds(\K) = \inds(\Knew)$
\item 
$\signatureof[\T] \subseteq \signatureof[\T_\infty^{vio}] \subseteq \signatureof[\Tnew]$;
\item 
every model of $\Tnew$ is a model of $\T_\infty^{vio}$;
\item
for every model $\I_1$ of $\T_\infty^{vio}$, there exists a model $\I_2$ of $\Tnew$ 
with $\Delta^{\I_1}=\Delta^{\I_2} $ such that 
$\cdot^{\I_2}$ and $\cdot^{\I_1}$ coincide on all concept and role names except those in $\signatureof[\Tnew] \setminus \signatureof[\T_\infty^{vio}]$;
\item for every $\tau \in \T \setminus \Tinf$ and every model $\I'$ of $\Knew$,
$\vio{\tau}{\I'} = A_{\tau}^{\I'}$.
\end{enumerate}
\end{lemma}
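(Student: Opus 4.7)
My plan is to establish the five items in order, with the bulk of the work lying in (iii)--(iv), which together express that $\Tnew$ is a conservative extension of $\T_\infty^{vio}$ over the shared signature. The other items will follow by directly unpacking the construction.

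For (i), I would observe that $\Anew$ is obtained from $\A$ by discarding some assertions and by adding assertions $A_\top(a)$ for each $a\in\inds(\A)$, so $\inds(\Anew)=\inds(\A)$; since the standard $\ALCO$ normalization only introduces fresh concept names and preserves nominals syntactically, $\inds(\Tnew)=\inds(\T_\infty^{vio})$, and the definition of $\T_\infty^{vio}$ gives $\inds(\T_\infty^{vio})=\inds(\T)$. Combining these yields $\inds(\Knew)=\inds(\T)\cup\inds(\A)=\inds(\K)$. For (ii), the first inclusion holds because $\T_\infty\subseteq\T_\infty^{vio}$ and, for each $\tau=C\sqsubseteq D\in\T\setminus\Tinf$, the two axioms $A_\tau\sqsubseteq C\sqcap\neg D$ and $C\sqcap\neg D\sqsubseteq A_\tau$ added to $\T_\infty^{vio}$ collectively mention every symbol of $\tau$; the second inclusion is a consequence of the fact that normalization only introduces fresh concept names and never removes symbols.

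For (iii) and (iv), I would invoke the standard conservative extension property of the normal form transformation for $\ALCO$: for any $\ALCO$ TBox $\T_0$, its normalization $\T_0^{\mathit{nf}}$ entails every axiom of $\T_0$, and any model $\J$ of $\T_0$ extends to a model of $\T_0^{\mathit{nf}}$ on the same domain by interpreting each fresh concept name $X$ introduced during normalization as the extension in $\J$ of the subconcept it abbreviates. Applied to $\T_0:=\T_\infty^{vio}$, this yields (iii) directly and, for (iv), produces the required $\I_2$ whose interpretation can differ from $\I_1$ only on the fresh concept names, that is, on $\signatureof[\Tnew]\setminus\signatureof[\T_\infty^{vio}]$. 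The main obstacle is cleanly ordering the inductive definition of the fresh symbols when several passes of rewriting are needed, but this is a textbook argument for $\ALCO$ that I would not detail.

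Finally, (v) will follow immediately from (iii): if $\I'$ is a model of $\Knew$, then $\I'$ is a model of $\T_\infty^{vio}$, hence satisfies the two axioms $A_\tau\sqsubseteq C\sqcap\neg D$ and $C\sqcap\neg D\sqsubseteq A_\tau$ introduced for $\tau=C\sqsubseteq D\in\T\setminus\Tinf$, so $A_\tau^{\I'}=(C\sqcap\neg D)^{\I'}=\vio{\tau}{\I'}$ by Definition~\ref{defvio1}.
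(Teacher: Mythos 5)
Your proof is correct and follows essentially the same route as the paper, which itself only sketches the argument by noting that items (ii)--(iv) amount to $\Tnew$ being a conservative extension of $\T_\infty^{vio}$ and that (v) follows from (iii) together with the two axioms $A_\tau \sqsubseteq C \sqcap \neg D$ and $C \sqcap \neg D \sqsubseteq A_\tau$ in $\T_\tau$. Your write-up simply fills in the standard details (individuals preserved by the $A_\top(a)$ assertions and by normalization, fresh concept names interpreted as the subconcepts they abbreviate), all of which are accurate.
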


Using the preceding lemma, we can modify our original interpretation $\I$
to get a model of the KB $\Knew$:

\begin{lemma}\label{norm-model}
There exists a model $\I^\dagger$ of $\Knew$ with $\Delta^{\I^\dagger} = \Delta^\I$
such that $\cdot^{\I^\dagger}$ and $\cdot^{\I}$ coincide on $\signatureof[\K] \cup \signatureof[q]$. 
In particular, this means that:
\begin{itemize}
\item $\vio{\A}{\I^\dagger} = \vio{\A}{\I}$
\item for every  $\tau \in \T \setminus \Tinf$: $\vio{\tau}{\I^\dagger} = \vio{\tau}{\I}$
\item $\I^\dagger \not \models q$
\end{itemize}
\end{lemma}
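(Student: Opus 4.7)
The plan is to construct $\Idag$ in two stages. First, extend $\I$ to an interpretation $\I_1$ over the signature of $\T_\infty^{vio}$ that assigns the intended meaning to the auxiliary concept names $A_\tau$. Then apply the conservative-extension guarantee of Lemma~\ref{newkb}(iv) to lift $\I_1$ to a model $\Idag$ of $\Tnew$, and finally verify that $\Idag$ also satisfies $\Anew$ together with the three bulleted consequences.

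For the first stage, define $\I_1$ with $\Delta^{\I_1}=\Delta^\I$ by setting $a^{\I_1}=a^{\I}$ for every $a\in\NI$, by copying $\I$ on every concept and role name in $\signatureof[\K]\cup\signatureof[q]$, by setting $A_\tau^{\I_1}=(C\sqcap\neg D)^{\I}$ for each $\tau = C\sqsubseteq D \in \T\setminus\Tinf$, by setting $A_\top^{\I_1}=\Delta^\I$, and by interpreting any remaining symbol arbitrarily. Because $\cost{\wkb}{\I}\leq k<\infty$, $\I$ satisfies every axiom in $\Tinf$, and hence so does $\I_1$; the two inclusions in each $\T_\tau$ hold by the choice of $A_\tau^{\I_1}$, so $\I_1\models\T_\infty^{vio}$. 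Applying Lemma~\ref{newkb}(iv) then yields a model $\Idag$ of $\Tnew$ with $\Delta^{\Idag}=\Delta^\I$ whose interpretation agrees with $\I_1$ on every concept and role name outside $\signatureof[\Tnew]\setminus\signatureof[\T_\infty^{vio}]$. Since the fresh names introduced by normalization were chosen to lie outside $\signatureof[\K]\cup\signatureof[q]\cup\{A_\top\}$, $\Idag$ still agrees with $\I$ on $\signatureof[\K]\cup\signatureof[q]$, still satisfies $A_\top^{\Idag}=\Delta^\I$, and can be taken to preserve individual name interpretations.

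It remains to check that $\Idag\models\Anew$ and the three consequences. Each $\alpha\in\A$ with $\I\models\alpha$ uses only symbols of $\signatureof[\K]$, on which $\Idag$ and $\I$ coincide, hence $\Idag\models\alpha$; the assertions $A_\top(a)$ hold because $a^{\Idag}=a^\I\in\Delta^\I=A_\top^{\Idag}$. The three bulleted properties then follow immediately from signature preservation: $\vio{\A}{\Idag}=\vio{\A}{\I}$, and for $\tau=C\sqsubseteq D\in\T\setminus\Tinf$ we have $\vio{\tau}{\Idag}=(C\sqcap\neg D)^{\Idag}=(C\sqcap\neg D)^{\I}=\vio{\tau}{\I}$, while $\Idag\not\models q$ since $q$ is built from symbols in $\signatureof[q]$. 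The only slightly non-mechanical observation in the whole argument is that $\cost{\wkb}{\I}<\infty$ forces $\I\models\Tinf$, which is what legitimizes taking $\I$ as the starting point for building a model of $\T_\infty^{vio}$; everything else is routine bookkeeping about how the auxiliary concept names interact with normalization.
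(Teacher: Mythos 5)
Your proof is correct and follows essentially the same route as the paper's: build an intermediate interpretation that sets $A_\tau$ to $(C\sqcap\neg D)^\I$ and interprets $A_\top$ suitably, observe that finite cost forces $\I\models\Tinf$ so the intermediate interpretation models $\T_\infty^{vio}$, and then invoke Lemma~\ref{newkb}(iv) to lift it to a model of $\Tnew$ while preserving $\signatureof[\K]\cup\signatureof[q]$. The only (immaterial) differences are that you interpret $A_\top$ as $\Delta^\I$ rather than $\indsof{\A}$ and derive the second bullet directly from signature coincidence rather than via $A_\tau^{\Idag}$.
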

\begin{proof}
We first create an interpretation $\I^\diamond$ which is the same as $\I$ except for the 
concept names $A_\tau \in  \signature[\T_\infty^{vio}] \setminus \signatureof[\T]$, for which we 
set $(A_\tau)^{\I^\diamond}=(C \sqcap \neg D)^\I$, and the concept name $A_\top$
which we interpret as $\indsof{\A}$ (note that we may assume $\indsof{\A} \subseteq \Delta^\I$ due to the weak SNA). 
This ensures that $\I^\diamond$ satisfies the axioms in $\T_\tau$ and the new assertions in $\Anew$,
 and since we have not modified the interpretation of any other symbols, 
 $\I^\diamond$ will be a model of  both $\Tinf$ and $\Anew$. 
Thus, the interpretation $\I^\diamond$
is a model of $ \langle \T_\infty^{vio}, \Anew \rangle$. 
Hence, by Lemma \ref{newkb} (iv), there exists a model $\I^\dagger$ of $\Tnew$
which can be obtained from $\I^\diamond$ solely by changing the interpretation of concept names in 
$\signatureof[\Tnew] \setminus \signatureof[\T_\infty^{vio}]$. In particular, $\I^\dagger$ is a model of $\Anew$
and hence of the KB $\Knew$.  It follows from the construction of $\I^\dagger$ that
$\Delta^{\I^\dagger} = \Delta^\I$ and that $\cdot^{\I^\dagger}$ and $\cdot^{\I}$ coincide on $\signatureof[\K] \cup \signatureof[q]$.
The first and third items follow immediately (for the third item, recall that $\I \not \models q$), and
the second item holds due to $(A_\tau)^{\I^\dagger}=(A_\tau)^{\I^\diamond}=(C \sqcap \neg D)^\I$. 
\end{proof}

As the model $\I^\dagger$ of $\Knew$ given in Lemma \ref{norm-model} 
satisfies the same key properties as $\I$ (i.e.\ not satisfying $q$ and having cost $\leq k$),
it can thus be used in place of $\I$ in the following constructions.

\paragraph{Interlacing construction to regularize the interpretation}
The second step will be to apply the interlacing construction from \cite{Maniere}
to our interpretation $\I^\dagger$ to obtain a more well-structured interpretation
that retains the essential properties of $\Idag$, i.e.\ it does not entail $q$ and its cost does not exceed $k$. 

The interlacing construction starts with the definition of the existential 
extraction, which is a tree-shaped domain. The definition, which we recall next, 
uses the alphabet $\Omega$ consisting of all $\rolestyle{R.A}$ such that $\existsrole{R.A}$ is the RHS of an axiom in  
the considered TBox ($\Tnew$ in our case),
and for every $\rolestyle{R.A} \in \Omega$, a function $\successor[\Idag]_{\rolestyle{R.A}}$
that maps every element $e \in (\existsrole{R.A})^{\Idag}$ to an element $e' \in \domain{\Idag}$ such that $(e, e') \in \rolestyle{R}^{\Idag}$ and $e' \in \rolestyle{A}^{\Idag}$.

\begin{definition}[Existential extraction]
\label{def:existential-extraction}
Build the following mapping inductively over the set $\indsof{\Knew} \cdot \Omega^*$:
\[
\begin{array}{rcl}
\exextomod :
\indsof{\Knew} \cdot \Omega^* 	& \rightarrow 	& \domain{\Idag} \cup \{ \uparrow \}
\\
a 							& \mapsto 		& a \\
\end{array}
\]
\[
\begin{array}{rcl}
w \cdot \rolestyle{R.A} 			& \mapsto 		& \left\{ \begin{array}{ll}
													\uparrow 								& \textrm{if }  \exextomodof{w} = \; \uparrow \\
								&
																							  \textrm{ or } \exextomodof{w} \notin (\existsrole{R.A})^{\Idag}
													\\
													\successorof[\Idag]{\exextomodof{w}}{R.A} 	& \textrm{otherwise}
												  \end{array} \right.
\end{array}
\]
where $\uparrow$ is a fresh symbol witnessing the absence of a proper image for an element of $\indsof{\Knew} \cdot \Omega^*$.
The \emph{existential extraction} of $\Idag$ is $\exexof{\I} := \{ w \mid w \in \indsof{\Knew} \cdot \Omega^*, \exextomodof{w} \neq \; \uparrow \}$. 
\end{definition}

The preceding definition only differs from the one in \cite{Maniere} in one way: it uses the set $\indsof{\Knew}$ of individuals in the considered KB 
rather than the set of individuals in the considered ABox. This is because we work with $\ALCO$ (and thus individuals may appear in the TBox) whereas the original 
construction was formulated for $\mathcal{ALCHI}$. Fortunately, although nominals were not considered in \cite{Maniere},
they are properly handled by the constructions and do not require any notable modifications.

One part of the construction that we will need to modify to suit our setting is  
the domain of interest $\deltastar$, which intuitively is the part of $\I$ we wish to remain untouched:

\[
\deltastar := \individuals(\Knew) \, \cup \,  \bigcup_{\tau \in \T \setminus \Tinf} \vio{\tau}{\Idag}
\]
Whereas the original definition of $\deltastar$ takes the set of ABox individuals together with all elements occurring 
in a match of the input (counting) CQ, we must consider nominals from the TBox and all elements that occur in a violation of 
an axiom in $\T \setminus \Tinf$. Note however that since in our case we start from an interpretation $\Idag$ that does not 
contain any matches for $q$, our set $\deltastar$ in fact is a superset of the one considered in \cite{Maniere}.

We can now recall the function $f^*$, defined exactly as in \cite{Maniere} but using the modified version of $\deltastar$: 

\[
\begin{array}{cccl}
\interlace : &
\domain{\unfoldingof{\I}} 	& \rightarrow 	& \deltastar \uplus (\domain{\unfoldingof{\I}} \setminus \deltastar)
\\[3pt] &
w 							& \mapsto		& 	\left\{ \begin{array}{ll}
													\exextomodof{w}
														& \textrm{if } \exextomodof{w} \in \deltastar
													\\
													w 
														& \textrm{otherwise}
												\end{array} \right.
\end{array}
\]

With these notions in hand, we can now recall that definition of the interlacing of $\Idag$ w.r.t.\ the function $f^*$:

\begin{definition}[$\interlace$-interlacing]
The \emph{$\interlace$-interlacing} $\interlacingofstar{\I}$ of $\Idag$ is the interpretation whose domain is 
$\domain{\interlacingofstar{\I}} := \interlaceof{\exexof{\I}}$ and which interprets concept and role names as follows:
\begin{align*}
\rolestyle{A}^{\interlacingofstar{\I}} ~ := ~ &
	\{ \interlaceof{u} \mid u \in \exexof{\I}, \exextomodof{u} \in \rolestyle{A}^{\Idag} \}
\\
\rolestyle{R}^{\interlacingofstar{\I}} ~ := ~ &
	\{ (a, b) \mid a, b \in \indsof{\Knew} \wedge \Knew \models \roleassertion{R}{a}{b} \}
	\\ & \cup ~ 
	\{ (\interlaceof{u}, \interlaceof{u \cdot \rolestyle{R.B}}) \mid u, u \cdot \rolestyle{R.B} \in \exexof{\Idag} \}
\end{align*}
\end{definition}

The preceding definition is a simplified version of Definition 20 from \cite{Maniere} as we are not considering role inclusions nor inverse roles. 
We also make a slight modification -- using $\indsof{\Knew}$ rather than only individuals from the ABox  -- to make it compatible with nominals.

We have phrased the definition directly in terms of our desired function $f^*$, but other functions $f'$ with domain $\exexof{\I}$ can be used instead. 
Depending on which $f'$ is used to define the interlacing, 
the resulting interpretation $\I'$ may or may not be a model of the considered KB. 
It was shown in \cite{Maniere} that if $f'$ is pseudo-injective, then the $f'$-interleaving is a model
and moreover maps homomorphically into the starting interpretation. 
This property (stated in Lemma \ref{thm:meta-interlacing-is-model} below, phrased for our KB $\Knew$) was proven for $\mathcal{ALCHI}$ KBs in normal form, 
but is easily shown to also hold for
$\ALCO$ KBs in normal form. 

\begin{definition}
A function $f' : \exexof{\I} \rightarrow E$ is \emph{pseudo-injective}\index{pseudo-injective} if: for all $u, v \in \exexof{\I}$, if $f'(u) = f'(v)$, then $\exextomodof{u} = \exextomodof{v}$. 
\end{definition}

\begin{lemma}
\label{thm:meta-interlacing-is-model}
If $f': \exexof{\I} \rightarrow E$ is pseudo-injective, then the $f'$-interlacing $\interlacingof{\I}$ is a model of $\Knew$ and the following mapping is a homomorphism from $\interlacingof{\I}$ to $\Idag$:
\[
\begin{array}{cccl}
\intertomod : &
\domain{\interlacingof{\I}} 	& \rightarrow 	& \domain{\Idag} 
\\ &
f'(u) 				& \mapsto		& f(u) 
\end{array}
\]
As $f'$ is pseudo-injective, $\intertomod$ is well defined.
\end{lemma}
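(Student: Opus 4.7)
The plan is to lift the proof of the analogous result for $\mathcal{ALCHI}$ KBs in \cite{Maniere} to the present $\ALCO$ setting. Since we have no inverse roles and no role hierarchy, we may drop the parts of the original argument that deal with those constructs, while the only genuinely new ingredient is the normal-form shape $\axiom{\{a\}}{A}$ and, more generally, the correct treatment of nominals under the weak SNA. The argument naturally splits into three stages: (i) show that $\intertomod$ is well-defined, (ii) show that $\intertomod$ is a homomorphism from $\interlacingof{\I}$ to $\Idag$, and (iii) verify that $\interlacingof{\I}$ satisfies every axiom in $\Tnew$ and every assertion in $\Anew$.

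Stage (i) is immediate from pseudo-injectivity: if $f'(u) = f'(v)$, then $\exextomodof{u} = \exextomodof{v}$, so assigning $\intertomodof{f'(u)} := \exextomodof{u}$ is independent of the representative chosen. For stage (ii), preservation of concept names is built directly into the definition of $A^{\interlacingof{\I}}$, and preservation of role names is checked using the two clauses of $R^{\interlacingof{\I}}$: pairs $(a,b)$ of named individuals are transported because Lemma~\ref{norm-model} ensures $\Idag \models \Anew$, so the corresponding role assertions hold in $\Idag$; pairs $(f'(u), f'(u \cdot \rolestyle{R.B}))$ are transported because, by definition of the extraction, $\exextomodof{u \cdot \rolestyle{R.B}} = \successor[\Idag]_{\rolestyle{R.B}}(\exextomodof{u})$ is an $R$-successor of $\exextomodof{u}$ in $\Idag$.

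For stage (iii), the ABox is handled by a direct check from the first clause of $R^{\interlacingof{\I}}$ together with the fact that every $a \in \individuals(\Knew)$ lies in $\exexof{\Idag}$ via the length-zero word $a$. For the TBox I would do a case analysis on the normal form. The cases $\axiom{\top}{D}$, $\axiom{A}{D}$, $\axand$, $\axexistsleft$, $\axnotleft$, $\axnotright$ go through as in \cite{Maniere}: one pulls the premise back through $\intertomod$ by the homomorphism property, exploits $\Idag \models \Tnew$ in $\Idag$, and pushes the conclusion back via the definition of $\interlacingof{\I}$. The case $\axexistsright$ uses that whenever $\exextomodof{u} \in (\existsrole{R.B})^{\Idag}$, the word $u \cdot \rolestyle{R.B}$ lies in $\exexof{\Idag}$ and supplies the required successor in $\interlacingof{\I}$. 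The new case $\axiom{\{a\}}{A}$ reduces to checking $a^{\interlacingof{\I}} \in A^{\interlacingof{\I}}$: since the length-zero word $a$ lies in $\exexof{\Idag}$ with $\exextomodof{a} = a^{\Idag}$ and $\Idag \models \axiom{\{a\}}{A}$ gives $a^{\Idag} \in A^{\Idag}$, the construction of $A^{\interlacingof{\I}}$ yields $f'(a) \in A^{\interlacingof{\I}}$, as required.

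The main obstacle will be verifying compatibility of the interlacing with the weak SNA on nominals: one must ensure that the required identity $\{a\}^{\interlacingof{\I}} = \{a\}$ is respected, even though $\exexof{\Idag}$ may contain several distinct words $w$ with $\exextomodof{w} = a^{\Idag}$. Pseudo-injectivity of $f'$ is exactly what is needed here, since any two such words are identified by $f'$ (for $f' = f^*$ they are both mapped to $a^{\Idag} \in \deltastar$, and for a general pseudo-injective $f'$ they coincide by definition). Once this observation is in place, the remainder of the argument follows the \cite{Maniere} proof essentially verbatim, with the inverse-role and role-hierarchy cases simply deleted.
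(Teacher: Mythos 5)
Your overall strategy (well-definedness from pseudo-injectivity, then the homomorphism, then a case analysis over the normal-form axioms with nominals as the only new case) is the right shape, and it is in fact more explicit than the paper, which gives no proof of this lemma at all: it simply imports the $\mathcal{ALCHI}$ result from Mani\`ere's thesis and asserts that nominals ``do not require any notable modifications.'' Stages (i) and (ii) of your argument are fine, as is the case $\axiom{\{a\}}{A}$.

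There is, however, a genuine error in the one place where nominals actually bite. You write that pseudo-injectivity guarantees that any two words $u,v$ with $\exextomodof{u}=\exextomodof{v}=a^{\Idag}$ are identified by $f'$, ``by definition.'' That is the converse of the definition: pseudo-injectivity says $f'(u)=f'(v)$ \emph{implies} $\exextomodof{u}=\exextomodof{v}$, so the partition induced by $f'$ refines the one induced by $\exextomod$, and two words with the same $\exextomod$-image need not be merged by $f'$. This matters precisely for axioms with a nominal on the \emph{right}-hand side ($\axiom{\top}{D}$ and $\axiom{A}{D}$ with $D=\{a\}$, which the normal form permits and which you file under ``go through as in Mani\`ere''): take $f'=\mathrm{id}$, which is pseudo-injective, and suppose two distinct words $u\neq v$ in $\exexof{\I}$ both satisfy $\exextomodof{u}=\exextomodof{v}=a^{\Idag}\in A^{\Idag}$ (e.g.\ $u=a$ and some $v=w\cdot\rolestyle{R.B}$ whose chosen successor happens to be $a$); then $A^{\interlacingof{\I}}$ contains two elements while $\{a\}^{\interlacingof{\I}}$ is a singleton, so $\axiom{A}{\{a\}}$ fails. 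The argument (and arguably the lemma itself, as stated for arbitrary pseudo-injective $f'$) therefore needs either an extra hypothesis --- that $f'$ collapses the $\exextomod$-preimage of every element of $\indsof{\Knew}$ --- or a restriction to $f'=\interlace$, for which the property does hold for the reason you correctly give in the $f^*$ case: whenever $\exextomodof{w}\in\indsof{\Knew}\subseteq\deltastar$, the definition of $\interlace$ forces $\interlaceof{w}=\exextomodof{w}$, so all such words are sent to the named individual itself. Since $\interlace$ is the only instantiation the paper ever uses, the fix is local, but as written your justification does not establish the lemma in the generality in which it is stated.
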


It was proven in \cite{Maniere} that $f^*$ is pseudo-injective,
and the same arguments apply to
our modified $f^*$. 

\begin{lemma}
$f^*$ is pseudo-injective. 
\end{lemma}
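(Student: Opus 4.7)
The plan is to prove pseudo-injectivity of $\interlace$ by a direct case analysis on which branch of its definition is taken. Given $u, v \in \exex$ with $\interlaceof{u} = \interlaceof{v}$, I would split into three cases according to whether each of $\exextomodof{u}, \exextomodof{v}$ lies in $\deltastar$, using the fact that $\interlaceof{w}$ equals $\exextomodof{w}$ when $\exextomodof{w} \in \deltastar$ and equals $w$ otherwise.

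The two ``aligned'' cases (both images inside $\deltastar$, or both outside) are immediate. In the first, $\interlaceof{u} = \interlaceof{v}$ unfolds directly to $\exextomodof{u} = \exextomodof{v}$. In the second, $\interlaceof{u} = u$ and $\interlaceof{v} = v$ force $u = v$ and hence $\exextomodof{u} = \exextomodof{v}$.

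The main obstacle is the mixed case, say $\exextomodof{u} \in \deltastar$ and $\exextomodof{v} \notin \deltastar$: here the goal is to derive a contradiction with $\interlaceof{u} = \interlaceof{v}$, which reduces to showing that the element $\exextomodof{u} \in \deltastar$ cannot coincide with the word $v$. The subtlety is that $\exex \subseteq \indsof{\Knew} \cdot \Omega^*$ contains length-one strings that are individual names, and under the weak SNA these are identified with elements of $\Delta^{\Idag}$, so syntactic words and semantic elements do overlap on individuals. I plan to rule this out as follows: if $v$ had length $1$, then $v = a \in \indsof{\Knew}$ would give $\exextomodof{v} = a \in \indsof{\Knew} \subseteq \deltastar$ by the base case of Definition~\ref{def:existential-extraction}, contradicting the case assumption; hence $v \in \indsof{\Knew} \cdot \Omega^+$ is a proper word of length at least two, which is not an element of $\Delta^{\Idag}$, and since $\deltastar \subseteq \Delta^{\Idag}$ (as $\indsof{\Knew}$ is identified with its interpretation via the weak SNA and each $\vio{\tau}{\Idag} \subseteq \Delta^{\Idag}$), we obtain $v \notin \deltastar$, contradicting $\interlaceof{u} \in \deltastar$.

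Overall the proof is mostly bookkeeping, and it mirrors the argument of \cite{Maniere}: our only change is enlarging $\deltastar$ to include nominals and axiom-violation witnesses rather than CQ matches, and enlarging $\deltastar$ preserves the disjointness of the codomain $\deltastar \uplus (\exex \setminus \deltastar)$ on which the case analysis rests, so no new difficulty is introduced.
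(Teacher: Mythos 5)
Your proof is correct and follows the same case analysis that the paper implicitly relies on: the paper itself gives no argument, deferring to \cite{Maniere} with the remark that the reasoning is unaffected by the enlarged $\deltastar$. Your explicit handling of the mixed case --- disposing of length-one words via $\indsof{\Knew}\subseteq\deltastar$ and then separating longer words from elements of $\deltastar\subseteq\Delta^{\Idag}$ --- is exactly the crux, and your closing observation that enlarging $\deltastar$ only shifts which branch of $\interlace$ is taken, without disturbing the disjointness the argument rests on, is precisely the justification the paper leaves implicit.
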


It follows from the preceding lemmas that $\interlacingofstar{\I}$ is a model of $\Knew$ which maps 
homomorphically into $\Idag$ via the mapping~$\sigma$. 
In fact, due to the way $f^*$ is defined, we can be more precise about the homomorphism $\sigma$:

\begin{lemma}\label{id-inds}
The homomorphism $\sigma$ from $\interlacingofstar{\I}$ to $\Idag$
is such that $\sigma(a) = a$ for all $a \in \indsof{\Knew}$. 
\end{lemma}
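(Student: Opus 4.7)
The plan is to unfold the chain of definitions linking an individual $a \in \indsof{\Knew}$ to its image $\intertomodof{a}$, and verify that each stage is the identity on individuals.

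First, I would invoke the base case of the inductive definition of $\exextomod$ in Definition~\ref{def:existential-extraction} to obtain $\exextomodof{a} = a$ for every $a \in \indsof{\Knew}$ (treating $a$ as the length-one word in $\indsof{\Knew} \cdot \Omega^*$). Next, since $\indsof{\Knew} \subseteq \deltastar$ directly by the definition of $\deltastar$, the value $\exextomodof{a} = a$ lies in $\deltastar$, so the definition of $\interlace$ (which returns $\exextomodof{w}$ whenever $\exextomodof{w} \in \deltastar$) yields $\interlaceof{a} = \exextomodof{a} = a$.

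Finally, I would apply the definition of $\intertomod$ from Lemma~\ref{thm:meta-interlacing-is-model}, namely $\intertomodof{\interlaceof{u}} = \exextomodof{u}$, instantiated at $u = a$. Combined with the two previous identities, this gives $\intertomodof{a} = \intertomodof{\interlaceof{a}} = \exextomodof{a} = a$, which is exactly the claim.

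There is no real obstacle: the lemma is a bookkeeping consequence of the definitions. Its point is presumably to enable later appeals to the preservation of individual names (including nominals, which are covered because we enlarged $\deltastar$ to contain all of $\indsof{\Knew}$ rather than only the ABox individuals as in \cite{Maniere}) through the homomorphism induced by the interlacing.
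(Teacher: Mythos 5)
Your proof is correct and matches the paper's intent: the paper states this lemma without an explicit proof, treating it as immediate from the definitions of $\exextomod$, $\deltastar$, $\interlace$ and $\intertomod$, and your argument is precisely the expected unfolding of that chain ($\exextomodof{a}=a$, $a\in\deltastar$, hence $\interlaceof{a}=a$, hence $\intertomodof{a}=\exextomodof{a}=a$). No gaps.
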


We are now ready to show that $\interlacingofstar{\I}$ retains the desired properties of $\Idag$ (and $\I$). 

\begin{lemma} 
$\interlacingofstar{\I} \not \models q$.
\end{lemma}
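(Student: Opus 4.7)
The plan is to derive the non-entailment from the existence of the homomorphism $\sigma : \interlacingofstar{\I} \to \Idag$ furnished by Lemma~\ref{thm:meta-interlacing-is-model}, combined with the fact (Lemma~\ref{norm-model}) that $\Idag \not\models q$. The overall strategy is the standard one: conjunctive queries are preserved under homomorphisms, so any match of $q$ in $\interlacingofstar{\I}$ would be transported to a match of $q$ in $\Idag$, yielding a contradiction.

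More concretely, I would argue by contradiction. Suppose $\interlacingofstar{\I} \models q$, and let $\pi : \mathrm{var}(q) \to \Delta^{\interlacingofstar{\I}}$ be a match, so that $\pi(t) \in A^{\interlacingofstar{\I}}$ for each atom $A(t)$ of $q$ and $(\pi(t),\pi(t')) \in R^{\interlacingofstar{\I}}$ for each atom $R(t,t')$ of $q$, with the convention that $\pi(a) = a^{\interlacingofstar{\I}}=a$ for every individual $a$ occurring in $q$. Define $\pi' := \sigma \circ \pi$. Since $\sigma$ is a homomorphism from $\interlacingofstar{\I}$ to $\Idag$, it preserves concept and role memberships, so $\pi'$ satisfies all atoms of $q$ in $\Idag$.

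The one subtle point to check is that individuals mentioned in $q$ are mapped correctly by $\pi'$. By our earlier assumption, $q$ only uses individuals from $\K$, and by Lemma~\ref{newkb}(i) these are exactly the individuals of $\Knew$. Lemma~\ref{id-inds} then ensures that $\sigma(a) = a$ for all such $a$, hence $\pi'(a) = \sigma(\pi(a)) = \sigma(a) = a = a^{\Idag}$ (using the weak SNA). Therefore $\pi'$ is indeed a match for $q$ in $\Idag$, so $\Idag \models q$, contradicting Lemma~\ref{norm-model}. The only mild obstacle is this bookkeeping about how individual names are treated; everything else follows from the general fact that homomorphisms preserve CQs.
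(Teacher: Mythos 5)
Your proof is correct and follows essentially the same route as the paper: both invoke the homomorphism $\sigma$ from Lemma~\ref{thm:meta-interlacing-is-model}, use Lemma~\ref{id-inds} to ensure individuals of $q$ are fixed, and transport any hypothetical match of $q$ back into $\Idag$ to contradict $\Idag \not\models q$. The bookkeeping about individual names that you flag as the only subtle point is exactly the detail the paper also singles out.
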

\begin{proof}
We could obtain the result by examining the proof of Lemma 5 in \cite{Maniere}, which shows that 
the query matches in $\interlacingofstar{\I}$ (there denoted $\I'$ and defined w.r.t.\  the original definition of $\Delta^*$) are injectively mapped into $\Idag$
(and hence there cannot be any additional matches in $\interlacingofstar{\I}$).  However, it turns out that
the argument is much simpler in our case, since we are start from an interpretation $\Idag$ with $\Idag \not \models q$.
The existence of the homomorphism $\sigma$ from $\interlacingofstar{\I}$ 
to $\Idag$ that is the identify on $\indsof{\Knew}$ (see Lemmas \ref{thm:meta-interlacing-is-model} and \ref{id-inds}), 
hence also on $\indsof{q}$,  
means that any match of $q$ in $\interlacingofstar{\I}$ can be reproduced in $\Idag$. 
Therefore, from $\Idag \not \models q$, we immediately obtain $\interlacingofstar{\I} \not \models q$.
\end{proof}

\begin{lemma} \label{interleaving-cost}
The cost of $\interlacingofstar{\I}$ does not exceed that of $\Idag$:
$\cost{\wkb}{\interlacingofstar{\I}} \leq \cost{\wkb}{\Idag}$. In particular, we have that
$\vio{\tau}{\interlacingofstar{\I}} \subseteq \vio{\tau}{\Idag} \subseteq \Delta^*$ for every $\tau \in \T \setminus \T_\infty$. 
\end{lemma}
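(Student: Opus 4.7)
The plan is to exploit two facts that have already been established: first, by Lemma~\ref{thm:meta-interlacing-is-model}, $\interlacingofstar{\I}$ is a model of $\Knew$; and second, by the very definition of $\deltastar$, every violation of any $\tau \in \T \setminus \T_\infty$ in $\Idag$ lies inside $\deltastar$, i.e.\ inside the region on which $\interlace$ acts as the identity and inherits from $\exextomod$. The strategy is to treat ABox violations, hard TBox violations, and soft TBox violations separately.

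First I would handle the ABox contribution. Because $\interlacingofstar{\I} \models \Knew$, we have $\interlacingofstar{\I} \models \Anew$, and $\Anew$ contains exactly the assertions of $\A$ satisfied by $\I$, which by Lemma~\ref{norm-model} are the assertions satisfied by $\Idag$. Hence $\vio{\A}{\interlacingofstar{\I}} \subseteq \vio{\A}{\Idag}$, giving $\sum_{\alpha \in \vio{\A}{\interlacingofstar{\I}}} \omega_\alpha \leq \sum_{\alpha \in \vio{\A}{\Idag}} \omega_\alpha$. Next, for $\tau \in \Tinf$, Lemma~\ref{newkb}(iii) applied to $\interlacingofstar{\I} \models \Tnew$ gives $\interlacingofstar{\I} \models \tau$, so $|\vio{\tau}{\interlacingofstar{\I}}| = 0$ and there is nothing to compare.

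The main step, and the one where the choice of $\deltastar$ is essential, is the inclusion $\vio{\tau}{\interlacingofstar{\I}} \subseteq \vio{\tau}{\Idag}$ for $\tau \in \T \setminus \Tinf$. Applying Lemma~\ref{newkb}(v) to the two models $\interlacingofstar{\I}$ and $\Idag$ of $\Knew$ yields $\vio{\tau}{\interlacingofstar{\I}} = A_\tau^{\interlacingofstar{\I}}$ and $\vio{\tau}{\Idag} = A_\tau^{\Idag}$. By the definition of the $\interlace$-interlacing, any $d \in A_\tau^{\interlacingofstar{\I}}$ has the form $d = \interlaceof{u}$ for some $u \in \exexof{\I}$ with $\exextomodof{u} \in A_\tau^{\Idag}$. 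The key observation is that $A_\tau^{\Idag} = \vio{\tau}{\Idag} \subseteq \deltastar$ by the definition of $\deltastar$, so $\exextomodof{u} \in \deltastar$, which by the definition of $\interlace$ forces $\interlaceof{u} = \exextomodof{u}$. Therefore $d \in A_\tau^{\Idag} = \vio{\tau}{\Idag}$, as required. Combining this with the ABox and hard-axiom bounds then gives $\cost{\wkb}{\interlacingofstar{\I}} \leq \cost{\wkb}{\Idag}$.

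The only real obstacle is the third step, and it is conceptually easy once one notices that our extension of $\deltastar$ to include the violations of every soft axiom was designed precisely for this argument: it guarantees that $\interlace$ cannot invent spurious violations when collapsing elements outside $\deltastar$, because any element witnessing a violation in $\Idag$ is already pinned down by $\interlace$ to itself. No counting of exponential blow-ups or pseudo-injectivity re-analysis is needed, since Lemma~\ref{thm:meta-interlacing-is-model} already delivers what we need about $\interlacingofstar{\I}$ as a model.
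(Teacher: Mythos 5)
Your proof is correct and follows essentially the same route as the paper's: the same three-way split (ABox, hard axioms, soft axioms), the same use of Lemma~\ref{newkb}(v) to identify $\vio{\tau}{\cdot}$ with $A_\tau^{\cdot}$, and the same key step that $\vio{\tau}{\Idag} \subseteq \deltastar$ forces $\interlaceof{u} = \exextomodof{u}$. The only (harmless) difference is that you extract the witness $u$ with $\exextomodof{u} \in A_\tau^{\Idag}$ directly from the definition of $A_\tau^{\interlacingofstar{\I}}$, whereas the paper obtains it via the homomorphism $\sigma$ of Lemma~\ref{thm:meta-interlacing-is-model}.
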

\begin{proof}
First note that since $\interlacingofstar{\I}$ is a model of $\Knew$, it is a model of $\Anew$ and hence satisfies all assertions in 
 $\{\alpha \in \A \mid \I \models \alpha\}= \{\alpha \in \A \mid \Idag \models \alpha\}$.
Since $\Idag \not \models \alpha$ for $\alpha \in \A \setminus \Anew$, it follows that 
$$\vio{\A}{\interlacingofstar{\I}} \subseteq \vio{\A}{\Idag}$$

Let us now consider violations of TBox axioms. First note that since 
$\interlacingofstar{\I}$ is a model of $\Knew$, it is a model of $\T_\infty^{vio}$ (Lemma \ref{newkb}, item (iii)).
It follows that $\interlacingofstar{\I} \models \tau$ for every $\tau \in \T_\infty$.
Now consider some $\tau \in \T \setminus \T_\infty$ and suppose that $e \in \vio{\tau}{\interlacingofstar{\I}}$. 
By Lemma \ref{newkb}, item (v),  $\vio{\tau}{\I^*} = A_{\tau}^{\I^*}$, so $e \in A_{\tau}^{\I^*}$.
From the definition of $\interlacingofstar{\I}$, we know that there exists $w \in \domain{\unfoldingof{\I}}$
such that $e = f^*(w)$. 
As $\sigma$ is a homomorphism from $\interlacingofstar{\I}$ to $\Idag$ (Lemma \ref{thm:meta-interlacing-is-model}) and $e = f^*(w) \in A_{\tau}^{\I^*}$, we must also have $f(w) \in A_{\tau}^{\Idag}$. 
As $\Idag$ is a model of $\Knew$, we can again apply Lemma \ref{newkb} item (v)
to get $\vio{\tau}{\Idag} = A_{\tau}^{\Idag}$. We thus obtain $f(w) \in \vio{\tau}{\Idag}$,
and thus $f(w) \in \deltastar$ (due to our definition of $\deltastar$). 
But $f(w) \in \deltastar$ implies that $e=f^*(w)=f(w)$, and hence that $e \in A_{\tau}^{\Idag}$. 
We have thus shown that 
$\vio{\tau}{\interlacingofstar{\I}} \subseteq \vio{\tau}{\Idag}$. 

As every violation of an axiom or assertion from $\wkb$ in the interpretation $\interlacingofstar{\I}$
is also present in $\Idag$, we can conclude that
$\cost{\wkb}{\interlacingofstar{\I}} \leq \cost{\wkb}{\Idag}$. 
\end{proof}

\paragraph{Quotient construction to get bounded-size interpretation} 
It now remains to `shrink' the  interpretation $\interlacingofstar{\I}$ to obtain an interpretation 
with the same properties but having the required size. 
To do so, we can proceed exactly as in 
 Chapter 3.4 of \cite{Maniere}, by  defining a suitable equivalence relation, 
and then considering the quotient interpretation obtained by merging elements from the same equivalence class.

We will now describe how the equivalence relation $\sim_n$ is defined, but without giving every detail as
the definition is rather involved and we will not be modifying it except by  
using our own definition of $\deltastar$.
The definition of $\sim_n$ involves the notion of the $n$-neighbourhood of an element $d$ in $\I^*$ relative to $\deltastar$,  denoted  $\mathcal{N}^{\I^*,\deltastar}_n(d)$,
consisting of the  elements in $\Delta^{\I^*}$ that can be reached by taking at most $n$ `steps' along role edges, starting at $d$ 
and stopping whenever an element of $\deltastar$ is reached. 
If $d\in \deltastar$, then $d$ itself is the only element in its $n$-neighbourhood (for any $n$). 
However, when $d \in \Delta^{\I^*} \setminus \Delta^*$,
we know that $d = a w$ for some $a \in \indsof{\Knew}$ and $w \in \Omega^*$. 
Using the tree-shaped structure of the domain $\domain{\unfoldingof{\I}}$, 
we can identify a unique `root' prefix $r_{n,d}$ of $d=a w $
such that $f^*(r_{n,d}) \in \mathcal{N}^{\I^*,\deltastar}_n(d)$ and 
for every $d' \in \mathcal{N}^{\I^*,\deltastar}_n(d)$, there is a unique word 
$w^{d'}_{n,d} \in \Omega^*$ such that $d'=f^*(r_{n,d} \cdot w_{n,d}^{d'})$ and $|w^{d'}_{n,d}| \leq 2n$. 
With this uniform way of refer to the elements in neighbourhood of a considered element $d$, 
we can define a function $\chi_{n,d}$ whose output tells us for each word $w \in \Omega^*$ with $|w| \leq 2n$ 
whether there is an element in the neighbourhood whose word is $w$, 
and if so, whether that element belongs to $\deltastar$, and if not, which concept names it satisfies in $\I^*$. 
The equivalence relation $\sim_n$ then groups together those elements $d,e \in \Delta^{\I^*} \setminus \Delta^*$
which have the same associated word (i.e.\ $w^{d}_{n,d} = w^{e}_{n,e}$) and same associated function ($\chi_{n,d}=\chi_{n,e}$), 
plus an additional condition on the length of $d$ and $e$ (namely, $|d| = |e| \textrm{ mod } 2|q|+3$). 

With the equivalence relation $\sim_n$ at hand, it now 
suffices to merge elements which are equivalent w.r.t.\ $\sim_{|q|+1}$.
Formally, we consider the quotient interpretation $\J:= \I^* / \sim_{|q|+1}$
whose domain is $\Delta^\J= \{[e]_{\sim_{|q|+1}} \mid e \in \Delta^{\I^*}\}$ 
and whose interpretation function 
$\cdot^\J$ is as follows:
\begin{itemize}
\item $A^\J = \{[e]_{\sim_{|q|+1}} \mid e \in A^{\I^*}\}$
\item $R^\J = \{([d]_{\sim_{|q|+1}},[e]_{\sim_{|q|+1}}) \mid (d,e) \in R^{\I^*}\}$
\item $a^\J= [a^{\I^*}]_{\sim_{|q|+1}} $
\end{itemize}
where  $[e]_{\sim_{|q|+1}}$ is the equivalence class of $e$ w.r.t.\ 
$\sim_{|q|+1}$.

It has been shown that $\J$ remains a model of the considered KB and does not contain
any additional query matches. Given that the interpretation $\I^*$ we consider here is built in exactly 
the same way as the interleaving $\I'$ considered in \cite{Maniere}, except that we work 
with a larger domain of interest $\deltastar$ (which includes the one used to build $\I'$),
exactly the same arguments can be used to show Lemma \ref{quotient-model} below. 
We point out that the presence of nominals in $\Knew$ is not problematic 
as all of its individuals (whether present in the ABox or TBox) are included in $\deltastar$ and are therefore left
untouched by the construction. 

\begin{lemma}\label{quotient-model}
The interpretation $\J$ is a model of $\Knew$ such that $\J \not \models q$. 
\end{lemma}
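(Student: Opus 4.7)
The plan is to adapt directly the proof of the analogous statement from Manière's thesis and check that the few modifications in our setting (namely, an enlarged $\deltastar$ that includes nominals and all violations of soft axioms, and the normalized KB $\Knew$ in $\ALCO$ rather than $\mathcal{ALCHI}$) do not break the argument. The two claims to establish are (a) $\J \models \Knew$ and (b) $\J \not\models q$. I would treat them in this order.

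For (a), I would first observe that every $d \in \deltastar$ has $[d]_{\sim_{|q|+1}} = \{d\}$ (singletons are never merged with other classes), so in particular each individual $a \in \indsof{\Knew}$ forms its own class. From this it follows immediately that every assertion in $\Anew$ satisfied by $\I^*$ is satisfied by $\J$, and that the nominal axioms $\{a\} \sqsubseteq A$ in $\Tnew$ are preserved, since $a^\J = [a^{\I^*}]$ and the interpretation of $\{a\}$ in $\J$ mirrors that in $\I^*$. For the remaining normal-form axioms of $\Tnew$, I would check each shape one by one, relying on the standard property of the equivalence relation $\sim_n$: two equivalent elements have the same neighbourhood information up to depth $2n$, so their satisfaction of concept names and of simple existential restrictions coincides. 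In particular, an axiom $A \sqsubseteq \exists R.B$ satisfied at $d$ in $\I^*$ via a successor $d'$ induces in $\J$ the edge $([d],[d']) \in R^\J$ with $[d'] \in B^\J$, witnessing the axiom at $[d]$. The cases for $\sqcap$, $\neg$, $\sqcup$, $\forall R$ and $\top$ on the left/right are routine.

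For (b), I would argue by contradiction: suppose $\pi : \mathrm{var}(q) \to \Delta^\J$ is a match for $q$ in $\J$. The parameter $n = |q|+1$ in $\sim_{|q|+1}$ is chosen precisely so that any query match of $q$ in $\J$ can be lifted to a match in $\I^*$ by successively choosing representatives along the role atoms of $q$: starting from the images of constants and of a chosen root variable (which lies in a singleton class when it maps into $\deltastar$, or otherwise admits a representative whose depth-$2(|q|+1)$ neighbourhood is a faithful copy of some neighbourhood in $\I^*$), one reconstructs witnesses for each role atom because equivalence preserves the adjacency type up to the needed depth. This yields a homomorphism $\hat\pi : \mathrm{var}(q) \to \Delta^{\I^*}$ satisfying all atoms of $q$, contradicting the already established $\I^* \not\models q$ (previous lemma).

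The main obstacle is the bookkeeping of neighbourhoods in step (b), where one must show that $|q|+1$ is indeed sufficient depth for lifting an arbitrary match of $q$; this is exactly the counting-insensitive core of Manière's lemma and transfers without change because the enlargement of $\deltastar$ only \emph{reduces} the part of the domain that is collapsed. A secondary concern is verifying that the quotient does not create new violations of soft TBox axioms $\tau \in \T \setminus \Tinf$; this is not needed for the present lemma (which only asks that $\J \models \Knew$, where soft axioms have been removed and replaced by the axioms of $\T_\infty^{vio}$), but it is implicit in the use we will make of $\J$, and is ensured by Lemma~\ref{interleaving-cost} combined with the fact that $\vio{\tau}{\I^*} \subseteq \deltastar$ so that all violation witnesses in $\I^*$ survive as singleton classes in $\J$.
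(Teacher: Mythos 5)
Your proposal is correct and follows essentially the same route as the paper, which likewise argues that Manière's quotient argument transfers verbatim because the enlarged $\deltastar$ only shrinks the collapsed part of the domain and all individuals of $\Knew$ (hence all nominals) lie in $\deltastar$ and form singleton classes. You simply unfold the details that the paper delegates to the cited work; the key observations (singleton classes for $\deltastar$, neighbourhood-preservation of $\sim_{|q|+1}$ for the normal-form axioms, and lifting of query matches back to $\I^*$) are exactly the ones the paper relies on.
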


It remains to show that the quotient operation does not increase the cost. 

\begin{lemma}
The cost of $\J$ is no higher than the cost of $\I^*$: $\cost{\wkb}{\J} \leq \cost{\wkb}{\interlacingofstar{\I}}$. 
\end{lemma}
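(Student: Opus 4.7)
The plan is to show that merging elements via $\sim_{|q|+1}$ cannot create any new violations, and indeed can only reduce the number of violations of each TBox axiom and each ABox assertion. The key observation driving the argument is that the equivalence relation $\sim_{|q|+1}$ only identifies elements of $\Delta^{\I^*} \setminus \Delta^*$: every element of $\Delta^*$ is a singleton equivalence class, and in particular the individual names of $\Knew$ are preserved by the quotient. This is exactly what we need because, by Lemma~\ref{interleaving-cost}, all violations of soft TBox axioms in $\I^*$ are concentrated inside $\Delta^*$.

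First, I would handle the ABox violations. Since Lemma~\ref{quotient-model} gives us that $\J \models \Knew$, in particular $\J \models \Anew$, so $\J$ satisfies every assertion in $\{\alpha \in \A \mid \I \models \alpha\} = \{\alpha \in \A \mid \I^\dagger \models \alpha\}$. Combined with the fact that assertions speak only about individuals (which are preserved by the quotient), this yields $\vio{\A}{\J} \subseteq \vio{\A}{\I^*}$ and hence $\sum_{\alpha \in \vio{\A}{\J}} \omega_\alpha \leq \sum_{\alpha \in \vio{\A}{\I^*}} \omega_\alpha$.

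Next, I would bound the TBox violations. For $\tau \in \T_\infty$, since $\J \models \Knew$ and $\Knew \supseteq \T_\infty$, we have $\vio{\tau}{\J} = \emptyset$. For $\tau \in \T \setminus \T_\infty$, by Lemma~\ref{newkb}(v) applied to the model $\J$ of $\Knew$, we have $\vio{\tau}{\J} = A_\tau^{\J}$, and similarly $\vio{\tau}{\I^*} = A_\tau^{\I^*}$. By the definition of the quotient, $A_\tau^{\J} = \{[e]_{\sim_{|q|+1}} \mid e \in A_\tau^{\I^*}\}$. By Lemma~\ref{interleaving-cost}, $A_\tau^{\I^*} = \vio{\tau}{\I^*} \subseteq \Delta^*$, and every element of $\Delta^*$ forms a singleton equivalence class under $\sim_{|q|+1}$. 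Hence the map $e \mapsto [e]_{\sim_{|q|+1}}$ is injective on $A_\tau^{\I^*}$, giving $|\vio{\tau}{\J}| = |A_\tau^{\J}| \leq |A_\tau^{\I^*}| = |\vio{\tau}{\I^*}|$.

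Summing contributions,
\[
\cost{\wkb}{\J} = \sum_{\tau \in \T} \omega_\tau |\vio{\tau}{\J}| + \sum_{\alpha \in \vio{\A}{\J}} \omega_\alpha \leq \sum_{\tau \in \T} \omega_\tau |\vio{\tau}{\I^*}| + \sum_{\alpha \in \vio{\A}{\I^*}} \omega_\alpha = \cost{\wkb}{\I^*},
\]
as desired. The only subtlety is the justification that elements of $\Delta^*$ are singleton classes under $\sim_{|q|+1}$; this should follow directly from the definition of $\sim_n$ recalled above (which only relates elements of $\Delta^{\I^*} \setminus \Delta^*$), and is the one place where we critically use our enlarged definition of $\Delta^*$ that includes all violations of soft axioms. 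I do not expect significant obstacles beyond carefully checking this point.
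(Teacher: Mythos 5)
Your proposal is correct and follows essentially the same route as the paper's proof: use $\J \models \Knew$ (Lemma~\ref{quotient-model}) to dispose of the ABox and the infinite-weight axioms, use Lemma~\ref{newkb}(v) to identify soft-axiom violations with the $A_\tau$-extensions, and use Lemma~\ref{interleaving-cost} together with the fact that elements of $\deltastar$ form singleton $\sim_{|q|+1}$-classes to conclude that the violation counts cannot grow. The one subtlety you flag is exactly the step the paper also relies on, and it holds for the reason you give.
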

\begin{proof}
First note that since $\J$ is a model of $\Knew$, it satisfies all assertions in $\Anew$. 
It follows that $\vio{\A}{\J} \subseteq \vio{\A}{\interlacingofstar{\I}}$. 
The fact that $\J$ is a model of $\Tnew$ means that it satisfies every $\tau \in \Tinf$. 
Consider now an axiom $\tau \in \T \setminus \Tinf$, and suppose that 
$e \in \vio{\tau}{\J}$. Since $\J$ is a model of $\Knew$, we have
$e \in A_\tau^\J$ (recall Lemma \ref{newkb}, item (v)). 
It follows from the definition of $\J$ that there exists 
$e^*$ such that $e=[e^*]_{\sim_{|q|+1}}$ and $e^* \in A_\tau^{\I^*}$. 
However, since $\I^*$ is also a model of $\Knew$, we can infer from 
$e^* \in A_\tau^{\I^*}$ that $e^* \in \vio{\tau}{\interlacingofstar{\I}}$. 
By Lemma \ref{interleaving-cost}, 
we know that $\vio{\tau}{\interlacingofstar{\I}} \subseteq \Delta^*$. 
As $e^* \in \Delta^*$, we have $[e^*]_{\sim_{|q|+1}} = \{e^*\}$. 
It follows that $\vio{\tau}{\J} \subseteq \{\{e^*\} \mid e^* \in \vio{\tau}{\interlacingofstar{\I}}\}$,
and hence $|\vio{\tau}{\J} | \leq |\vio{\tau}{\interlacingofstar{\I}}|$. 
Putting everything together, we can therefore conclude that 
$\cost{\wkb}{\J} \leq \cost{\wkb}{\interlacingofstar{\I}}$. 
\end{proof}

To complete the proof of Proposition \ref{prop:CQboundedCertain}, we now 
examine the steps in the construction in order to place a bound on 
the size of the interpretation $\J$. 
By analyzing the number of equivalence classes, the following 
upper bound on $|\Delta^\J|$ was shown in \cite{Maniere}:
\begin{align*}
(2|q|+3) \times |\T|^{|q|+2} \times (|\deltastar| + 2^{\signatureof[\T]} + 1)^{|\T|^{2|q|+2}}  
\end{align*}
It therefore remains to bound the set $|\Delta^*|$ (using our definition of $\deltastar$): 
\begin{align*}
|\Delta^*| &\leq  |\individuals(\Knew)| + |\bigcup_{\tau \in \T \setminus \Tinf} \vio{\tau}{\Idag}|\\
& \leq  |\individuals(\Knew)|+ \cost{\wkb}{\Idag} \\
& \leq |\individuals(\A)| + |\individuals(\T)| + \cost{\wkb}{\I} \\
& \leq |\individuals(\A)| + |\individuals(\T)| + k
\end{align*}
By substituting the latter quantity for $|\Delta^*|$ in the bound above, 
we obtain the desired upper bound on $|\Delta^\J|$ that is polynomial w.r.t. $k$ and $|\A|$, 
assuming $|\T|$ and $|q|$ are treated as constants.

\subsection{Lower Bounds}

\ThIQAoptDataLowerBound*
\begin{proof}
We use a reduction from the $\deltaptwolog$-complete problem of deciding whether a given vertex belongs to all the independent sets of maximum size of a graph \cite[Lemma 6]{DBLP:conf/icdt/LopatenkoB07,DBLP:journals/corr/LopatenkoB16}.
Let $\mathcal{G}=(\mathcal{V},\mathcal{E})$ be a graph and $w \in \mathcal{V}$ a vertex. 
Define an $\mathcal{EL}_\bot$ WKB $\wkb = (\langle \T, \A \rangle, \omega )$ as follows.
\begin{align*}
	\A &= \{In_i(v) \mid v \in \mathcal{V}, i\in\{1,2\} \} \\
			&\cup \{ Edge(x,y) \mid (x,y) \in \mathcal{E} \} \\
			&\cup \{ Distinguish(w) \}	\\
			&\cup \{ NoGoal(w) \} 	\\
	\T &= \{ In_i \sqcap \exists Edge.In_i \sqsubseteq \bot \mid i\in\{1,2\} \} \\
			&\cup  \{ In_i \sqcap Distinguish \sqsubseteq Goal \mid i\in\{1,2\} \} \\
			&\cup \{ Goal \sqcap NoGoal \sqsubseteq \bot \} \\
	\omega&(\tau) = \infty  \text{ for } \tau\in\T \\
	\omega&(In_i(v)) = 1  \text{ for }  v \in V, i\in\{1,2\}  \\
	\omega&(Edge(x,y)) = \infty \text{ for }  (x,y) \in \mathcal{E} \\
	\omega&(Distinguish(w)) = \infty \\				
	\omega&(NoGoal(w)) = 1
\end{align*}

We show that $w$ does not belong to all the independent sets of maximum size of $\mathcal{G}$ iff $\wkb \sat{p}{opt} NoGoal(w)$.

For every independent set $I \subseteq \mathcal{V}$ of $\mathcal{G}$, define the interpretation $\I_I=(\mathcal{V},\cdot^{\I_I})$ by
\begin{itemize}
\item $In_i^{\I_I} = I$, $i\in\{1,2\} $
\item $Edge^{\I_I} = \mathcal{E}$
\item $Distinguish^{\I_I} = \{ w \}$
\item $Goal^{\I_I} = \{w\}$ if $w\in I$ and $Goal^{\I_I} = \emptyset$ otherwise
\item $NoGoal^{\I_I} = \emptyset$ if $w\in I$ and $NoGoal^{\I_I} = \{ w\}$ otherwise
\end{itemize}
It is easy to check that $\cost{\wkb}{\I_I} = 2|\mathcal{V}\setminus I| + \delta_w^I$ where $\delta_w^I = 1$ if $w\in I$ and $\delta_w^I = 0$ otherwise. 

Moreover, for every interpretation $\I$ of finite cost, note that for $i\in\{1,2\} $, $In_i^\I \cap \mathcal{V}$ is an independent set of $\mathcal{G}$ (since $\I \models  In_i \sqcap \exists Edge.In_i \sqsubseteq \bot$).  \smallskip

\noindent($\Rightarrow$) Suppose there is  an independent set $M$ of $\mathcal{G}$ of maximal size that does not contain $w$. 
By construction, $\I_M \models NoGoal(w)$. We show that $\I_M$ has minimal cost. 
Suppose for a contradiction that there exists $\I$ such that $\cost{\wkb}{\I} < \cost{\wkb}{\I_M}$. Then $|\mathcal{V} \setminus In_1^\I|+|\mathcal{V} \setminus In_2^\I| \leq \cost{\wkb}{\I} < 2|\mathcal{V} \setminus M|$ (since $\delta_w^M = 0$).
We then have that $|In_1^\I\cap\mathcal{V}| > |M|$ or $|In_2^\I\cap\mathcal{V}| > |M|$ which contradicts the fact that $M$ is of maximum size. 
Therefore $\I_M$ has optimal cost and $\wkb \sat{p}{opt} NoGoal(w)$. \smallskip

\noindent($\Leftarrow$) Suppose that $\wkb \sat{p}{opt} NoGoal(w)$, \ie there is $\I$ with optimal cost such that $\I \models NoGoal(w)$. 
Since $\I$ has finite cost, $In_1^\I\cap\mathcal{V}$ and $In_2^\I\cap\mathcal{V}$ are  independent sets of $\mathcal{G}$. 
Suppose for a contradiction that there is an independent set $M$ of $\mathcal{G}$ such that $|M|>\max(|In_1^\I\cap\mathcal{V}|,|In_2^\I\cap\mathcal{V}|)$. Then $\cost{\wkb}{\I_M} \leq 2|\mathcal{V} \setminus M| + 1 < |\mathcal{V} \setminus In_1^\I|+|\mathcal{V} \setminus In_2^\I|$ i.e. $\cost{\wkb}{\I_M} < \cost{\wkb}{\I}$ (since $\I \models NoGoal(w)$), which contradicts our assumption. 
Hence $In_1^\I\cap\mathcal{V}$ or $In_2^\I\cap\mathcal{V}$ is an independent set of $\mathcal{G}$ of maximum size that does not contain~$w$.

We may therefore conclude that \dec{I}{opt}{p} for $\mathcal{EL}_\bot$ is $\deltaptwolog$-hard in data complexity (even with weights bounded independently from $|\A|$ and encoded in unary). 

\medskip

We now consider \dec{I}{opt}{c} and show that $w$ belongs to all the independent sets of maximum size of $\mathcal{G}$ iff $\wkb \sat{c}{opt} Goal(w)$.  \smallskip

\noindent($\Rightarrow$) 
Suppose that $w$ belongs to all independent sets of $\mathcal{G}$ of maximum size. 
Let $\I$ be an interpretation with optimal cost \wrt $\wkb$. 
Since $\I$ has finite cost, then for $i\in\{1,2\}$, $In_i^\I\cap\mathcal{V}$ is an independent set of $\mathcal{G}$. 
Assume for a contradiction that neither $In_1^\I\cap\mathcal{V}$ nor $In_2^\I\cap\mathcal{V}$ is of maximal size and let $M \subseteq \mathcal{V}$ be an independent set of $\mathcal{G}$ of maximum size. Then $w\in M$ and $\cost{\wkb}{\I_M} = 2|\mathcal{V} \setminus M| + 1 < \cost{\wkb}{\I}$, contradicting the optimal cost of $\I$. Therefore there is $i\in\{1,2\}$ such that $In_i^\I\cap\mathcal{V}$ is an independent set of maximum size, which implies that $w \in In_i^\I$. Finally, as $\I \models In_i \sqcap Distinguish \sqsubseteq Goal$ and $\I \models  Distinguish(w)$,  we have that $\I \models Goal(w)$. We have thus shown that  $\wkb \sat{c}{opt} Goal(w)$. \smallskip

\noindent($\Leftarrow$) Let $\wkb \sat{c}{opt} Goal(w)$, and 
suppose for a contradiction that there exists an independent set $M \subseteq \mathcal{V}$ of $\mathcal{G}$ of maximal size such that $w \not\in M$. We show that $\I_M$ has optimal cost. Indeed, suppose $\optc{\wkb} < \cost{\K}{\I_M}$ and let $\I$ be an interpretation of optimal cost. As $\wkb \sat{c}{opt} Goal(w)$ we have that $w \in Goal^\I$. Thus there must exist $i\in\{1,2\}$ such that $w \in In_i^\I$ (otherwise the interpretation $\I'$ obtained from $\I$ by removing $w$ from $Goal^\I$ and adding $w$ to $NoGoal^{\I}$ would have a smaller cost than $\I$). Therefore $\cost{\wkb}{\I} = |\mathcal{V} \setminus In_1^\I|+|\mathcal{V} \setminus In_2^\I| +1 < \cost{\wkb}{\I_M} = 2|\mathcal{V} \setminus M| $ since $\delta_w^M = 0$ as $w\notin M$.
Thus $\min(|\mathcal{V} \setminus In_1^\I|,|\mathcal{V} \setminus In_2^\I| )< |\mathcal{V} \setminus M|$ and we deduce $|In_i^\I\cap\mathcal{V}| > |M|$ for some $i\in\{1,2\}$ which is a contradiction since $In_i^\I\cap\mathcal{V}$ is an independent set. 
Hence $\I_M$ is an interpretation of optimal cost w.r.t $\wkb$ and $w\in Goal^{\I_M}$. 
It follows that $w \in In_i^{\I_M}$ for $i\in\{1,2\}$ (with the same argument as before), \ie $w\in M$. 
Thus $w$ belongs to all independent sets of $\mathcal{G}$ of maximum size. 

We have thus shown that \dec{I}{opt}{c} for $\mathcal{EL}_\bot$ is $\deltaptwolog$-hard in data complexity (even with weights are bounded independently from $|\A|$ and encoded in unary).
\end{proof}

\ThCQAoptDataLowerBound*
\begin{proof}
The proof is by reduction from the following $\Delta^P_2$-hard problem: given a satisfiable boolean formula $\varphi$ over variables $x_1,\dots, x_n$ such that $\varphi$ is a set of clauses $c_1,\dots, c_m$ where each clause has exactly two positive and two negative literals  and any of the four positions in a clause can be filled instead by one of the truth constants true and false, and given  $k\in\{1,\dots, n\}$, decide whether the lexicographically maximum truth assignment $\nu_{max}$ satisfying $\varphi$ with respect to $(x_1,\dots, x_n)$ fulfills $\nu_{max}(x_k)=1$. It follows from \cite{Krentel88} and from the reductions from SAT to 3SAT and from 3SAT to 2+2SAT \cite{Schaerf93} that this problem is $\Delta^P_2$-hard.

Let $\varphi$ be an instance of the problem as previously defined. We define $\WKB$ an \ELbot WKB as follows:

\begin{align*}
\A=&\{S(\varphi,c_j)\mid 1\leq j\leq m\}\cup\\
&\{P_\ell(c_k,x_j) \mid \ell\in\{1, 2\},x_j\text{ is the }\ell^{th}\text{ pos. lit. of }c_k \}\cup\\
    &\{N_\ell(c_k,x_j) \mid \ell\in\{1, 2\},\neg x_j\text{ is the }\ell^{th}\text{ neg. lit. of }c_k \}\cup\\
    & \{F(x_i), T(x_i), T'(x_i)\mid 1\leq i\leq n\}\cup\{F(f), T(t)\}
    \\
\T=&\{F\sqcap T\sqsubseteq \bot, F\sqcap T'\sqsubseteq \bot\}\cup\\
&\{  \exists S.(  \exists P_1.F\sqcap\exists P_2.F\sqcap \exists N_1.T\sqcap \exists N_2.T) \sqsubseteq \bot \}
\end{align*}

We set $\omega(\tau)=\infty$ for every $\tau\in\T$, and $\omega(\alpha)=\infty$ for every  assertion $\alpha$ built on $P_\ell$ and $N_\ell$ as well as for $T(t)$ and $F(f)$. For the remaining assertions, we define the weights through the following prioritization (see Lemma~6.2.5 in \cite{DBLP:phd/hal/Bourgaux16}):
\begin{itemize}
\item $L_1=\{T(x_i), F(x_i)\mid 1\leq i\leq n\}$
\item $L_2=\{S(\varphi,c_j)\mid 1\leq j\leq m\}$
\item $L_3=\{T'(x_1)\}$, $L_4=\{T'(x_2)\}$, $\dots$, $L_{n+2}=\{T'(x_n)\}$
\end{itemize} 
Let $u=max(2n,m)+1$ and define $\omega(\alpha)=u^{n+2-i}$ for every $\alpha\in L_i$. 

We show that $\wkb \sat{c}{opt} T'(x_k)$ iff $\wkb \sat{p}{opt} T'(x_k)$ iff $\nu_{max}(x_k)=1$.

Let $\I$ be an interpretation with optimal cost. By definition of $\omega$, $\I\models\T$ and $\I$ satisfies all $P_\ell$/$N_\ell$ assertions, $T(t)$ and $F(f)$. 
First note that for every $1\leq i\leq n$, $\I$ satisfies exactly one of the assertions $T(x_i), F(x_i)$. Indeed, it cannot contain both since $\I\models F\sqcap T\sqsubseteq \bot$ and if there was $i$ such that $\I\not\models T(x_i)$ and $\I\not\models F(x_i)$, then $\J$ obtained from $\I$ by setting $T^\J=T^\I\cup\{x_i^\I\}$, $S^\J=\emptyset$ and $X^\J=X^\I$ for every other concept or role name $X$ would be such that $\cost{\wkb}{\J}<\cost{\wkb}{\I}$ since $m\times u^{n} < u^{n+1}$ because $m<u$. 
Moreover, it is easy to see that $\I\models T'(x_i)$ exactly when $\I\not\models F(x_i)$, i.e.\ when $\I\models T(x_i)$. 

Let $\nu$ be the valuation of $x_1,\dots, x_n$ such that $\nu(x_i)=1$ iff $\I\models T(x_i)$ (iff $\I\models T'(x_i)$ iff $\I\not\models F(x_i)$). 
Assume for a contradiction that $\nu$ does not satisfy $\varphi$. There exists a clause $c_j$ of $\varphi$ such that for every positive literal $x_i$ of $c_j$, $\nu(x_i)=0$, i.e.\ $\I\models F(x_i)$, and for every negative literal $\neg x_i$ of $c_j$, $\nu(x_i)=1$, i.e.\ $\I\models T(x_i)$. It follows that $\I\not\models S(\varphi, c_j)$ (otherwise $\varphi^\I$ would be in $( \exists S.(  \exists P_1.F\sqcap\exists P_2.F\sqcap \exists N_1.T\sqcap \exists N_2.T))^\I$ and the cost of $\I$ would be infinite. 
Since $\varphi$ is satisfiable, there exists $\nu'$ that satisfies $\varphi$. Let $\J$ be the interpretation such that $P_\ell^\J=P_\ell^\I$ and $N_\ell^\J=N_\ell^\I$ for $1\leq \ell\leq 2$, $T^\J=\{t^\I\}\cup\{x_i^\I\mid \nu'(x_i)=1\}$, $F^\J=\{f^\I\}\cup\{x_i^\I\mid \nu'(x_i)=0\}$, and  $S^\J=\{(\varphi,c_j)\mid 1\leq j\leq m\}$. Then $\cost{\wkb}{\J}<\cost{\wkb}{\I}$. 
Indeed, $\I$ and $\J$ both violate the same number of $T$/$F$-assertions and the cost $u^n$ of violating the single assertion $S(\varphi, c_j)$ is greater than the cost of violating all $T'$-assertions $\Sigma_{i=3}^{n+2} u^{n+2-i}=\Sigma_{i=0}^{n-1} u^{i}=\frac{u^n-1}{u-1}$. 
Hence, $\nu$ satisfies $\varphi$.

Finally, we show that $\nu=\nu_{max}$. Define $\J$ by $P_\ell^\J=P_\ell^\I$ and $N_\ell^\J=N_\ell^\I$ for $1\leq \ell\leq 2$, $T^\J={T'}^\J=\{t^\I\}\cup\{x_i^\I\mid \nu_{max}(x_i)=1\}$, $F^\J=\{f^\I\}\cup\{x_i^\I\mid \nu_{max}(x_i)=0\}$, $S^\J=\{(\varphi,c_j)\mid 1\leq j\leq m\}$. If there exists $l$ such that $\nu$ and $\nu_{max}$ coincide on $x_i$ for $i<l$ but not on $x_l$, i.e.\ $\nu(x_l)=0$ and $\nu_{max}(x_l)=1$ (since $\nu_{max}$ is the lexicographically maximum truth assignment satisfying $\varphi$), then $\cost{\wkb}{\J}<\cost{\wkb}{\I}$. Indeed, the cost $u^{n-l}$ of violating $T'(x_l)$ is greater than the cost $\Sigma_{i=l+3}^{n+2} u^{n+2-i}=\Sigma_{i=0}^{n-l-1} u^{i}=\frac{u^{n-l}-1}{u-1}$ of violating all $T'(x_i)$ for $i>l$. 
Hence $\nu$ and $\nu_{max}$ coincide on every $x_i$, $1\leq i\leq n$.

We have shown that for every interpretation with optimal cost $\I$, $\I\models T'(x_k)$ iff $\nu_{max}(x_k)=1$. Hence $\wkb \sat{c}{opt} T'(x_k)$ iff $\wkb \sat{p}{opt} T'(x_k)$ iff $\nu_{max}(x_k)=1$.
\end{proof}

\end{document}